\theoremstyle{break}
\def\url@leostyle{  \@ifundefined{selectfont}{\def\UrlFont{\sf}}{\def\UrlFont{\small\ttfamily}}}
                                                                                                                                                                                                                                                                                                            \def\and{{\rm and}}            \def\det#1{{\rm det}\,{#1}}            
\def\skew{{\rm skew}}                                  
\newfont{\Sf}{cmssbx10 scaled 2074}
\newbox{\assem}
\newbox{\asse}
                                                        \def\sqtwo3{{\textstyle {\sqrt{2 \over 3}}}}   \newcommand{\IP}{{\rm I\kern-.18em P}}           \newcommand{\II}{{\rm I\kern-.18em I}}           \newcommand{\IF}{{\rm I\kern-.25em F}}           \newcommand{\IE}{{\rm I\kern-.25em E}}           \def\IR{{\rm I\kern-.15em R}}
\newcommand{\ia}{{\rm\kern.24em                     \vrule width.02em height0.9ex depth-.05ex
   \kern-.26em a}}
\newcommand{\ic}{{\rm\kern.24em                     \vrule width.02em height0.9ex depth-.05ex
   \kern-.26em c}}
\newcommand{\IA}{{\rm\kern.22em                      \vrule width.02em
        height0.5ex depth 0ex
    \kern-.24em A}}
\newcommand{\IC}{{\rm\kern.24em                     \vrule width.02em height1.4ex depth-.05ex
   \kern-.26em C}}
\DeclareMathOperator{\dist}{dist}
\newcommand{\Dphi}{\nabla\! \mathbf{\varphi}}
\renewcommand{\epsilon}{\varepsilon}
\newcommand{\dV}{\,{\rm dV} }
\newcommand{\norm}[1]{\|#1\|}
\newcommand{\abs}[1]{\left| #1 \right|}
\newtheorem{lem}{Lemma}[section]
\newtheorem{rem}[lem]{Remark}
\newtheorem{defi}[lem]{Definition}
\newtheorem{theo}[lem]{Theorem}
\newtheorem{cor}[lem]{Corollary}
\newtheorem{prob}[lem]{Problem}
\newtheorem{compres}[lem]{Computer Assisted Result}
\newtheorem{compval}[lem]{Computational Validation}
\newcommand{\leref}[1]{Lemma \ref{#1}}
\newcommand{\theref}[1]{Theorem \ref{#1}}
\newcommand{\coref}[1]{Corollary \ref{#1}}
\newcommand{\remref}[1]{Remark \ref{#1}}
\newcommand{\deref}[1]{Definition \ref{#1}}
\newcommand{\probref}[1]{Problem \ref{#1}}
\newcommand{\R}{\mathbb{R}}
\newcommand{\C}{\mathbb{C}}
\renewcommand{\S}{\mathbb{S}}
\DeclareMathOperator{\SL}{SL}
\DeclareMathOperator{\GL}{GL}
\DeclareMathOperator{\SO}{SO}
\DeclareMathOperator{\skewop}{skew}
\renewcommand{\skew}{\skewop}
\DeclareMathOperator{\diag}{diag}
\DeclareMathOperator{\sym}{sym}
\DeclareMathOperator{\Tr}{tr}
\DeclareMathOperator{\so}{\mathfrak{so}}
\DeclareMathOperator{\polar}{R_{\rm p}}
\newcommand{\Sym}{ {\rm{Sym}} }
\newcommand{\Psym}{ {\rm{PSym}} }
\newcommand{\id}{{\boldsymbol{\mathbbm{1}}}}
\DeclareMathOperator{\Det}{det}
\renewcommand{\det}[1]{ {\Det[{#1}]} }
\newcommand{\tr}[1]{ {\Tr \left[{#1}\right]} }
\newcommand{\secref}[1]{Section \ref{#1}}
\newcommand{\figref}[1]{Figure \ref{#1}}
\definecolor{orange}{rgb}{1.0,0.5,0}
\DeclareMathOperator{\Reals}{\mathbb{R}}
\renewcommand{\R}{\Reals}
\DeclareMathOperator{\Quaternions}{\mathbb{H}}
\renewcommand{\H}{\Quaternions}
\DeclareMathOperator{\argminmathop}{arg\,min}
\newcommand{\argmin}[2]{\mathchoice{\underset{#1}{\argminmathop}\, {#2}}{\argminmathop_{#1}\, {#2}}{}{}}
\newcommand{\mathematica}{\texttt{Mathematica}}
\newcommand{\restrict}[2]{\left.#1\right\rvert_{#2}\,}
\newcommand{\scalprod}[2]{\big<#1,\,#2\big>}
\newcommand{\setdef}[2]{\lbrace #1 \;\vert\; #2\rbrace}
\DeclareMathOperator{\spanop}{span}
\newcommand{\vspan}[1]{\spanop\left(\left\{ #1 \right\}\right)}
\newcommand{\hsnorm}[1]{\left\lVert #1 \right\rVert}
\DeclareMathOperator{\RPosZ}{\sideset{}{_0^+}\Reals}
\DeclareMathOperator{\RPos}{\Reals^+}
\newcommand{\eqdef}{\coloneqq}
\newcommand{\defeq}{\eqqcolon}
\newcommand{\eqiso}{\cong}
\newcommand{\isequivto}{\Longleftrightarrow}
\newcommand{\mrot}{R}
\newcommand{\mstretch}{\overline{U}}
\DeclareMathOperator{\rpolar}{rpolar}
\newcommand{\ump}{u^{\rm mmp}}
\newcommand{\smp}{s^{\rm mmp}}
\DeclareMathOperator{\sradmm}{\rho_{\mu,\,\mu_c}}
\DeclareMathOperator{\wmm}{W_{\mu,\mu_c}}
\DeclareMathOperator{\wsym}{W_{1,0}}
\DeclareMathOperator{\wmmred}{W^{\rm red}_{\mu,\mu_c}}
\DeclareMathOperator{\wsymred}{W^{\rm red}_{1,0}}
\newcommand{\domc}{\mathrm{Dom}^\mathrm{C}}
\newcommand{\domn}{\mathrm{Dom}^\mathrm{NC}}
\newcommand{\countres}{
  \setcounter{equation}{0}
  \setcounter{figure}{0}
  \setcounter{table}{0}
}
\renewcommand{\baselinestretch}{1.0}          \sloppy
\renewcommand{\itemize}{  \ifnum \@itemdepth >\thr@@\@toodeep\else
    \advance\@itemdepth\@ne
    \edef\@itemitem{labelitem\romannumeral\the\@itemdepth}    \expandafter
    \list
      \csname\@itemitem\endcsname
      {\def\makelabel##1{\hss\llap{##1}}        \topsep=.8ex\itemsep=-.2ex}  \fi}
\renewcommand\section{\@startsection {section}{1}{\z@}  {-3.5ex \@plus -1ex \@minus -.2ex}  {2.3ex \@plus.2ex}  {\boldmath\normalfont\Large\bfseries}}
\renewcommand\subsection{\@startsection{subsection}{2}{\z@}  {-3.25ex\@plus -1ex \@minus -.2ex}  {1.5ex \@plus .2ex}  {\boldmath\normalfont\large\bfseries}}
\renewcommand\subsubsection{\@startsection{subsubsection}{3}{\z@}  {-3.25ex\@plus -1ex \@minus -.2ex}  {1.5ex \@plus .2ex}  {\boldmath\normalfont\normalsize\bfseries}}
\renewcommand\paragraph{\@startsection{paragraph}{4}{\z@}  {3.25ex \@plus1ex \@minus.2ex}  {-1em}  {\boldmath\normalfont\normalsize\bfseries}}
\renewcommand\subparagraph{\@startsection{subparagraph}{5}{\parindent}  {3.25ex \@plus1ex \@minus .2ex}  {-1em}  {\boldmath\normalfont\normalsize\bfseries}}
\title{\vspace{-2.5cm}The geometrically nonlinear Cosserat micropolar shear-stretch
       energy. \mbox{Part II:}
       Non-classical energy-minimizing microrotations in 3D and their
       computational validation}
\author{Andreas Fischle
\!\!\thanks{Corresponding author: Andreas Fischle,
Institut f\"ur Numerische Mathematik,
TU Dresden,
Zellescher Weg 12-14,
01069 Dresden,
Germany,
email: andreas.fischle@tu-dresden.de}
\quad and \quad
Patrizio Neff
\!\!\thanks{Patrizio Neff,
Head of Lehrstuhl f\"{u}r Nichtlineare Analysis und Modellierung,
Fakult\"{a}t f\"{u}r Mathematik,
Universit\"{a}t Duisburg-Essen,
Thea-Leymann Str. 9,
45127 Essen,
Germany,
email: patrizio.neff@uni-due.de}}
\begin{document}
\selectfont
\maketitle

\pagenumbering{arabic}

\vspace{-0.75cm}
\begin{center}
  \textit{Dedicated to \textsc{Gianfranco Capriz} on the occasion of his 90th
    birthday:\\
    -- we bow in great admiration to his lifetime achievement! --}
\end{center}

\begin{center}\textbf{Abstract}\end{center}
\begin{center}
  \begin{minipage}{0.95\textwidth}
    In any geometrically nonlinear, isotropic and quadratic Cosserat
    micropolar extended continuum model formulated in the deformation
    gradient field $F \eqdef \nabla\varphi: \Omega \to \GL^+(n)$
    and the microrotation field $\mrot: \Omega \to \SO(n)$,
    the shear--stretch energy is necessarily of the form
    \begin{align*}
      \wmm(\mrot\,;F)& \eqdef
      \mu  \hsnorm{\sym(\mrot^TF - \id)}^2
      + \mu_c\hsnorm{\skew(\mrot^TF - \id)}^2\;.
    \end{align*}
    We aim at the derivation of closed form expressions for the
    minimizers of $\wmm(\mrot\,;F)$ in $\SO(3)$, i.e., for
    the set of optimal Cosserat microrotations in dimension $n\!=\!3$,
    as a function of $F \in \GL^+(n)$. In a previous contribution (Part I),
    we have first shown that, for all $n \geq 2$, the full range of weights
    $\mu > 0$ and $\mu_c \geq 0$ can be reduced to either a classical or a
    non-classical limit case. We have then derived the associated closed
    form expressions for the optimal planar rotations in $\SO(2)$ and
    proved their global optimality. In the present contribution (Part II),
    we characterize the non-classical optimal rotations in dimension
    \mbox{$n\!=\!3$}. After a lift of the minimization problem to the unit
    quaternions, the Euler--Lagrange equations can be symbolically solved
    by the computer algebra system \mathematica. Among the symbolic expressions
    for the critical points, we single out two candidates
    $\rpolar^\pm_{\mu,\mu_c}(F) \in \SO(3)$ which we analyze and for which
    we can computationally validate their global optimality by Monte
    Carlo statistical sampling of $\SO(3)$. Geometrically, our proposed
    optimal Cosserat rotations $\rpolar^\pm_{\mu,\mu_c}(F)$ act in the
    ``plane of maximal strain'' and our previously obtained explicit
    formulae for planar optimal Cosserat rotations in $\SO(2)$ reveal
    themselves as a simple special case. Further, we derive the
    associated reduced energy levels of the Cosserat shear--stretch
    energy and criteria for the existence of non-classical optimal
    rotations.
\end{minipage}
\end{center}

\vspace*{0.125cm}
{\small
{\bf{Key words:}}
Cosserat,
Grioli's theorem,
micropolar,
polar media,
rotations,
quaternions,
Lagrange multiplier,
equality constraints,
non-symmetric stretch,
Cosserat couple modulus,
polar decomposition.

{\bf{AMS 2010 subject classification:}}
  15A24,     22E30,     74A30,     74A35,     74B20,     74G05,     74G65,     74N15.   }
 \countres

\setcounter{tocdepth}{1}
\renewcommand{\baselinestretch}{-1.0}\normalsize
{
  \small
  \tableofcontents
}
\renewcommand{\baselinestretch}{1.0}\normalsize

\section{Introduction}\label{sec:intro}
In this second part (Part II) of a series, we consider the weighted optimality
problem for the Cosserat shear--stretch energy $\wmm:\; \SO(n) \,\times\, \GL^+(n) \to \RPosZ$,\\
\begin{equation}
  \wmm(\mrot\,;F) \;\eqdef\; \mu\, \hsnorm{\sym(\mrot^TF - \id)}^2
  \,+\,
  \mu_c\,\hsnorm{\skew(\mrot^TF - \id)}^2\;.
  \label{eq:intro:wmm}
\end{equation}
The arguments are the deformation gradient field
$F \eqdef \nabla\varphi: \Omega \to \GL^+(n)$ and the microrotation field
$\mrot: \Omega \to \SO(n)$ evaluated at a given point of the domain
$\Omega$. This energy arises in any geometrically nonlinear, isotropic
and quadratic Cosserat micropolar continuum model. Note that it is always
possible to express the local energy contribution in a Cosserat model
as $W = W(\mstretch)$, where $\mstretch \eqdef R^TF$ is the first Cosserat
deformation tensor. This reduction follows from objectivity requirements
and has already been observed by the Cosserat
brothers~\cite[p.~123, eq.~(43)]{Cosserat09},
see also~\cite{Eringen99} and~\cite{Maugin:1998:STPE}.
Since $\mstretch$ is in general non-symmetric, the most general isotropic
and quadratic local energy contribution which is zero at the reference
state is given by
\begin{equation}
  \label{intro:wgeneral}
  \underbrace{\mu\, \hsnorm{\sym(\mstretch - \id)}^2
    \,+\,
    \mu_c\,\hsnorm{\skew(\mstretch - \id)}^2}_{\text{``shear--stretch energy''}}
  \quad+\quad
  \underbrace{\frac{\lambda}{2}\,\tr{\mstretch - \id)}^2}_{\text{``volumetric energy''}}\;.
\end{equation}
The last term will be discarded in the following, since it couples
the rotational and volumetric response, a feature not present in the
well-known isotropic linear Cosserat models.\footnote{The Cosserat brothers never
  proposed any specific expression for the local energy $W = W(\mstretch)$.
  The chosen quadratic ansatz for $W = W(\mstretch)$ is motivated by a direct
  extension of the quadratic energy in the linear theory of Cosserat
  models, see, e.g.~\cite{Jeong:2009:NLIC,Neff_Jeong_bounded_stiffness09,Neff_Jeong_Conformal_ZAMM08}. We consider a true volumetric-isochoric split
  in~\secref{sec:discussion:application}.}

Let us now proceed to the primary objective of our present contribution
\begin{prob}[Weighted optimality in dimension $n = 3$]
  Let $\mu > 0$ and $\mu_c \geq 0$. Compute the set of optimal rotations
  \label{intro:prob_wmm}
  \begin{equation}
    \argmin{\mrot\,\in\,\SO(3)}{\wmm(\mrot\,;F)}
    \;\eqdef\;
    \argmin{\mrot\,\in\,\SO(3)}{      \left\{
      \mu\, \hsnorm{\sym(\mrot^TF - \id)}^2
      \,+\,
      \mu_c\,\hsnorm{\skew(\mrot^TF - \id)}^2\right\}
    }
    \label{eq:intro:weighted_wmm}
  \end{equation}
  for given parameter $F \in \GL^+(3)$ with distinct singular values
  $\sigma_1 > \sigma_2 > \sigma_3 > 0$.
\end{prob}
We use the notation $\sym(X) \eqdef \frac{1}{2}(X + X^T)$,
$\skew(X) \eqdef \frac{1}{2}(X - X^T)$, $\scalprod{X}{Y} \eqdef \tr{X^TY}$
and we denote the induced Frobenius matrix norm by
$\hsnorm{X}^2 \eqdef \scalprod{X}{X} = \sum_{1 \leq i,j \leq n} X_{ij}^2$.
In mechanics applications, the weights $\mu > 0$ and $\mu_c \geq 0$ can
be identified with the Lam\'e shear modulus $\mu > 0$ from linear elasticity
and the so-called Cosserat couple modulus $\mu_c \geq 0$. The
parameter $\lambda$ in the most general form of the energy~\eqref{intro:wgeneral}
can further be identified with the second Lam\'e parameter. Note that
the interpretation of the Cosserat couple modulus $\mu_c$ is somewhat
delicate, see, e.g.,~\cite{Neff_ZAMM05}, which is one of the fundamental
motivations for this second contribution in a series.

In Part I of this paper~\cite{Fischle:2015:OC2D}, we have proved a
still surprising reduction lemma~\cite[Lem.~2.2, p.~4]{Fischle:2015:OC2D}
for the material parameters (weights) $\mu$ and $\mu_c$ which is valid
for \emph{all} space dimensions $n \geq 2$. This lemma singles out a
\emph{classical parameter range} $\mu_c \geq \mu > 0$ and a
\emph{non-classical parameter range} $\mu_c \geq \mu > 0$ for $\mu$
and $\mu_c$ and reduces both ranges to an associated limit case. The
\emph{classical limit case} is given by $(\mu,\mu_c) = (1,1)$ and the
\emph{non-classical limit case} is given by $(\mu,\mu_c) = (1,0)$. We
then apply the parameter reduction~\cite[Lem.~2.2,
  p.~4]{Fischle:2015:OC2D} to~\probref{intro:prob_wmm} and solve it in
dimension $n = 2$. This allows us to discuss the optimal planar
Cosserat rotations and we observe that the classical and the
non-classical parameter ranges for $\mu$ and $\mu_c$ characterize two
substantially different types of optimal Cosserat rotations.

To explain this difference, we first need to introduce the polar
factor $\polar(F) \in \SO(n)$ which is obtained from the right polar
decomposition $F = \polar(F)\,U(F)$ of the deformation gradient $F \in
\GL^+(n)$. Here, $U(F) \eqdef \sqrt{F^TF} \in \Psym(n)$ denotes the
positive definite symmetric right Biot-stretch tensor. We recall
that the eigenvalues of $U \in \Psym(n)$ are by definition the
singular values $\sigma_1 > \sigma_2 > \sigma_3 > 0$ of the
deformation gradient $F \in \GL^+(n)$.

In the classical parameter range $\mu_c \geq \mu > 0$, the polar
factor $\polar$ admits a variational characterization
which is noteworthy in its own right: namely, for all $n \geq 2$, it
is the \emph{unique} minimizer for~\eqref{eq:intro:wmm} as a generalized
version of Grioli's theorem shows,
see~\cite{Grioli40,Guidugli:1980:EPP,Neff_Grioli14},
or~\cite[Cor.~2.4, p.~5]{Fischle:2015:OC2D}. This variational
characterization of the polar factor inspired us to introduce the
following
\begin{defi}[Relaxed polar factor(s)]
  Let $\mu > 0$ and $\mu_c \geq 0$. We denote the set-valued mapping that assigns
  to a given parameter $F \in \GL^+(n)$ its associated set of energy-minimizing
  rotations by
  $$\rpolar_{\mu,\mu_c}(F) \quad \eqdef\quad \argmin{\mrot\,\in\,\SO(n)}{\wmm(\mrot\,;F)}\;.$$
\end{defi}
In dimensions $k = 2,3$, we denote the associated optimal Cosserat rotation
angles by $\alpha_{\mu,\mu_c}(F) \subset (-\pi,\pi]$. More generally, in what
follows, we shall denote the rotation angle of the (absolute) rotation
field $R \in \SO(k)$ by $\alpha \in (-\pi,\pi]$ and the rotation axis by
$r \in \S^{k - 1}$. By $\S^{n -1} \subset \R^{n}$, we denote the unit
$n - 1$-sphere. In dimension $k = 3$, we use the well-known axis-angle
parametrization of rotations which we write as $[\alpha,\, r^T]$.

Since the classical parameter domain $\mu_c \geq \mu > 0$ is very well
understood by now, we can focus on the non-classical parameter range
$\mu > \mu_c \geq 0$ in our efforts to solve~\probref{intro:prob_wmm}.
Here, the parameter reduction lemma~\cite[Lem.~2.2, p.~4]{Fischle:2015:OC2D}
allows us to restrict our attention to the non-classical
limit case $(\mu,\mu_c) = (1,0)$, because it shows the equivalence
\begin{equation}
  \argmin{\mrot\,\in\,\SO(n)}{W_{\mu,\mu_c}(\mrot\,;F)}
  \quad=\quad
  \argmin{\mrot\,\in\,\SO(n)}{W_{1,0}(\mrot\,; \widetilde{F}_{\mu,\mu_c})}
\end{equation}
for all $n \geq 2$. On the right hand side appears the
\emph{rescaled deformation gradient}
$\widetilde{F}_{\mu,\mu_c} \eqdef \lambda^{-1}_{\mu,\mu_c} \cdot F \in \GL^+(n)$
which is obtained from $F \in \GL^+(n)$ by multiplication with the inverse of
the \emph{induced scaling parameter}
$\lambda_{\mu,\mu_c} \eqdef \frac{\mu}{\mu - \mu_c} > 0$. We note that we use
the previous notation throughout the text and further introduce
the \emph{singular radius} $\rho_{\mu,\mu_c} \eqdef \frac{2\mu}{\mu - \mu_c}$.

It follows that the set of optimal Cosserat rotations can be described by
\begin{equation}
  \rpolar_{\mu,\mu_c}(F) \;=\; \rpolar_{1,0}(\widetilde{F}_{\mu,\mu_c})
\end{equation}
for the entire non-classical parameter range $\mu > \mu_c \geq 0$.
This simplifies our main objective~\probref{intro:prob_wmm} considerably,
since it suffices now to solve
\begin{prob}[Weighted optimality in the non-classical limit
    case $(\mu,\mu_c) = (1,0)$]
  Let $\mu > 0$ and $\mu_c \geq 0$. Compute the set of optimal rotations
  \label{intro:prob_wsym}
  \begin{equation}
    \argmin{\mrot\,\in\,\SO(3)}{\wsym(\mrot\,;F)}
    \;\eqdef\;
    \argmin{\mrot\,\in\,\SO(3)}{\hsnorm{\sym(\mrot^TF - \id)}^2}
    \label{eq:intro:weighted_wsym}
  \end{equation}
  for given parameter $F \in \GL^+(3)$ with distinct singular values
  $\sigma_1 > \sigma_2 > \sigma_3 > 0$.
\end{prob}

Regarding our~\probref{intro:prob_wsym} at hand, we will see
in~\secref{sec:minimization} that there are in general two
energy-minimizing solutions with a certain symmetry. They both have the
same rotation axis but differ by the sign of their respective rotation
angles which allows us to select the corresponding branches by that sign.
Accordingly, we introduce the notations $\rpolar^{\pm}_{\mu,\mu_c}(F)$
and $\alpha^\pm_{\mu,\mu_c}(F)$. Loosely spoken, we will see that
the optimal Cosserat rotations
coincide with the polar factor $\polar$ in a certain compressive regime
for $F \in \GL^+(3)$, but deviate in a certain expansive regime.
We shall precisely characterize this in terms of the singular radius
$\sradmm$. Such a material behavior is commonly referred to as
a \emph{tension-compression asymmetry} which is an interesting natural
phenomenon studied in the material sciences, see,
e.g.,~\cite{Gall:1999:TCA,Gall:1999:RTTCA} and~\cite{Sehitoglu:2000:CRN}
for experimental studies of nickel titanium (NiTi) shape memory single
crystals for a glimpse on this broad subject.\footnote{We do not
  claim that such materials can actually be realistically modelled
  as a Cosserat continuum, although it is not impossible.}

\probref{intro:prob_wsym} is a minimization problem on the matrix
Lie group $\SO(3)$ of special orthogonal matrices parameterized by
the deformation gradient $F \in \GL^+(3)$ in the identity component
of the general linear group. Although it is not our duty, we want
to point to some valuable introductory references to this subject.
An excellent general reference for minimization problems on manifolds
is the text by Absil, Mahony and Sepulchre~\cite{Absil:2009:OAMM}.
There, also numerical solution approaches are presented. For an
introduction to Lie groups and matrix groups, we refer to,
e.g.,~\cite{Baker:2012:MG,JMLee02} and~\cite{KHNeeb91}.
For compact Lie groups and their representation theory, see,
e.g.,~\cite{Hofmann:2006:SCG} and~\cite{Broecker:2003:RCLG}.
There is also a growing body of closely related work treating minimization
problems on matrix groups and Grioli's theorem in a similar
context~\cite{Neff:2014:LMP,Neff_Grioli14,Lankeit:2014:MML,Neff:2009:SSNC}.

Instead of turning towards the solution of~\probref{intro:prob_wsym}
right away, we take a step back and notice that there is still another
opportunity for simplification which reduces the space of parameters
$F \in \GL^+(3)$ to the space of ordered singular values
$\sigma_1 \geq \sigma_2 \geq \sigma_3 > 0$ of $F$.
This can be achieved by a principal axis transformation which introduces
a relative rotation $\widehat{R}$ and allows us to introduce
\begin{defi}[Cosserat shear--stretch energy for the relative rotation $\widehat{R}$]
  \label{defi:wrel}
  Let $\mu > 0$, $\mu_c \geq 0$ and let $D \eqdef \diag (\sigma_1, \sigma_2, \sigma_3)$
  with $\sigma_1 > \sigma_2 > \sigma_3 > 0$ the singular values of $F \in \GL^+(3)$.
  The \emph{energy of the relative rotation} $\widehat{\mrot} \in \SO(3)$ is given
  by
  \begin{equation}
    \widehat{W}_{\mu,\mu_c}(\widehat{\mrot}\,;D)
    \;\eqdef\; W_{\mu,\mu_c}(\widehat{\mrot}^T\,;D)
    \;\eqdef\; \mu\;\hsnorm{\sym(\widehat{\mrot}D - \id)}^2 + \mu_c\;\hsnorm{\skew(\widehat{\mrot}D - \id)}^2
    \;.
  \end{equation}
\end{defi}
This transformation is described in~\secref{sec:minimization} and leads us
to the reduced
\begin{prob}[Optimality of relative rotations in dimension $n = 3$]
Let $\mu = 1$ and $\mu_c = 0$. Compute the set of optimal relative rotations
\label{prob:relative_rhat}
\begin{equation}
  \argmin{\widehat{\mrot}\,\in\,\SO(3)}{\widehat{W}_{1,0}(\widehat{\mrot}\,;D)}
  \quad = \quad
  \argmin{\widehat{\mrot}\,\in\,\SO(3)}{\hsnorm{\sym(\widehat{\mrot}\diag(\sigma_1,\sigma_2,\sigma_3) - \id)}^2}
  \label{prob:min}
\end{equation}
for a given diagonal matrix $D \eqdef \diag(\sigma_1,\sigma_2,\sigma_3)$
with $\sigma_1 > \sigma_2 > \sigma_3 > 0$ the ordered singular values of
the deformation gradient $F \in \GL^+(3)$.
\end{prob}

In this text, we strive to mark quantities related to \emph{relative}
rotations with a ``hat''-symbol, e.g., we write $\widehat{R} \in \SO(3)$.
Further, we note that although, for now, we explicitly exclude the case of
multiple singular values of $F$ from our analysis, there is no major
obstruction. The technical treatment would, however, clutter our exposition
of the basic mechanisms which we want to distill here.

At present, an explicit formal solution for the three-dimensional
problem (let alone the $n$-dimensional problem) seems out of reach for
us. We have, however, successfully computed explicit formulae for the
critical points of the Cosserat shear--stretch energy by the use of
computer algebra from which we have determined optimal solutions.
For this approach to succeed, we first lift the Cosserat shear--stretch
energy expressed in principal axis
coordinates to the sphere of unit quaternions $\S^3 \subset \H$ and
subsequently apply the Lagrange
multiplier technique for minimization with
equality constraints, see, e.g.,~\cite{Hestenes:1975:OTF}. The unit
quaternions form a two-sheeted cover of $\SO(3)$ and allow for a
convenient representation of rotations in three-space. For a preceding
successful application of quaternions to represent the rotational
degress of freedom in Cosserat theory, see,
e.g.,~\cite{Muench07_diss}. A highly interesting recent approach
to Cosserat shell theory which also uses quaternions is based on
geodesic finite elements, see~\cite{Sander:2014:NGNCS}
and~\cite{Grohs:2013:ODEGDF,Sander:2012:GFESG}.

This paper is now structured as follows: in~\secref{sec:minimization},
we introduce the lift of the Cosserat shear--stretch energy from $\SO(3)$
to the sphere of unit quaternions $\S^3 \subset \H \eqiso \R^4$. We then
state the corresponding Euler--Lagrange equations and present the
energy-minimizing solutions. The complete set of critical points computed
by \mathematica ~\cite{Mathematica10} is provided in
Appendix~\ref{sec:appendix}.
In~\secref{sec:discussion}, we present a geometric interpretation of the
optimal Cosserat rotations $\rpolar^\pm_{\mu,\mu_c}(F)$ in terms of the maximal
mean planar stretch $\ump$ for the entire non-classical parameter range
$\mu > \mu_c \geq 0$. This leads us to introduce a classical and a non-classical
domain for the parameter $F \in \GL^+(3)$ for which we also derive some
interesting alternative criteria. This illuminates the bifurcation behavior of
$\rpolar_{\mu,\mu_c}(F)$. Further, we compute the associated reduced energy
levels $W^{\rm red}_{\mu,\mu_c}(F)$ for the Cosserat shear--stretch energy.
Then in~\secref{sec:validation}, we shed light on our methodology for the
analysis of the critical points and the experimental computational validation
of the energy-minimizing Cosserat rotations using statistical (Monte Carlo)
methods. Finally, we summarize our findings in a short conclusion presented
in~\secref{sec:conclusion}.
 \countres
\section{Solvable Euler-Lagrange equations: transformation, lift and Lagrange multipliers}
\label{sec:minimization}

In this section, we use a classical result from the representation
theory of compact Lie groups to cast the reduced minimization
problem stated as~\probref{prob:relative_rhat} into a form which
allows us to symbolically compute explicit expressions for the
critical points using~\mathematica.

It is well-known that the Lie group of unit quaternions
$\S^3 \subset \H \eqiso \R^4$
is closely related to the matrix group of rotations
$\SO(3)$, see, e.g.,~\cite{Marsden:2013:IMS} or~\cite[Chap.~9]{Gallier:2015:NDGL}. More precisely, the unit quaternions $\S^3$ form a double cover
of the matrix group $\SO(3)$. For a general introduction to analysis
on smooth manifolds which includes smooth coverings, see, e.g.,~\cite{JMLee02}.
For a dynamical systems approach to quaternions, see, e.g.,~\cite{Novelia:2015:GSO}
which nicely demonstrates the usefulness of quaternions for mechanics
applications with constraints. A more algebraic approach to quaternions
with historical remarks is given in~\cite{Ebbinghaus:1990:Numbers},
and, finally, for those who enjoy the classics,
see~\cite{Hamilton:1843:ANS} and~\cite{Hamilton:1866:EOQ}.

\subsection{Transformation into principal directions}
In order to reduce the parameter space $\GL^+(3)$, we use the (unique)
polar decomposition\footnote{For an introduction to the polar
  and singular value decomposition, see, e.g.,~\cite{DSerre02} and for
  recent related results on variational characterizations of the polar
factor $\polar(F)$, see~\cite{Neff_Grioli14,Lankeit:2014:MML,Neff:2014:LMP}
and references therein.}
\mbox{$F = \polar U$} and the (non-unique) spectral
decomposition of $U =\sqrt{F^TF} \in \Psym(3)$ given by
$U = QDQ^T$, $Q \in \SO(3)$, and expand
\begin{equation}
  R^TF = R^T\polar U = R^T\polar QDQ^T\;.
\end{equation}
Here, the diagonal matrix $D = \diag(\sigma_1,\sigma_2,\sigma_3)$
contains the eigenvalues of $U$ on its diagonal. These are precisely the singular
values of $F \in \GL^+(3)$. In fact, this is a particular form of the
singular value decomposition (SVD). If $F$ has only simple singular values,
then it is always possible to choose the rotation $Q$ such that an ordering
$\sigma_1 > \sigma_2 > \sigma_3 > 0$ is achieved.

Exploiting that $Q \in \SO(3)$, it is now possible to carry out a transformation
of the Cosserat shear--stretch energy into principal axis coordinates --
essentially due to isotropy of the energy. For the actual computation,
note first that
\begin{align}
  &Q^T(\sym(R^TF) - \id)Q = Q^T\left(\sym(R^T\polar QDQ^T) - \id)\right)Q\notag\\
  &\quad=\sym(Q^TR^T\polar QDQ^TQ - Q^TQ) = \sym(\underbrace{Q^TR^T\polar Q}_{\defeq\;\widehat{R}}D - \id) = \sym(\widehat{R}D - \id)\;.\label{eq:symtransform}
\end{align}
In the process, it is natural to introduce the rotation
\begin{equation}
  \label{eq:Rhat}
  \widehat{R} \eqdef Q^TR^T\polar Q
\end{equation}
which acts \emph{relative} to the polar factor $\polar $ in the coordinate
system induced by the columns of $Q$, i.e., in a positively oriented frame
of principal directions of $U$. This interpretation is also nicely illustrated
by the inverse formula
\begin{equation}
  \label{eq:R}
  R = \left(Q\widehat{R}Q^T\polar ^T\right)^T = \polar Q\widehat{R}^TQ^T
\end{equation}
which allows to recover the original absolute rotation $R$ from the relative
rotation $\widehat{R}$. Our next step is to insert the transformed symmetric
part $\eqref{eq:symtransform}$ into the definition of
\begin{equation}
  W_{1,0}(R\,;F) =
  \hsnorm{\sym(\mrot^TF - \id)}^2 =
  \hsnorm{Q^T\sym(\mrot^TF - \id)Q}^2 =
  \hsnorm{\sym(\widehat{R}D - \id)}^2\label{eq:wsymtransformed}\;,
\end{equation}
where we have used that the conjugation by $Q^T$ preserves the
Frobenius matrix norm.

This is a promising simplification of the Cosserat shear--stretch energy,
because it reduces the dimension of the parameter space from $\dim\GL^+(3) = 9$
to only $3$ parameters. However, we still have to account for the non-uniqueness
of $Q$. To this end, we introduce the following symmetric rotation matrices
\begin{equation*}
  Q_1 \eqdef \id, \quad
  Q_2 \eqdef \diag( 1, -1, -1), \quad
  Q_3 \eqdef \diag(-1,  1, -1), \quad
  Q_4 \eqdef \diag(-1, -1,  1)\;,
\end{equation*}
and collect them in a set
$\mathcal{S} \eqdef \left\{Q_1, Q_2, Q_3, Q_4\right\} \subset \SO(3)$.
This set forms a discrete subgroup of $\SO(3)$ which is isomorphic to the Klein
four-group $K_4 \eqiso \mathbb{Z}^2 \times \mathbb{Z}^2$, as is easily inferred
by a comparison of the multiplication tables.

\begin{rem}[Uniqueness of the factor $Q$]
  Let $\sigma_1 > \sigma_2 > \sigma_3 > 0$ be the ordered eigenvalues
  of $U \eqdef \sqrt{F^TF}$ and let $D \eqdef \diag(\sigma_1,\sigma_2,\sigma_3)$.
  It is well-known that the factor $Q \in \SO(3)$ in the spectral decomposition
  $U = QDQ^T$ is only determined up to the choice of a right handed orientation
  of the uniquely determined orthogonal eigenspaces of $U$. This corresponds
  to the products $QS$, $S \in \mathcal{S}$, which represent all of these
  possibilities.
\end{rem}

It is easy to see that for any possible choice of right handed orientation
encoded by $S \in \mathcal{S}$, we obtain the same energy level
\begin{align}
  \hsnorm{\sym(\widehat{R}D - \id)}^2
  \;=\; \hsnorm{\sym(\widehat{R}SDS^T - \id)}^2
  \;=\; \hsnorm{\sym(S^T\widehat{R}SD - \id)}^2\
\end{align}
which implies
\begin{equation}
  \argmin{\widehat{R}\,\in\,\SO(3)}{W_{1,0}(\widehat{R}^T\,;D)}
  \;=\;
  \setdef{\,S^T \widehat{R} S\,}{R \in \argmin{\widehat{R}\,\in\,\SO(3)}{W_{1,0}(\widehat{R}^T\,;D)} \;\text{and}\; S \in \mathcal{S}}\;.
\end{equation}
Thus, $\mathcal{S}$ is a symmetry group of the set of energy-minimizing
rotations which acts by conjugation. The previous analysis reveals that the
non-uniqueness of $Q \in \SO(3)$ is not an issue for the minimization problem,
since all possible choices $QS$, $S \in \mathcal{S}$, lead to the same energy
level.\footnote{A consistent choice of $Q(F) \in \SO(3)$ for \emph{different}
  values of $F \in \GL^+(3)$ is certainly to be advised for the numerical
  computation of a \emph{field} of minimizers $\rpolar^\pm_{\mu,\mu_c}(F(x))$
  depending on $x \in \Omega$. The inversion formula~\eqref{eq:R}
  explicitly depends on the choice of $Q$ and is sensitive to flips of the
  subspace orientation $Q \mapsto QS$, $S \in \mathcal{S}$.}

Without any loss of generality, we may henceforward focus on the solution of
\begin{equation}
  \argmin{\widehat{R} \in \SO(3)}{\hsnorm{\sym(\widehat{R}D - \id)}^2}
  \quad=\quad
  \argmin{\widehat{R} \in \SO(3)}{W_{1,0}(\widehat{R}^T\,;D)}\;.
\end{equation}
This proves the reduction of~\probref{intro:prob_wmm} to the
minimization problem described in~\probref{prob:relative_rhat}
in~\secref{sec:intro} for the non-classical limit case
$(\mu,\mu_c) = (1,0)$. The same principal axes transformation
can also be carried out for arbitrary values of $\mu$ and $\mu_c$
which gives rise to~\deref{defi:wrel}.

In what follows, we denote the rotation angle of the (absolute)
microrotation field $R \in \SO(3)$ by $\alpha \in (-\pi,\pi]$ and
the axis of the rotation by $r \in \S^2$, where $\S^n \subset \R^{n + 1}$
denotes the unit $n$-sphere. This leads us to the axis-angle representation
of a rotation which we write as $[\alpha,\, r^T]$. In what follows,
we work with different parametrizations of the group of rotations $\SO(3)$
simultaneously. Thus, we introduce the symbol $\equiv$ in order to identify
rotations in $\SO(3)$ which are described with respect to different
parametrizations of $\SO(3)$. For example, we might write for the relative
rotation $\widehat{R} \equiv [\hat{\beta},\, (\hat{r}_1, \hat{r}_2, \hat{r}_3)]$
and for a unit quaternion $q \in \S^3$ describing $R \in \SO(3)$,
we have $q \equiv R \equiv -q$. We see that, in general, this
binary relation is \emph{not unique} since the parametrizations
need not be one-to-one.

The symmetry group $\mathcal{S} \eqdef \left\{Q_1, Q_2, Q_3, Q_4\right\}$ hints
at the structure of the set of optimal relative Cosserat rotations. In our
previously introduced notation, we find:
\begin{equation}
  \begin{aligned}
    Q_1^T \widehat{R} Q_1 &\,\equiv\, \left[\hat{\beta},\, ( \hat{r}_1,  \hat{r}_2,  \hat{r}_3)\right]\,,
    \quad&\quad&\quad Q_2^T \widehat{R} Q_2  \,\equiv\, \left[\hat{\beta},\, ( \hat{r}_1, -\hat{r}_2, -\hat{r}_3)\right]\,,\\
    Q_3^T \widehat{R} Q_3 &\,\equiv\, \left[\hat{\beta},\, (-\hat{r}_1,  \hat{r}_2, -\hat{r}_3)\right]\,,
    \quad&\quad&\quad Q_4^T \widehat{R} Q_4  \,\equiv\, \left[\hat{\beta},\, (-\hat{r}_1, -\hat{r}_2,  \hat{r}_3)\right]\;.
  \end{aligned}
\end{equation}
We observe that for rotations about the coordinate axes, i.e., with
$\hat{r} = e_n$, $n = 1,2,3$, the rotation axis $\hat{r}$ is either left
invariant or negated. The latter is equivalent to the negation of the
rotation angle $\hat{\beta}$.

\subsection{Lifting the minimization problem to \texorpdfstring{$\S^3$}{the three-sphere}}
The unit quaternions can be identified with the three-sphere
$\S^3 \eqdef \setdef{q \in \H}{\abs{q} = 1}$ which we shall consider as
a submanifold of the ambient coefficient space $\R^4$ of the
quaternion division ring $\H$. Let us choose the coordinates
$(w,x,y,z) \in \R^4$, i.e., we write quaternions as
$q = w + i x + j y + k z \in \H$.

In order to cast the minimization problem into a form which lends itself
to the derivation of a closed form solution, it is helpful to simplify the
domain of minimization, i.e., to choose a well-adapted system of coordinates.
We achieve this by lifting the Cosserat shear--stretch energy from
$\SO(3)$ to the covering space given by the sphere of unit quaternions
$\S^3 \subset \R^4$. The principal idea is then to extend the covering
map from $\S^3$ to the ambient space $\R^4$ and to apply the Lagrange
multiplier rule with the constraint function $g(q) \eqdef \abs{q}^2 - 1 = 0$.
This approach leads to minimizers in the submanifold of unit quaternions
$q \in \S^3$ which project to energy-minimizing rotations under the well-known
covering homomorphism
\begin{equation}
  \label{eq:pi}
  \pi: \S^3 \to \SO(3),
  \quad\quad
  \pi(q) = \begin{pmatrix}
    1 - 2(y^2 + z^2) &     2(xy - wz)   &      2(xz + wy)  \\
    2(xy + wz)       & 1 - 2(x^2 + z^2) &      2(yz - wx)  \\
    2(xz - wy)       &     2(yz + wx)  & 1 - 2(x^2 + y^2) \\
  \end{pmatrix}\;.
\end{equation}

In order to make our procedure explicit, let us first consider the case
of arbitrary smooth energies $W:\SO(3) \to \R$.
\begin{lem}
Any smooth energy $W:\SO(3) \to \R$ admits a lift to a smooth
energy \mbox{$W^\sharp: \S^3 \to \R$}
\begin{equation*}
\begin{xy}
  \xymatrix @!R {
    \S^3 \ar[rd]^{W^\sharp} \ar[d]^{\pi} & \\
    \SO(3) \ar[r]_W & \R \\
  }
\end{xy}
\end{equation*}
such that minimizers of $W^\sharp$ are projected to minimizers of $W$, i.e.,
\begin{equation}
  \pi(\,\argmin{q\,\in\,\S^3}{W^\sharp(q)}\,)
  \quad=\quad
  \argmin{R\,\in\,\SO(3)}{W(R)}\;.
\end{equation}
\begin{proof}
  The covering map $\pi: \S^3 \to \SO(3)$ defines a surjective Lie group
  homomorphism with $\ker \pi = \{1, -1\}$, see, e.g.,~\cite{Gallier:2015:NDGL}.
  This implies that the unit quaternions form a two-fold
  cover of $\SO(3)$. In particular, the Lie group homomorphism
  $\pi$ is a local diffeomorphism when restricted to a sheet of
  the covering and maps critical unit quaternions in $\S^3$ to critical
  rotations in $\SO(3)$. By definition
  $W^\sharp(q\,;F): \S^3 \times \GL^+(3) \to \RPosZ$,
  $W^\sharp \eqdef W \circ \pi$ is a lift of the Cosserat
  shear--stretch energy to the covering space $\S^3$. Smoothness
  of $W^\sharp$ is obvious since the composition of smooth maps is
  smooth.
\end{proof}
\end{lem}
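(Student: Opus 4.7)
The plan is to take the obvious candidate $W^{\sharp} \eqdef W \circ \pi$ and verify the two required properties directly from the two elementary facts about the covering map $\pi: \S^3 \to \SO(3)$ that we need, namely smoothness and surjectivity. Both are immediate: smoothness of $\pi$ is apparent from the explicit polynomial formula~\eqref{eq:pi}, since each matrix entry is a quadratic polynomial in $(w,x,y,z)$; surjectivity is a classical fact about the universal double cover $\pi: \S^3 \to \SO(3)$, which is also manifest by the observation that every rotation in $\SO(3)$ admits an axis--angle representation $[\alpha, r^T]$ and that the unit quaternion $q = \cos(\alpha/2) + \sin(\alpha/2)(ir_1 + jr_2 + kr_3)$ satisfies $\pi(q) = R$.

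Given these two facts, smoothness of $W^{\sharp}$ is immediate from the chain rule: the composition of smooth maps is smooth. There is nothing to do here beyond recording the observation.

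For the claim about minimizers, I would establish the equality of the two sets by showing both inclusions, and the key identity $W^{\sharp}(q) = W(\pi(q))$ by construction is what drives everything. First, since $\pi$ is surjective, for every $R \in \SO(3)$ there exists $q \in \S^3$ with $\pi(q) = R$, hence $W(R) = W^{\sharp}(q)$; conversely, every value of $W^{\sharp}$ is a value of $W$. Therefore the image value sets coincide, $W^{\sharp}(\S^3) = W(\SO(3))$, and in particular both functions attain the same infimum (a minimum, since $\S^3$ and $\SO(3)$ are both compact and $W$ is continuous). For the inclusion ``$\subseteq$'': if $q^{\ast}$ minimizes $W^{\sharp}$, then $W(\pi(q^{\ast})) = W^{\sharp}(q^{\ast}) = \min W^{\sharp} = \min W$, so $\pi(q^{\ast})$ minimizes $W$. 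For the reverse inclusion ``$\supseteq$'': if $R^{\ast}$ minimizes $W$, pick any $q^{\ast} \in \pi^{-1}(R^{\ast})$ (nonempty by surjectivity, in fact $\pi^{-1}(R^{\ast}) = \{q^{\ast}, -q^{\ast}\}$); then $W^{\sharp}(q^{\ast}) = W(R^{\ast}) = \min W = \min W^{\sharp}$, so $q^{\ast}$ minimizes $W^{\sharp}$, and $\pi(q^{\ast}) = R^{\ast}$ lies in the image of the minimizer set.

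Honestly, there is no real obstacle in this lemma: the statement is essentially the functorial content of having a surjective smooth map, and the benefit of recording it separately is purely organizational, since it justifies the subsequent replacement of the constrained minimization problem on $\SO(3)$ by a Lagrange-multiplier problem on $\R^4$ restricted to $\S^3$. The only mild subtlety worth flagging in a remark is the two-to-one nature of $\pi$, which means each minimizing rotation $R^{\ast}$ has exactly two quaternion preimages $\pm q^{\ast}$; this is harmless for the minimization itself but will matter when interpreting the critical points produced by \mathematica, since each rotation will appear twice in the list of critical quaternions.
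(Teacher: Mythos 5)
Your proposal is correct and follows essentially the same approach as the paper: define $W^\sharp \eqdef W\circ\pi$ and exploit smoothness and surjectivity of the covering homomorphism $\pi$. If anything, your verification of the $\argminmathop$ equality is a bit tighter than the paper's, since you show directly that surjectivity alone forces $W^\sharp(\S^3)=W(\SO(3))$ and hence equality of the minimizing value, whereas the paper instead appeals to the local-diffeomorphism property of $\pi$ (``maps critical unit quaternions to critical rotations''), which is really a statement about critical points and is more than is strictly needed for the claim about global minimizers.
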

For any $R \in \SO(3)$ there exists a $q \in \S^3$ which represents
this rotation as $R = \pi(q) \in \SO(3)$. However, this representation
is only unique up to antipodal identification, i.e., $q$ and $-q$
represent the same rotation: $\pi(q) = R = \pi(-q)$. We further note
that $\pi$ can be symbolically evaluated for all $q \in \H$ which
induces an extension.

As previously defined, the covering map $\pi$ is only defined for unit
quaternions $q \in \S^3$. However, in order to apply the Lagrange multiplier
theorem we have to extend it to a suitable neighborhood in the ambient space.
To this end, we introduce the punctured space of non-zero quaternions by
$\mathring{\H} \eqdef \H\setminus{\{0\}} \eqiso \R^4\setminus{\{0\}}$.
Then, identifying $\R^{3\times 3} \eqiso \R^9$ by concatenation of rows,
allows us to consider $\pi: \mathring{\H} \to \R^{3\times 3}$ as a map
$\pi: \mathring{\H} \to \R^9$ which leads us to the following matrix
representation of the derivative
\begin{equation}
  \mathrm{D}_{(w,x,y,z)}\,\pi\left(q(w,x,y,z)\right)\; =\;
  \begin{pmatrix*}[r]
    0    & -2 z & 2 y & 2 z &    0 & -2 x &  -2 y & 2 x &    0\\
    0    &  2 y & 2 z & 2 y & -4 x & -2 w &   2 z & 2 w & -4 x\\
    -4 y &  2 x & 2 w & 2 x &  0   &  2 z &  -2 w & 2 z & -4 y\\
    -4 z & -2 w & 2 x & 2 w & -4 z &  2 y &   2 x & 2 y &    0
  \end{pmatrix*}^T\!.
\end{equation}
It is not hard to infer that $\mathrm{D}_q \pi(q)$ is of rank $4$ for all
$q \in \mathring{\H}$. Hence, the implicit function theorem ensures that
$\pi:\mathring{\H} \to \R^{3\times 3}$ is a local diffeomorphism from the
punctured ambient space $\mathring{\H}$ of the unit sphere $\S^3$ to its
image $\pi(\mathring{\H}) \subset \R^{3\times 3}$.

\begin{defi}[Extension of the lifted energy]
  The extension of the Lie group homomorphism $\pi: \S^3 \to \SO(3)$ to
  $\mathring{\H}$ given by $\pi: \mathring{\H} \to \R^{3\times 3}$ induces
  an \textbf{extension of the lifted energy} to the ambient space $\mathring{\H}$
  \begin{equation}
    W_{1,0}^\sharp:\mathring{\H} \to \R,
    \quad\quad
    W_{1,0}^\sharp(q\,;D) \eqdef \hsnorm{\sym\left(\pi(q)D -\id\right)}^2\;.
  \end{equation}
\end{defi}
Let us abbreviate $\widehat{R}(\hat{q}) \eqdef \restrict{\pi}{\S^3}(\hat{q})$. It is precisely the
restriction of the lifted energy to the unit quaternions for which the Cosserat shear--stretch energy
of the relative rotation is well-defined
$$
\restrict{\widehat{W}_{1,0}^\sharp}{\S^3}(\hat{q}\,;D) \quad=\quad \widehat{W}_{1,0}(\widehat{R}(\hat{q})\,;D)\;.
$$
This extension is simply a mathematical construction, i.e., for $\hat{q} \in \mathring{\H}\setminus\S^3$ the lifted energy $\widehat{W}_{\mu,\mu_c}^\sharp$ \emph{loses its original interpretation as a shear--stretch energy}. Further, we note that
the choice of extension is not unique, but the solutions to the Euler--Lagrange
equations do not depend on the particular extension.\footnote{Alternatively, one may use, e.g., the
following extension which yields pairwise orthogonal columns
\begin{equation*}
  \pi': \mathring{\H} \to \R^{3\times 3}\;,\quad
  \pi'(q)\;\eqdef\;
  \begin{pmatrix}
    w^2+x^2-y^2-z^2 & 2 (x y - w z) & 2 (x z + w y) \\
    2 (x y + w z) & w^2-x^2+y^2-z^2 & 2 (y z - w x) \\
    2 (x z - w y) & 2 (y z + w x) & w^2-x^2-y^2+z^2 \\
  \end{pmatrix}\;.
\end{equation*}
The restrictions $\restrict{\pi}{\S^3} = \restrict{\pi'}{\S^3}$
to the sphere of unit quaternions $\S^3$ are identical.}

\begin{defi}[Lagrange function]
  Consider the constraint function
  $g: \mathring{\H} \to \R$, $g(\hat{q}) \eqdef \abs{\hat{q}}^2 - 1$.
  The \textbf{Lagrange function} for $\widehat{W}_{1,0}^\sharp:\mathring{\H} \to \R$
  is given by
  \begin{equation*}
    \widehat{L}_{1,0}:\mathring{\H} \times \R \to \R,
    \quad\quad
    \widehat{L}_{1,0}(\hat{q},\lambda\,;D)
    \eqdef
    \widehat{W}^\sharp_{1,0}(\hat{q}\,;D) - \lambda\, g(\hat{q})\;.
  \end{equation*}
\end{defi}
Clearly, $g(\hat{q}) = 0$ if and only if $\hat{q} \in \S^3 \subset \mathring{\H}$
which leads us to our final reformulation of the
original~\probref{intro:prob_wmm} in terms of quaternions describing
relative rotations, namely
\begin{prob}[Lagrange multiplier formulation]
  \label{prob:lagrange}
  Compute the critical points of the Lagrange function
  \begin{equation}
    \widehat{L}_{1,0}(\hat{q},\lambda\,;D) \;=\; \hsnorm{\sym\left(\pi(\hat{q})D -\id\right)}^2 - \lambda\, \left(\abs{\hat{q}}^2 - 1\right)
  \end{equation}
  and determine the energy-minimizing solutions.
\end{prob}

The Lagrange function is polynomial. Thus, the application of the
Lagrange multiplier technique leads to an algebraic problem for the
Euler--Lagrange equations which we investigate next.

\subsection{Euler--Lagrange equations, critical points and optimal solutions}

In what follows, a shorthand notation is helpful, so let us introduce
\begin{equation*}
  s_{ij} \eqdef \sigma_i + \sigma_j\quad\text{and}\quad
  d_{ij} \eqdef \sigma_i - \sigma_j\;,\quad i,j = 1,2,3\;.
\end{equation*}

Towards a derivation of the Euler--Lagrange equations in quaternion representation,
we first compute the product
\begin{align*}
\pi(\hat{q}(w,x,y,z))\,D
&= \begin{pmatrix}
    1-2 \left(y^2+z^2\right) & 2 (x y-w z) & 2 (xz + wy) \\
    2 (xy +wz) & 1-2 \left(x^2+z^2\right)  & 2 (yz - wx) \\
    2 (xz -wy) & 2 (yz + wx) & 1-2 \left(x^2+y^2\right)
\end{pmatrix}
\begin{pmatrix}
\sigma_1 & 0 & 0\\
0 & \sigma_2 & 0\\
0 & 0 & \sigma_3\\
\end{pmatrix}\\
&= \begin{pmatrix}
      \sigma_1 \left(1-2 \left(y^2+z^2\right)\right) & 2 \sigma_2 (xy - wz) & 2 \sigma_3 (xz + wy) \\
      2 \sigma_1 (xy + wz) & \sigma_2 \left(1-2 \left(x^2+z^2\right)\right) & 2 \sigma_3 (yz -wx)  \\
      2 \sigma_1 (xz -wy) & 2 \sigma_2 (yz + wx) & \sigma_3 \left(1-2 \left(x^2+y^2\right)\right)  \\
    \end{pmatrix}\;.
\intertext{From this, we infer the symmetric part}
\sym\left(\pi(\hat{q})D\right)
 &= \begin{pmatrix}
    \sigma_1\left(1 - 2 ({y}^2 + {z}^2)\right)& s_{12}\, {x} {y} + d_{12}\, {w} {z} & s_{31}\, {x} {z} + d_{31}\, {w} {y}\\
    s_{12}\, {x} {y}+ d_{12}\, {w} {z} & \sigma_2\left(1 - 2 ({x}^2 + {z}^2)\right) & s_{23}\, {y} {z} + d_{23}\, {w} {x}\\
    s_{31}\, {x} {z} + d_{31}\, {w} {y} & s_{23}\, {y} {z} + d_{23}\, {w} {x} & \sigma_3\left(1 - 2 ({x}^2 + {y}^2)\right)\\
\end{pmatrix}\;.
\end{align*}

Observing that $\sym\left(\pi(\hat{q})D - \id\right) = \sym\left(\pi(\hat{q})D\right) - \id$, we can compute the square of the Frobenius norm. This yields the
following explicit expression for the Lagrange function
$\widehat{L}_{1,0}: \mathring{\H}\times \R \to \R$:
\begin{align*}
\widehat{L}_{1,0}(\hat{q},\lambda\,;D)
&= \left(\sigma_1 (1 \!-\! 2 {y}^2 \!-\! 2 {z}^2) - 1\right)^2
 + \left(\sigma_2 (1 \!-\! 2 {x}^2 \!-\! 2 {z}^2) - 1\right)^2
 + \left(\sigma_3 (1 \!-\! 2 {x}^2 \!-\! 2 {y}^2) - 1\right)^2\\
& \qquad
 + 2\;\left(
   \left(s_{12} {x} {y} + d_{12} {w} {z}\right)^2
 + \left(s_{31} {x} {z} + d_{31} {w} {y}\right)^2
 + \left(s_{23} {y} {z} + d_{23} {w} {x}\right)^2
\right)\\
& \qquad -{\lambda}\; ({w}^2 + {x}^2 + {y}^2 + {z}^2 - 1)\;.
\end{align*}

Let $D = \diag(\sigma_1,\sigma_2,\sigma_3)$ be given. Then a critical tuple of
coefficients $(w,x,y,z,\lambda)$ for the Lagrange function $\widehat{L}_{1,0}$
satisfies the Euler--Lagrange equations in quaternion representation, i.e.,
\begin{equation}
  \mathrm{D}_{(w,x,y,z,\lambda)}\, \widehat{L}_{1,0}\left(\hat{q}(w,x,y,z),\lambda\,;D\right) \;=\; 0\;.
\end{equation}
After a lengthy computation in components (for which we have used
\mathematica), one obtains an explicit form of the Euler--Lagrange
equations for $\widehat{L}_{1,0}$ which is equivalent to the
following parameter-dependent system of polynomials
{\small
  \begin{equation}
    \label{eq:EL_quat}
  \begin{aligned}
0 &= \mathbf{w} \cdot\left(d_{23}^2 \,\mathbf{x}^2 + d_{31}^2 \,\mathbf{y}^2 + d_{12}^2 \,\mathbf{z}^2 -\frac{\,\mathbf{\lambda}}{2}\right)\\
0 &= \mathbf{x} \cdot\left(
  d_{23}^2 \,\mathbf{w}^2
  + 4 (\sigma_2^2 + \sigma_3^2) \,\mathbf{x}^2
  + (4\sigma_3^2 + s_{12}^2) \,\mathbf{y}^2
  + (4\sigma_2^2 + s_{31}^2) \,\mathbf{z}^2
  -\left(d_{23}^2 + (s_{23} - 2)s_{23})\right)
  -\frac{\,\mathbf{\lambda}}{2}\right)\\
0 &= \mathbf{y} \cdot\left(
  d_{31}^2 \,\mathbf{w}^2
  + 4 (\sigma_3^2 + \sigma_1^2 ) \,\mathbf{y}^2
  + (4 \sigma_1^2 + s_{23}^2) \,\mathbf{z}^2
  + (4 \sigma_3^2 + s_{12}^2) \,\mathbf{x}^2
  -\left(d_{31}^2 + (s_{31} - 2)s_{31})\right)
  -\frac{\,\mathbf{\lambda}}{2}\right)\\
0 &= \mathbf{z} \cdot\left(
  d_{12}^2 \,\mathbf{w}^2
  \,+ 4 (\sigma_1^2 + \sigma_2^2) \,\mathbf{z}^2
  + (4 \sigma_2^2 + s_{31}^2) \,\mathbf{x}^2
  + (4 \sigma_1^2 + s_{23}^2) \,\mathbf{y}^2
  -\left(d_{12}^2 + (s_{12} - 2)s_{12})\right)
  -\frac{\,\mathbf{\lambda}}{2}\right)\\
  0 &= \mathbf{w}^2 + \,\mathbf{x}^2 + \,\mathbf{y}^2 + \,\mathbf{z}^2 - 1\;.
  \end{aligned}
\end{equation}}

In general, solution sets of polynomial systems over the field of complex numbers
$\C$ define complex varieties which intuitively can be regarded as
almost-everywhere submanifolds of $\C^n$ with certain singularities.
Real algebraic geometry studies the set of solutions to systems over
real closed fields and the solution sets define so-called semialgebraic
sets~\cite{Bochnak:2013:RAG}. For an exposition of solution methods
for polynomial systems, we refer the interested reader
to~\cite{Sturmfels:2002:SSPE} and~\cite{Cox:2006:UAG}. Note
that in our case both the problem and its solution set are parametrized
by the singular values $\sigma_1 > \sigma_2 > \sigma_3 > 0$ of the
deformation gradient $F \in \GL^+(3)$ encoded by the diagonal matrix
$D = \diag(\sigma_1,\sigma_2,\sigma_3)$. The study of parametrized
polynomial systems is an active research area in computational
algebraic geometry, see, e.g.,~\cite{Montes:2010:GBPP} and references
therein.\footnote{The present authors are not specialists
  in (computational) algebraic geometry. Our goal here is to point out
  some interesting references and developments that might be useful
  for the solution of polynomial systems arising also in other
  applications.}

We briefly introduce the Euler--Lagrange equations obtained by taking
variations on the matrix group $\SO(3)$; cf.~\cite[p.~28]{Neff_Biot07} for
details. Let $\xi = RA \in T_R\SO(3) \eqiso R\cdot\so(3)$ be a direction
in the tangent space at $R \in \SO(3)$. The corresponding
directional derivative of the Cosserat shear--stretch energy
$W_{\mu,\mu_c}(\mrot\,; F)$ is then
\begin{align*}
\label{eq:D1W}
D_{R} W_{\mu,\mu_c}(\mrot\,; F).\xi &=
2\mu\,\scalprod{\sym(\mrot^TF - \id)}{\sym(\xi^TF)}
+ 2\mu_c\,\scalprod{\skew(\mrot^TF)}{\skew(\xi^TF)}\\
&= \scalprod{2\mu\,\sym(\mstretch - \id)
  + 2\mu_c\,\skew(\mstretch)}{A^T\mstretch}\;.
\end{align*}
Equating this derivative with zero and noting, as usual, that this equality must
hold for \emph{all} infinitesimal rotations $A \in \so(3)$,
we obtain the Euler--Lagrange equations in matrix representation.
In particular, any critical
$\mstretch \eqdef R^TF$ must satisfy
\begin{equation}
  \label{eq:EL_SO}
  \skew\left((\mu - \mu_c)\,\mstretch^2 - 2\mu\,\mstretch\right) = 0\;.
\end{equation}
Clearly, the polar factor $\polar$ solves the Euler--Lagrange
equations as it symmetrizes $\mstretch$. Thus, $\polar$ is \emph{always}
a critical point, see, e.g.,~\cite{Bufler85}, or~\cite{Sansour99}.
Under certain conditions on $F$,
however, there may be non-classical critical points and even minimizers
for which $\mstretch$ is no longer symmetric! This observation lies at
the heart of the first collaboration of the present
authors~\cite{Neff_Biot07,Neff_Fischle_GAMM08} and we shall meet this
phenomenon again in the following; cf. also~\cite{Sansour:2008:NCC}.

We have compiled the solution set for the Euler--Lagrange equations in quaternion
representation~\eqref{eq:EL_quat} which we have obtained by using
\mathematica~in Appendix~\ref{sec:appendix}. This permits us to present the
energy-minimizing relative rotations which solve~\probref{prob:lagrange}
without further ado.

\begin{compres}[Energy-minimizing quaternions for $(\mu,\mu_c) = (1,0)$]
  Let $D = \diag(\sigma_1, \sigma_2, \sigma_3)$ with $\sigma_1 > \sigma_2 > \sigma_3 > 0$. Then the quaternion representation of the energy-minimizing
  relative  rotations for $\widehat{W}_{1,0}(\hat{q}\,;D)$ are given by
  the following critical points (listed in Appendix~\ref{sec:appendix}):
  \begin{equation}
    \label{eq:optimal_q}
    \begin{cases}
      \;\hat{q}_{\mathrm{I},1}(D)     \hspace{0.1cm} \quad\equiv\quad \id_3 &,\quad\text{if}\quad s_{12} \eqdef \sigma_1 + \sigma_2 \leq 2\;,\\
      \;\hat{q}^\pm_{\mathrm{II},1}(D) \quad\equiv\quad \left[\pm\arccos(\frac{2}{\sigma_1 + \sigma_2}),\; (0,\,0,\,1)\right] &,\quad\text{if}\quad s_{12} \eqdef \sigma_1 + \sigma_2 \geq 2\;.
    \end{cases}
  \end{equation}
\end{compres}
\textit{Validation.} At present, we cannot give a full proof for this
result. However, we consider our numerical validation to be quite thorough.
For an exposition of our analysis of the critical points compiled in
Appendix~\ref{sec:appendix} and the numerical validation of the presented
energy-minimizing solutions based on extensive random sampling of $\SO(3)$
we refer our reader to~\secref{sec:validation}.

One of the main gaps towards a full proof is the question whether the set of
critical points computed by \mathematica~is complete. Note that our extensive
validation based on random rotations, which exceeds what we can present in a
paper by far, does not hint at the existence of additional critical points.
Solving algebraic problems is the domain where CAS tools such as
\mathematica~do shine brightly.

\begin{cor}[Energy-minimizing relative rotations for $(\mu,\mu_c) = (1,0)$]
 \label{cor:rhat10}
  The solutions to~\probref{prob:relative_rhat} are given by the energy-minimizing
  relative rotations
  {\small
  \begin{equation}
    \widehat{R}_{1,0}^{\pm}(F) \eqdef
    \begin{pmatrix}
      \cos \hat{\beta}^\pm_{1,0}  & -\sin \hat{\beta}^\pm_{1,0} & 0\\
      \sin \hat{\beta}^\pm_{1,0}  &  \cos \hat{\beta}^\pm_{1,0} & 0\\
      0                    &  0                   & 1\\
    \end{pmatrix}
    \stackrel{(\text{if}\; s_{12} \geq 2)}{\vphantom{\Sigma}=}
    \begin{pmatrix}
      \frac{2}{\sigma_1 + \sigma_2}  & \mp \sqrt{1-\left(\frac{2}{\sigma_1 + \sigma_2}\right)^2} & 0\\
      \pm \sqrt{1-\left(\frac{2}{\sigma_1 + \sigma_2}\right)^2} & \frac{2}{\sigma_1 + \sigma_2} & 0 \\
      0 & 0 & 1
    \end{pmatrix}\;.
  \end{equation}
  }
  Here, the optimal relative rotation angles are given by
  \begin{equation}
    \hat{\beta}^\pm_{1,0}(F) \quad\eqdef\quad
    \begin{cases}
      \; 0  &,\;\text{if}\quad s_{12} \eqdef \sigma_1 + \sigma_2 \leq 2\;,\\
      \; \pm\arccos(\frac{2}{\sigma_1 + \sigma_2})
      &,\;\text{if}\quad s_{12} \eqdef \sigma_1 + \sigma_2 \geq 2\;.
    \end{cases}
  \end{equation}
  In particular, for $\sigma_1 + \sigma_2 \leq 2$, we obtain $\widehat{R}_{1,0}^{\pm}(F) = \id$.
\end{cor}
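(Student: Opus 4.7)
The plan is to deduce the explicit matrix form of the corollary directly from the preceding Computer Assisted Result by translating unit quaternions to rotation matrices via the covering map $\pi: \S^3 \to \SO(3)$ given in~\eqref{eq:pi}. Since the Computer Assisted Result already identifies the energy-minimizing quaternions, the corollary amounts to an explicit unpacking of the axis-angle representation, so no further minimization argument is needed.

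First I would handle the compressive case $s_{12} \eqdef \sigma_1 + \sigma_2 \leq 2$. Here the optimizer is $\hat{q}_{\mathrm{I},1}(D) \equiv \id_3$, corresponding to $(w,x,y,z) = (\pm 1, 0, 0, 0)$. Substituting into~\eqref{eq:pi} immediately gives $\pi(\hat{q}_{\mathrm{I},1}) = \id_3$, so $\widehat{R}^\pm_{1,0}(F) = \id$ and $\hat{\beta}^\pm_{1,0}(F) = 0$, matching the claim trivially and consistently with the matrix displayed in the corollary (which reduces to the identity when $\hat{\beta}^\pm_{1,0} = 0$).

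Next I would treat the expansive case $s_{12} \geq 2$. The optimal quaternions are $\hat{q}^\pm_{\mathrm{II},1}(D) \equiv [\pm \hat{\beta}, (0,0,1)]$ with $\hat{\beta} \eqdef \arccos(2/s_{12}) \in [0,\pi)$. In coordinates, these read $(w,x,y,z) = (\cos(\hat{\beta}/2), 0, 0, \pm \sin(\hat{\beta}/2))$. Inserting into~\eqref{eq:pi} and using the half-angle identities $\cos\hat{\beta} = 1 - 2\sin^2(\hat{\beta}/2)$ and $\sin\hat{\beta} = 2\sin(\hat{\beta}/2)\cos(\hat{\beta}/2)$, together with $\cos\hat{\beta} = 2/s_{12}$ and $\sin\hat{\beta} = \sqrt{1 - (2/s_{12})^2}$, yields precisely the rotation matrix written in the statement, with the two signs of the axis-$e_3$ rotation corresponding to the $\pm$ branches.

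Essentially all of the work has already been done in the Computer Assisted Result, where the critical points of the polynomial system~\eqref{eq:EL_quat} were extracted symbolically via \mathematica~and the energy-minimizing critical points selected; the remaining step is a routine translation from unit-quaternion to matrix coordinates. The only nontrivial point is to notice that $\hat{\beta}$ is well-defined precisely when $s_{12} \geq 2$ (so that $2/s_{12} \in (0,1]$ lies in the range of $\cos$), which is exactly the threshold at which the identity ceases to be a minimizer and a genuine rotation bifurcates from it.
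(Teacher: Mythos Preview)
Your proposal is correct and follows exactly the route the paper intends: the corollary is stated immediately after the Computer Assisted Result without a separate proof, and the passage from the energy-minimizing unit quaternions $\hat{q}_{\mathrm{I},1}$ and $\hat{q}^\pm_{\mathrm{II},1}$ to the rotation matrices $\widehat{R}^\pm_{1,0}(F)$ via the covering map $\pi$ of~\eqref{eq:pi} and the half-angle identities is precisely the intended (and only) step. Your observation that $\hat{\beta} = \arccos(2/s_{12})$ is well-defined exactly when $s_{12} \geq 2$ is also consistent with the paper's remark that this is the threshold at which the non-classical branches bifurcate from the polar factor.
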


The interpretation of the optimal relative Cosserat rotations is the main
subject of the next section, but in anticipation of this subsequent
discussion we remark that the condition $\sigma_1 + \sigma_2 \leq 2$
characterizes a generalized compressive regime.

 \countres
\section{Optimal Cosserat rotations, maximal mean planar strain and the reduced energy}
\label{sec:discussion}

All proper rotations of euclidean three-space act in a plane perpendicular to
the axis of rotation. From this, a continuum model with rotational degrees of
freedom inherits a certain planar character. In our context, it seems natural
to introduce
\begin{defi}[Maximal mean planar stretch and strain]
  \label{defi:mmpss}
  Let $F \in \GL^+(n)$, $n \geq 2$, with singular values
  $\sigma_1 \geq \sigma_2 \geq \ldots \geq \sigma_n > 0$. We introduce
  the \textbf{maximal mean planar stretch} $\mathbf{\ump}$ and
  the \textbf{maximal mean planar strain} $\mathbf{\smp}$ as follows:
  \begin{equation}
    \begin{aligned}
      \ump(F) &\;\eqdef\; \max_{i \neq j}{\frac{\sigma_i + \sigma_j}{2}} = \frac{\sigma_1 + \sigma_2}{2}\;,\quad\text{and}\\
      \smp(F) &\;\eqdef\; \max_{i \neq j}\frac{(\sigma_i - 1) + (\sigma_j - 1)}{2} = \ump(F) - 1\;.
    \end{aligned}
  \end{equation}
\end{defi}

\begin{defi}[Classical and non-classical domain]
  To any pair of material parameters $(\mu,\mu_c)$ in the non-classical range
  $\mu > \mu_c \geq 0$, we associate the following \textbf{classical domain}
  and \textbf{non-classical domain} for the parameter
  $F \in \GL^+(n)$
  \begin{equation}
    \begin{aligned}
      \domc_{\mu,\mu_c} &\eqdef \setdef{F \in \GL^+(n)}{\smp(\widetilde{F}_{\mu,\mu_c}) \leq 0}\;,
      \quad\text{and}\quad\\
      \domn_{\mu,\mu_c} &\eqdef \setdef{F \in \GL^+(n)}{\smp(\widetilde{F}_{\mu,\mu_c}) \geq 0}\;,
    \end{aligned}
  \end{equation}
  respectively.
\end{defi}

It is straight-forward to derive the following alternative characterizations
{\small
\begin{equation}
  \begin{aligned}
    \domc_{\mu,\mu_c} = \setdef{F \in \GL^+(n)}{\ump(F) \leq \lambda_{\mu,\mu_c}}
    = \setdef{F \in \GL^+(n)}{\sigma_1 + \sigma_2 \leq \sradmm \eqdef \frac{2\mu}{\mu - \mu_c}}\;,\\
    \domn_{\mu,\mu_c} = \setdef{F \in \GL^+(n)}{\ump(F) \geq \lambda_{\mu,\mu_c}}
    = \setdef{F \in \GL^+(n)}{\sigma_1 + \sigma_2 \geq \sradmm \eqdef \frac{2\mu}{\mu - \mu_c}}\;.
  \end{aligned}
\end{equation}
}Note that the intersection $\domc_{\mu,\mu_c} \cap \domn_{\mu,\mu_c} = \setdef{F \in \GL^+(n)}{\smp_{\mu,\mu_c}(F) = 0}$ is not empty. However, the minimizers $\rpolar_{\mu,\mu_c}^\pm(F)$ coincide with the polar factor $\polar(F)$ on this intersection.
This can be seen from the form of the optimal relative rotations in~\coref{cor:rhat10}. In particular, for dimension $n = 3$, we rediscover the following important
characterizations of these domains for the non-classical limit case
$(\mu,\mu_c) = (1,0)$; cf.~\eqref{eq:optimal_q}:
\begin{equation}
\begin{aligned}
  \domc_{1,0} \eqdef \setdef{F \in \GL^+(3)}{s_{12} &\eqdef \sigma_1 + \sigma_2 \leq 2}\;,
  \quad\text{and}\\
  \domn_{1,0} \eqdef \setdef{F \in \GL^+(3)}{s_{12} &\eqdef \sigma_1 + \sigma_2 \geq 2}\;.
\end{aligned}
\end{equation}

Previously, in our~\coref{cor:rhat10}, we have determined the
energy-minimizing relative rotations
\begin{equation}
  \widehat{R}_{1,0}^\pm(D) \;\eqdef\;
  \argmin{\widehat{R}\,\in\,\SO(3)}{\widehat{W}_{1,0}(\widehat{R}\;,D)}
  \;\eqdef\;
  \argmin{\widehat{R}\,\in\,\SO(3)}{\hsnorm{\sym(\widehat{R}D) - \id}^2}\;.
\end{equation}
Let us briefly summarize: for $\ump(F) \leq 1$, i.e., when $F \in \domc_{1,0}$, we have $\widehat{R}_{1,0}^\pm(D) = \id$ which corresponds uniquely to the polar factor $\polar{}$. The minimizers $\rpolar^\pm_{1,0}(F)$ deviate strictly from
$\polar(F)$ for $F \in \domn_{1,0} \setminus \domc_{1,0}$ and are hence
non-classical. Further, expressed in terms of the maximal mean planar
stretch $\ump(F)$, we obtain the alternative representation
\begin{equation}
\widehat{R}_{1,0}^{\pm}(F)
\;=\;
\begin{pmatrix}
  \frac{1}{\ump(F)}               &  \mp\sqrt{1-\frac{1}{\ump(F)^2}} & 0\\
  \pm\sqrt{1-\frac{1}{\ump(F)^2}} &  \frac{1}{\ump(F)} & 0 \\
  0 & 0 & 1
\end{pmatrix}\;.
\end{equation}

\begin{minipage}[t]{0.6\linewidth}
Towards a geometric interpretation of the energy-minimizing Cosserat
rotations $\rpolar^\pm_{1,0}(F)$ in the non-classical limit case
$(\mu,\mu_c) = (1,0)$, we reconsider the spectral decomposition of
$U = QDQ^T$ from the principal axis transformation in~\secref{sec:intro}.
Let us denote the columns of $Q \in \SO(3)$ by $q_i \in \S^2$, $i = 1,2,3$.
Then $q_1$ and $q_2$ are orthonormal eigenvectors of $U$ which correspond
to the largest two singular values $\sigma_1$ and $\sigma_2$ of $F \in \GL^+(3)$.
More generally, we introduce the following

\begin{defi}[Plane of maximal strain]
  \label{defi:pms}
  The \textbf{plane of maximal
    strain} is the linear subspace
  $$\mathrm{P}^{\rm mp}(F) \quad\eqdef\quad \vspan{q_1,q_2} \subset \R^3$$
  spanned by the two maximal eigenvectors $q_1,q_2$ of $U$, i.e.,
  the eigenvectors associated to the two largest singular
  values $\sigma_1 > \sigma_2 > \ldots > \sigma_n$
  of the deformation gradient $F \in \GL^+(n)$, $n \geq 2$.
\end{defi}

We recall that, due to the parameter reduction~\cite[Lem.~2.2]{Fischle:2015:OC2D},
it is always possible to recover the optimal rotations for general
non-classical parameters $\mu > \mu_c \geq 0$
\begin{equation}
  \rpolar_{\mu,\mu_c}(F) \eqdef \argmin{R\,\in\,\SO(3)}{W_{\mu,\mu_c}(\mrot\,;F)}\;.
\end{equation}
from the non-classical limit case $(\mu,\mu_c) = (1,0)$. However, we defer
the explicit procedure for a bit since it is quite instructive
to interpret this distinguished non-classical limit case first.
\end{minipage}
\quad
\begin{minipage}[t]{0.39\linewidth}
  \begin{center}
    {      \setlength{\fboxsep}{0pt}      \setlength{\fboxrule}{1pt}      \mbox{
      \includegraphics[width=\linewidth,clip=true,trim=0.75cm 3cm 0.75cm 2.5cm]{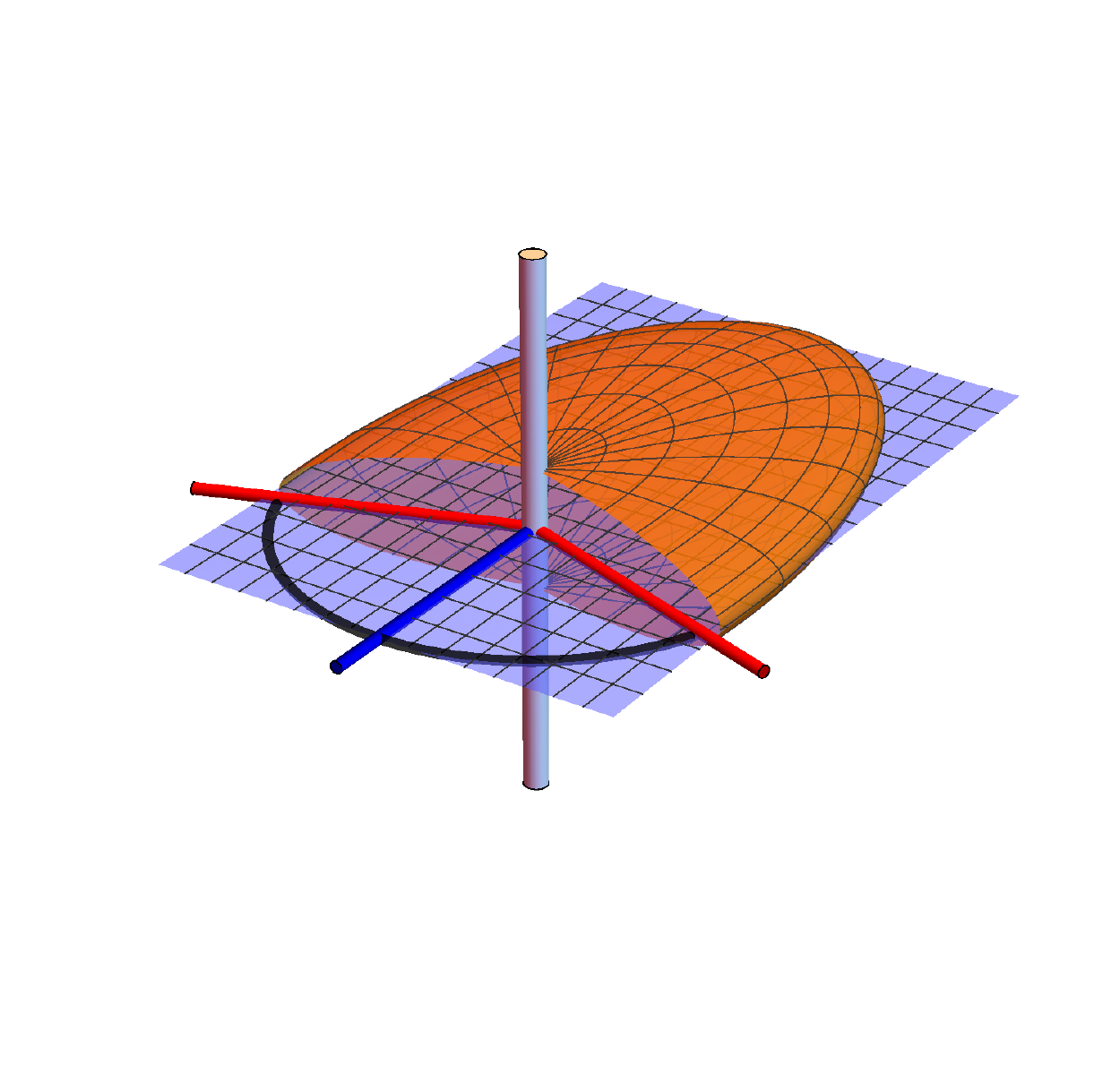}
    }
    }
  \end{center}
  \captionof{figure}{A stretch ellipsoid corresponding
    \label{fig:pms}
    to $(\sigma_1,\sigma_2,\sigma_3) = (4,2,1/2)$. The plane of maximal strain
    $\mathrm{P}^{\rm mp}(F)$ is depicted in blue. The cylinder perpendicular to
    this plane marks the axis of rotation $q_3 \perp \mathrm{P}^{\rm mp}(F)$ of
    $\rpolar^\pm(F)$ which corresponds to the eigenvector associated with the
    smallest singular value $\sigma_3 = 1/2$. The thin blue
    cylinder which bisects the angle enclosed by the opening of the ellipsoid
    corresponds to the polar factor $\polar{}$. Each of the two outer red
    cylinders corresponds to a non-classical minimizer $\rpolar_{1,0}^\pm(F)$.
    The angle enclosed is the optimal relative rotation angle
    $\hat{\beta}_{1,0}^\pm = \pm\arccos(\frac{2}{\sigma_1 + \sigma_2})$.
    This is the major symmetry of the non-classical minimizers.}
\end{minipage}

\begin{rem}[$\rpolar^\pm_{1,0}(F)$ in the classical domain]
  \label{rem:rpolar_class}
  For $\smp(F) \leq 0$ the maximal mean planar stretch is
  non-expansive. By definition, we have $F \in \domc_{1,0}$ in the
  classical domain, for which the energy-minimizing relative rotation
  is given by $\widehat{R}_{1,0}(F) = \id$ and there is no deviation
  from the polar factor. In short $\rpolar_{1,0}^\pm(F) = \polar(F)$.
\end{rem}

Let us now turn to the more interesting non-classical case $F \in \domn_{1,0}$.
\begin{rem}[$\rpolar^\pm_{1,0}(F)$ in the non-classical domain]
  \label{rem:rpolar_nonclass}
  If $F \in \domn_{1,0}$, then by definition $\smp(F) > 0$
  and the maximal mean planar strain is expansive. The deviation of
  the non-classical energy-minimizing rotations $\rpolar^\pm_{1,0}(F)$
  from the polar factor $\polar$ is measured by a rotation
  in the plane of maximal strain $\mathrm{P}^{\rm mp}(F)$ given
  by $\polar(F)^T\rpolar^{\pm}_{1,0}(F) = Q(F)\widehat{R}_{1,0}^\mp(F)Q(F)^T$.
  The rotation axis is the eigenvector $q_3$ associated with the smallest
  singular value $\sigma_3 > 0$ of $F$ and the relative rotation angle
  is given by $\hat{\beta}_{1,0}^\mp(F) = \mp\arccos\left(1/\ump(F)\right)$.
  The rotation angles increase monotonically towards the asymptotic limits
  $$\lim_{\ump(F) \,\to\, \infty} \hat{\beta}_{1,0}^\pm(F) \quad=\quad \pm \pi\;.$$
  In axis-angle representation, we obtain
  \begin{align}
    \widehat{R}_{1,0}^\pm(F) &\quad\equiv\quad \left[\pm \arccos(1/\ump(F)),\, (0,\,0,\,1)\right]\,,\quad\text{and}\\
    \polar^T\rpolar^{\pm}_{1,0}(F) &\quad\equiv\quad \left[\mp \arccos(1/\ump(F)),\, q_3 \right]\;.
  \end{align}
\end{rem}

\begin{cor}[An explicit formula for $\rpolar_{\mu,\mu_c}^\pm(F)$]
  \label{cor:rpolar_formula}
  For the non-classical limit case $(\mu,\mu_c) = (1,0)$ we have
  the following formula for the energy-minimizing Cosserat rotations:
  \begin{equation}
    \rpolar^{\pm}_{1,0}(F)
    \quad\eqdef\quad
    \begin{cases}
      \;\polar(F) &, \text{if}\quad F \in \domc_{1,0}\;,\\
      \;\polar(F)Q(F)\widehat{R}_{1,0}^\mp(F)Q(F)^T &, \text{if}\quad F \in \domn_{1,0}\;.
    \end{cases}
  \end{equation}
  For general values of the weights in the non-classical range
  $\mu > \mu_c \geq 0$, we obtain
  \begin{equation}
    \rpolar^{\pm}_{\mu,\mu_c}(F) \eqdef \rpolar^{\pm}_{1,0}(\widetilde{F}_{\mu,\mu_c})\;,
  \end{equation}
  where $\widetilde{F}_{\mu,\mu_c} \eqdef \lambda^{-1}_{\mu,\mu_c}\,F$ is obtained
  by rescaling the deformation gradient with the inverse of the \emph{induced scaling
  parameter} $\lambda_{\mu,\mu_c} \eqdef \frac{\mu}{\mu - \mu_c} > 0$.
\end{cor}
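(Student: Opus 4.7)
The plan is to combine the explicit description of the relative minimizers in~\coref{cor:rhat10} with the inversion formula~\eqref{eq:R}, namely $R = \polar\, Q\, \widehat{R}^T\, Q^T$, which reconstructs an absolute rotation from its relative counterpart once a spectral decomposition $U = QDQ^T$ has been fixed. Since the principal-axis transformation carried out in~\secref{sec:minimization} is a bijection $R \leftrightarrow \widehat{R}$ (for any fixed $Q \in \SO(3)$) which preserves the Cosserat shear--stretch energy, the minimizing sets correspond, so that
\begin{equation*}
    \rpolar^{\pm}_{1,0}(F) \;=\; \polar(F)\, Q(F)\, (\widehat{R}_{1,0}^{\pm}(F))^T\, Q(F)^T.
\end{equation*}

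I would then split into the two cases singled out in the statement. If $F \in \domc_{1,0}$, i.e.\ $s_{12} \leq 2$, then~\coref{cor:rhat10} yields $\widehat{R}_{1,0}^{\pm}(F) = \id$, and the formula above collapses immediately to $\rpolar^{\pm}_{1,0}(F) = \polar(F)$. If instead $F \in \domn_{1,0}$, then $\widehat{R}_{1,0}^{\pm}(F)$ is a planar rotation in the $(e_1,e_2)$-plane with angle $\pm\arccos(2/s_{12})$ about the common axis $e_3$. Transposition of a rotation inverts it and hence negates its angle about this common axis, so $(\widehat{R}_{1,0}^{\pm}(F))^T = \widehat{R}_{1,0}^{\mp}(F)$ --- this is the only ``subtle'' point and is immediate from the explicit $2\times 2$ block in~\coref{cor:rhat10} together with $\cos(-\beta)=\cos\beta$ and $\sin(-\beta)=-\sin\beta$. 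Substituting gives the second case of the claimed formula.

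For general non-classical weights $\mu > \mu_c \geq 0$, I would invoke the parameter reduction lemma~\cite[Lem.~2.2]{Fischle:2015:OC2D}, already quoted in~\secref{sec:intro}, which yields the equality of minimizing sets
\begin{equation*}
    \argmin{R\,\in\,\SO(3)}{W_{\mu,\mu_c}(R\,;F)}
    \;=\;
    \argmin{R\,\in\,\SO(3)}{W_{1,0}(R\,;\widetilde{F}_{\mu,\mu_c})}\;,
\end{equation*}
and hence $\rpolar^{\pm}_{\mu,\mu_c}(F) = \rpolar^{\pm}_{1,0}(\widetilde{F}_{\mu,\mu_c})$ follows by definition.

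There is essentially no real obstacle here: the corollary is a bookkeeping consequence of results already established in the paper. The only point that warrants care is the sign-bookkeeping induced by the transposition (which swaps the $\pm$ superscript to $\mp$ on the right-hand side) and the observation that, even though $Q(F)$ is only determined up to the action of the discrete group $\mathcal{S}$, the resulting set $\{\rpolar^{+}_{1,0}(F),\rpolar^{-}_{1,0}(F)\}$ is independent of this choice, as was verified in~\secref{sec:minimization}.
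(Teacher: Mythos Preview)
Your proposal is correct and follows essentially the same approach as the paper: apply the inversion formula~\eqref{eq:R} to the optimal relative rotations of~\coref{cor:rhat10}, and then invoke the parameter reduction lemma~\cite[Lem.~2.2]{Fischle:2015:OC2D} for general non-classical weights. Your version is in fact more explicit than the paper's terse proof, particularly in spelling out the transposition $(\widehat{R}_{1,0}^{\pm})^T = \widehat{R}_{1,0}^{\mp}$ that accounts for the sign swap.
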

\begin{proof}
  This is a straightforward application of our equation~\eqref{eq:R}
  which translates relative to absolute rotations derived
  in~\secref{sec:intro} to the optimal relative rotations described
  in~\coref{cor:rhat10}. The second part is non-trivial and follows
  from~\cite[Lem.~2.2]{Fischle:2015:OC2D}.\qedhere
\end{proof}
Note that the previous definition is relative to a fixed choice
of the orthonormal factor $Q(F) \in \SO(3)$ in the spectral
decomposition of $U = QDQ^T$. Further, right from their variational
characterization, one easily deduces that the energy-minimizing
rotations satisfy $\rpolar^\pm_{\mu, \mu_c}(Q\,F) = Q\,\rpolar^\pm_{\mu,\mu_c}(F)$,
for any $Q \in \SO(3)$, i.e., they are objective
functions;~cf.\remref{rem:polar_vs_rpolar}.

\begin{figure}
\begin{center}
  \includegraphics[width=12cm]{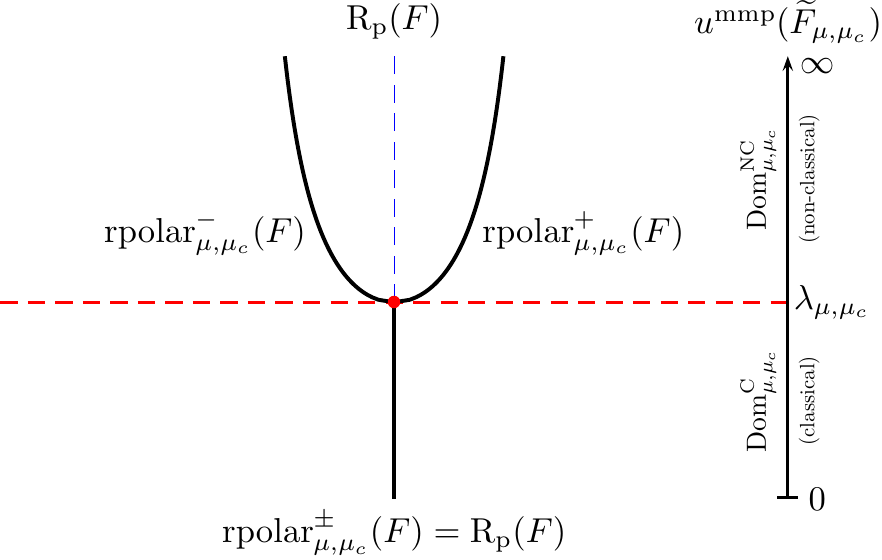}
\end{center}
\caption[Pitchfork bifurcation diagram for $\rpolar_{\mu,\mu_c}^\pm(F)$]{
  \label{fig:branchDiag}  Pitchfork bifurcation diagram for $\rpolar_{\mu,\mu_c}^\pm(F)$
  for $\mu > \mu_c \geq 0$. Let us express the energy-minimizers
  $\rpolar_{\mu,\mu_c}^\pm(F)$ in terms of the maximal mean
  planar stretch $\ump(\widetilde{F}_{\mu,\mu_c})$ of the rescaled
  deformation gradient $\widetilde{F}_{\mu,\mu_c} \eqdef \lambda_{\mu,\mu_c}^{-1}F$.
  For values $F \in \domc_{\mu,\mu_c}$, we have
  $0 < \ump \leq \lambda_{\mu,\mu_c}$ and the polar factor
  $\polar(F)$ is uniquely energy-minimizing. In contrast,
  for $F \in \domn_{\mu,\mu_c}$, $\lambda_{\mu,\mu_c} \leq \ump < \infty$, there
  are two non-classical minimizers $\rpolar^\pm_{\mu,\mu_c}(F)$. In this
  regime, the polar factor is no longer optimal but it is still a
  critical point. At the branching point
  $\ump(\widetilde{F}_{\mu,\mu_c}) = \lambda_{\mu,\mu_c}$
  the minimizers all coincide: $\rpolar^{-}_{\mu,\mu_c}(F) = \polar(F) = \rpolar^{+}_{\mu,\mu_c}(F)$. For $\mu_c \to \mu$, the branching point
  escapes to infinity which asymptotically recovers the behavior
  in the classical parameter range $\mu_c \geq \mu > 0$.}
\end{figure}

The domains of the piecewise definition of $\rpolar_{1,0}^\pm(F)$
in~\coref{cor:rpolar_formula} indicate a certain tension-compression
asymmetry in the material model characterized by the Cosserat
shear--stretch energy $W_{1,0}(R\,;F)$; cf.~\remref{rem:zero_tca}.
We can also make a second important observation. To this end, consider
a smooth curve $F(t):(-\epsilon,\epsilon) \to \GL^+(3)$. If the
eigenvector $q_3(t) \in \S^2$ associated with the smallest singular
value $\sigma_3(t)$ changes its orientation along this curve, then
the rotation axis of $\rpolar_{1,0}^\pm(F)$ flips as well. Effectively,
the sign of the relative rotation angle $\hat{\beta}_{1,0}^\pm(F)$ is
negated which may lead to jumps.  This can happen, e.g., if $F(t)$
passes through a deformation gradient with a non-simple singular
value, but it may also depend on details of the specific algorithm
used for the computation of the eigenbasis.

For the classical range $\mu_c \geq \mu > 0$, the
polar factor and the relaxed polar factor(s) coincide and trivially
share all properties. This is no longer true for the non-classical parameter
range $\mu_c \geq \mu > 0$ and we compare the properties for that range
in our next remark. More precisely, we present a detailed comparison
of the well-known features of the polar factor $\polar$ which are of
fundamental importance in the context of mechanics.

\begin{minipage}{\linewidth}
  \begin{rem}[$\polar(F)$ vs. $\rpolar(F)$ for the non-classical
      \label{rem:polar_vs_rpolar}
    range $\mu > \mu_c \geq 0$] Let $n \geq 2$ and $F \in \GL^+(n)$.
  The polar factor $\polar(F) \in \SO(n)$ obtained from the polar decomposition
  $F = \polar(F)\,U$ is \emph{always unique} and satisfies:
  \begin{equation}
    \begin{matrix*}[l]
      &\text{(Objectivity)}
      &\hspace{.125\linewidth} &\polar(\,Q\cdot F\,)  &= &Q\cdot\polar(F)   &\quad\quad(\forall Q \in \SO(n))\;,\\
      &\text{(Isotropy)}
      &\hspace{.125\linewidth} &\polar(\,F\cdot Q\,)  &= &\polar(F)\cdot Q   &\quad\quad(\forall Q \in \SO(n))\;,\\
      &\text{(Scaling invariance)}
      &\hspace{.125\linewidth} &\polar(\,\lambda\cdot F\,) &= &\polar(F)  &\quad\quad(\forall \lambda > 0)\;,\\
      &\text{(Inversion symmetry)}
      &\hspace{.125\linewidth}                     &\polar(F^{-1}) &= &\polar(F)^{-1}\;. &
    \end{matrix*}
  \end{equation}
  The relaxed polar factor(s) $\rpolar_{\mu,\mu_c}(F) \subset \SO(n)$ is
  \emph{in general multi-valued} and, due to its variational
  characterization, satisfies:
  \begin{equation}
    \begin{matrix*}[l]
      &\text{(Objectivity)}
      &\hspace{.125\linewidth} & \rpolar_{\mu,\mu_c}(\,Q \cdot F\,)    &= &Q \cdot \rpolar_{\mu,\mu_c}(F)    &\quad\quad(\forall Q \in \SO(n))\;,\\
      &\text{(Isotropy)}
      &\hspace{.125\linewidth} &\rpolar_{\mu,\mu_c}(\,F\cdot Q\,)     &= &\rpolar_{\mu,\mu_c}(F)\cdot Q    &\quad\quad(\forall Q \in \SO(n))\;.
    \end{matrix*}
  \end{equation}
  For the particular dimensions $k = 2,3$, our explicit formulae imply
  (cf.~also Part I~\cite{Fischle:2015:OC2D}) that there exist particular
  instances $\lambda^* > 0$ and $F^* \in \GL^+(k)$ for which we have
  \begin{equation}
    \begin{matrix*}[l]
      & \text{(\underline{Broken} scaling invariance)}
      &\hspace{.125\linewidth}
      &\rpolar^\pm_{\mu,\mu_c}(\lambda^* \cdot F^*) &\neq &\rpolar(F^*)
      &, \quad\text{and}\\
      &\text{(\underline{Broken} inversion symmetry)}
      &\hspace{.125\linewidth}
      &\rpolar^\pm_{\mu,\mu_c}({F^*}^{-1})  &\neq &\rpolar(F^*)^{-1} &.
    \end{matrix*}
  \end{equation}
  This can be directly inferred from the partitioning of
  $\GL^+(k) = \domc_{\mu,\mu_c} \,\cup\, \domn_{\mu,\mu_c}$
  and the respective piecewise definition of the relaxed
  polar factor(s), see~\coref{cor:rpolar_formula}.
\end{rem}
We interpret these broken symmetries as a (generalized) tension-compression
asymmetry.
\end{minipage}

\subsection{The reduced Cosserat shear--stretch energy}

We now introduce the notion of a reduced energy which is realized by the
energy-minimizing rotations $\rpolar_{\mu,\mu_c}(F)$; see also~\remref{rem:capriz}.
\begin{defi}[Reduced Cosserat shear--stretch energy]
  The \textbf{reduced Cosserat shear--stretch energy} is defined as
  \begin{equation}
    W_{\mu,\mu_c}^{\rm red}: \GL^+(n) \to \RPosZ,
    \quad\quad
    W_{\mu, \mu_c}^{\rm red}(F) \;\eqdef\; \min_{R\,\in\,\SO(n)} \widehat{W}_{\mu,\mu_c}(R\,;F)\;.
  \end{equation}
\end{defi}
Besides the previous definition, we also have the following equivalent
means for the explicit computation of the reduced energy
\begin{equation}
  \begin{aligned}
    W_{\mu,\mu_c}^{\rm red}(F) &\;=\; \widehat{W}_{\mu,\mu_c}(\rpolar^\pm_{\mu,\mu_c}(F)\,;F)\;,\quad\text{and}\\
    W_{\mu,\mu_c}^{\rm red}(F)
    &\;=\;
    \widehat{W}_{\mu, \mu_c}^{\rm red}(D)
    \;\eqdef
    \min_{\widehat{R} \in \SO(n)} \widehat{W}_{\mu,\mu_c}(\widehat{R}\,; D)
    \;=\;
    \widehat{W}_{\mu,\mu_c}(\widehat{R}_{\mu,\mu_c}^\pm\,;D)\;.
  \end{aligned}
\end{equation}

We now approach the computation of the explicit representation of
$W_{\mu,\mu_c}^{\rm red}(F)$ by means of the equivalent expression
$\widehat{W}^{\rm red}_{\mu,\mu_c}(D)$. For the sake of brevity, we
set $c = \frac{2}{\sigma_1 + \sigma_2} = 1/\ump(F)$ and
$s = \sqrt{1 - c^2}$. This allows us to write the optimal relative
Cosserat rotations in a simple form in the computation of
\begin{equation}
  \widehat{R}_{1,0}^\pm D =
  \begin{pmatrix}
    c & \mp s & 0 \\
    \pm s &c & 0\\
    0 & 0 & 1\\
  \end{pmatrix}
  \begin{pmatrix}
    \sigma_1 & 0 & 0\\
    0 & \sigma_2 & 0\\
    0 & 0 & \sigma_3
  \end{pmatrix}
    = \begin{pmatrix}
    \sigma_1\cdot c & \mp\sigma_2\cdot s & 0\\
    \pm\sigma_1\cdot s &
    \sigma_2\cdot c & 0\\
    0 & 0 & \sigma_3\\
    \end{pmatrix}\;.
\end{equation}

From this, we compute the following symmetric and skew-symmetric parts:
\begin{equation}
  \label{eq:symrd_skewrd}
  \begin{aligned}
    &\sym\left(\widehat{R}_{1,0}^\pm D - \id\right)
    \,\;=\; \begin{pmatrix}
      \sigma_1\cdot c - 1& \frac{d_{12}}{2}\cdot s & 0\\
      \frac{d_{12}}{2}\cdot s& \sigma_2\cdot c - 1  & 0\\
      0 & 0 & \sigma_3 - 1\\
    \end{pmatrix}\;,\quad\text{and}\\
    &\skew\left(\widehat{R}_{1,0}^\pm D - \id\right)
    \;=\;
        {
          \def\arraycolsep{9pt}
          \begin{pmatrix}
            0                         & \mp\frac{s_{12}}{2} \cdot s &\hspace{0.2cm} 0\\
            \pm\frac{s_{12}}{2}\cdot s & 0                        & \hspace{0.2cm}0\\
            0                         & 0                        & \hspace{0.2cm}0\\
      \end{pmatrix}}\;.
  \end{aligned}
\end{equation}

\begin{lem}[The reduced Cosserat shear--stretch energy $\wsymred(F)$ in terms of singular values]
  \label{lem:wred10_singular}
  Let $F \in \GL^+(3)$ and $\sigma_1 > \sigma_2 > \sigma_3 > 0$ the
  ordered singular values of $F$. Then the reduced Cosserat shear--stretch
  energy $\wsymred(F)$ admits the following piecewise representation
  \begin{equation*}
    \wsymred(F) \,=\, \begin{cases}
      \,(\sigma_1 - 1)^2 + (\sigma_2 - 1)^2 + (\sigma_3 - 1)^2 = \hsnorm{U - \id}^2
      &,\,\text{if}\;\,\sigma_1 + \sigma_2 \leq 2,\,\text{i.e.},\, F \in \domc_{1,0}\,,\\
      \,\frac{1}{2}\,(\sigma_1 - \sigma_2)^2 + (\sigma_3 - 1)^2
      &,\,\text{if}\;\, \sigma_1 + \sigma_2 \geq 2,\,\text{i.e.},\,F \in \domn_{1,0}\,.\\
    \end{cases}
  \end{equation*}
\end{lem}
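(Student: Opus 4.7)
The plan is to exploit the principal axis transformation from Section~\ref{sec:minimization}, which by orthogonal invariance of the Frobenius norm yields $\wsymred(F) = \widehat{W}_{1,0}^{\rm red}(D)$ with $D = \diag(\sigma_1,\sigma_2,\sigma_3)$. Combined with Corollary~\ref{cor:rhat10}, which provides the explicit energy-minimizing relative rotations $\widehat{R}_{1,0}^\pm$ in each of the two regimes, and the fact that $\mu_c = 0$ removes the skew part of $\widehat{R}_{1,0}^\pm D - \id$ from the energy, the proof reduces to a direct computation of $\hsnorm{\sym(\widehat{R}_{1,0}^\pm D - \id)}^2$ in each case.

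In the classical domain $\sigma_1 + \sigma_2 \leq 2$, Corollary~\ref{cor:rhat10} gives $\widehat{R}_{1,0}^\pm = \id$, so $\widehat{R}_{1,0}^\pm D - \id = D - \id$ is already symmetric and diagonal. Its squared Frobenius norm equals the first claimed expression, and the identity $\hsnorm{D - \id}^2 = \hsnorm{U - \id}^2$ follows at once from $U = QDQ^T$ together with the conjugation-invariance of the Frobenius norm.

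In the non-classical domain $\sigma_1 + \sigma_2 \geq 2$, I would directly take the explicit symmetric part already recorded in equation~\eqref{eq:symrd_skewrd}. With the shorthand $c = 2/(\sigma_1 + \sigma_2) = 1/\ump(F)$ and $s = \sqrt{1 - c^2}$, its squared Frobenius norm reads
\begin{equation*}
  (\sigma_1 c - 1)^2 + (\sigma_2 c - 1)^2 + 2\,\left(\frac{d_{12}}{2}\right)^{\!2} s^2 + (\sigma_3 - 1)^2.
\end{equation*}
The crucial simplification hinges on the defining relation $c\,s_{12} = 2$. Expanding the first two squares gives $c^2(\sigma_1^2 + \sigma_2^2) - 2\,c\,s_{12} + 2 = c^2(\sigma_1^2 + \sigma_2^2) - 2$; substituting $\sigma_1^2 + \sigma_2^2 = \tfrac{1}{2}(s_{12}^2 + d_{12}^2)$ together with $c^2 s_{12}^2 = 4$ collapses this to $\tfrac{1}{2}\,c^2 d_{12}^2$. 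Adding the off-diagonal contribution $\tfrac{1}{2}\,d_{12}^2\,(1 - c^2)$ causes the $c^2$-terms to cancel, leaving $\tfrac{1}{2}\,d_{12}^2 + (\sigma_3 - 1)^2 = \tfrac{1}{2}(\sigma_1 - \sigma_2)^2 + (\sigma_3 - 1)^2$, as claimed.

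Overall the argument is routine algebra; the only mild subtlety is keeping careful track of the cancellation driven by $c\,s_{12} = 2$, but no conceptual obstacle is anticipated once Corollary~\ref{cor:rhat10} and the symmetric-part computation in~\eqref{eq:symrd_skewrd} are in hand.
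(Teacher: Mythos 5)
Your proof is correct and follows essentially the same route as the paper: reduce to principal-axis coordinates, insert the optimal relative rotations from Corollary~\ref{cor:rhat10}, and compute the squared Frobenius norm of the symmetric part recorded in~\eqref{eq:symrd_skewrd}, with $c\,s_{12}=2$ driving the cancellation. The only difference is a cosmetic variation in the final algebra — you substitute $\sigma_1^2+\sigma_2^2 = \tfrac12(s_{12}^2+d_{12}^2)$ where the paper groups terms into a single half-bracket — and both land on the same expression.
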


\textit{Proof.}
The classical piece of the energy is easily obtained by inserting the
polar factor $\polar(F)$ into the energy. To compute the non-classical
piece, we first recall that
  \begin{equation*}
    \hsnorm{\sym(\rpolar^{\pm}(F)^TF - \id)}^2 \;=\; \wsymred(F)
    \;=\; \widehat{W}^{\rm red}_{1,0}(D) \;=\; \hsnorm{\sym(\widehat{R}^{\pm}D - \id)}^2\;.
  \end{equation*}
  We compute the expression on the right hand side. To this end, we set $c = \frac{2}{\sigma_1 + \sigma_2}$
  and $s = \sqrt{1 - c^2}$ again and compute the Frobenius matrix norm
  of $\sym(\widehat{R}^{\pm}D - 1)$ which we have derived in~\eqref{eq:symrd_skewrd}. This gives us
  \begin{align*}
      \hsnorm{\sym\left(\widehat{R}^{\pm}D - \id\right)}^2
      &= (\sigma_1 c - 1)^2 + (\sigma_2 c - 1)^2
      + \frac{1}{2}\,(\sigma_1 - \sigma_2)^2 (1 - c^2) + (\sigma_3 - 1)^2\\
      &= \frac{1}{2} \left(4 + (\sigma_1 - \sigma_2)^2 - 4 c (\sigma_1 + \sigma_2) + c^2 (\sigma_1 + \sigma_2)^2\right) + (\sigma_3 - 1)^2\\
      &= \frac{1}{2}(\sigma_1 - \sigma_2)^2 + (\sigma_3 - 1)^2\;.\tag*{$\blacksquare$}
  \end{align*}

Our next step is to reveal the form of the reduced energy for the entire
non-classical parameter range $\mu > \mu_c \geq 0$ which involves the
parameter reduction lemma, but we have to be a bit careful.

\begin{rem}[Reduced energies and the parameter reduction lemma]
  The parameter reduction procedure described
  in~\cite[Lem.~2.2]{Fischle:2015:OC2D} is the key to the minimizers
  for general non-classical material parameters $\mu > \mu_c \geq 0$.
  It might be tempting, but we have to stress that the general form of
  the reduced energy cannot be obtained by rescaling the singular values
  $\sigma_i \mapsto \lambda_{\mu,\mu_c}^{-1}\sigma_i$
  in the singular value representation of $W^{\rm red}_{1,0}$.
\end{rem}

\begin{theo}[$W^{\rm red}_{\mu,\mu_c}$ as a function of the singular values]
  \label{theo:wmm_explicit}
Let $F \in \GL^+(n)$ and $\sigma_1 > \sigma_2 > \sigma_3 > 0$, the ordered
singular values of $F$ and let $\mu > \mu_c \geq 0$, i.e., a non-classical
parameter set. Then the reduced Cosserat shear--stretch energy
$W^{\rm red}_{\mu,\mu_c}: \GL^+(3) \to \RPosZ$ admits the following
explicit representation
\begin{align*}
  W^{\rm red}_{\mu,\mu_c}(F) \,=\,
  \begin{cases}
    \, \mu \left((\sigma_1 - 1)^2 + (\sigma_2 - 1)^2 + (\sigma_3 -
    1)^2\right) = \mu\,\hsnorm{U - \id}^2
    &,\; F \in \domc_{\mu,\mu_c}\;,\\
    \, \frac{\mu}{2}(\sigma_1 - \sigma_2)^2
    + \mu\, (\sigma_3 - 1)^2
    + \frac{\mu_c}{2}\left(\left(\sigma_1 + \sigma_2\right) - \sradmm\right)^2
    - \frac{\mu_c}{2}\cdot\rho_{\mu,\mu_c}^2
    &,\; F \in \domn_{\mu,\mu_c}\;.
  \end{cases}
\end{align*}
\end{theo}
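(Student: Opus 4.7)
The plan is to verify the two branches of the piecewise formula separately, using the explicit description of $\rpolar^{\pm}_{\mu,\mu_c}(F)$ from Corollary~\ref{cor:rpolar_formula}. For the classical branch $F \in \domc_{\mu,\mu_c}$, the minimizer is simply $\rpolar^{\pm}_{\mu,\mu_c}(F) = \polar(F)$, so $\rpolar^{T}F - \id = U - \id$ is symmetric and the skew contribution vanishes identically. Hence $W^{\rm red}_{\mu,\mu_c}(F) = \mu\,\hsnorm{U - \id}^{2} = \mu\sum_{i=1}^{3}(\sigma_i - 1)^{2}$, which is exactly the first line of the claim.

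For the non-classical branch $F \in \domn_{\mu,\mu_c}$, I would first pass to principal axes via Definition~\ref{defi:wrel}, so that $W^{\rm red}_{\mu,\mu_c}(F) = \widehat{W}_{\mu,\mu_c}(\widehat{R}^{\pm}_{\mu,\mu_c}\,;D)$. Combining the identity $\rpolar^{\pm}_{\mu,\mu_c}(F) = \rpolar^{\pm}_{1,0}(\widetilde{F}_{\mu,\mu_c})$ of Corollary~\ref{cor:rpolar_formula} with the $(1,0)$-formula in Corollary~\ref{cor:rhat10}, applied to the rescaled diagonal $\widetilde{D} = D/\lambda_{\mu,\mu_c}$, identifies $\widehat{R}^{\pm}_{\mu,\mu_c}$ as a rotation in the $(q_{1},q_{2})$-plane with $c \eqdef \cos\hat{\beta} = \sradmm/s_{12}$ and $s \eqdef \sin\hat{\beta} = \sqrt{1 - c^{2}}$, the third coordinate direction being left fixed. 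The same block structure as in~\eqref{eq:symrd_skewrd} (now with this $c$ and $s$) then produces
\begin{align*}
\hsnorm{\sym(\widehat{R}^{\pm}_{\mu,\mu_c}D - \id)}^{2} &= (\sigma_1 c - 1)^{2} + (\sigma_2 c - 1)^{2} + \tfrac{1}{2}d_{12}^{2}s^{2} + (\sigma_3 - 1)^{2}\;,\\
\hsnorm{\skew(\widehat{R}^{\pm}_{\mu,\mu_c}D - \id)}^{2} &= \tfrac{1}{2}s_{12}^{2}s^{2}\;.
\end{align*}

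Substituting the key relations $c\,s_{12} = \sradmm$ and $s^{2} = 1 - c^{2}$ collapses the symmetric part to $\tfrac{1}{2}(\sradmm - 2)^{2} + \tfrac{1}{2}d_{12}^{2} + (\sigma_3 - 1)^{2}$ and the skew part to $\tfrac{1}{2}(s_{12}^{2} - \sradmm^{2})$. Weighting by $\mu$ and $\mu_{c}$ respectively and summing expresses $W^{\rm red}_{\mu,\mu_c}(F)$ as a polynomial in $s_{12}$ and $\sradmm$; the defining scalar identity $(\mu - \mu_{c})\,\sradmm = 2\mu$ is then invoked both to rewrite $\tfrac{\mu}{2}(\sradmm - 2)^{2}$ and to regroup the $\mu_{c}$-terms into the compressive/expansive form $\tfrac{\mu_{c}}{2}(s_{12} - \sradmm)^{2} - \tfrac{\mu_{c}}{2}\sradmm^{2}$ appearing in the statement.

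I expect this final algebraic regrouping to be the main source of friction: $(\mu - \mu_{c})\,\sradmm = 2\mu$ admits many equivalent reformulations and one must carefully track both the quadratic and the linear contributions in $s_{12}$ so as to assemble the single squared term $(s_{12} - \sradmm)^{2}$. A more conceptual alternative would be to establish first the affine decomposition $W_{\mu,\mu_c}(R;F) = (\mu - \mu_{c})\,\lambda_{\mu,\mu_c}^{2}\,W_{1,0}(R;\widetilde{F}_{\mu,\mu_c}) + C(F)$ with an $R$-independent constant $C(F)$ obtained from the trace/Frobenius-norm identity $\hsnorm{R^{T}F}^{2} = \hsnorm{F}^{2}$, and then plug in the explicit reduced energy from Lemma~\ref{lem:wred10_singular}; as the remark preceding the theorem emphasizes, the correction $C(F)$ must not be forgotten, which is the price paid for this shortcut.
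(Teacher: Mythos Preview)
Your proposal is correct and follows essentially the same route as the paper: insert $\polar$ for the classical branch, and for the non-classical branch pass to principal axes, use the block form~\eqref{eq:symrd_skewrd} with $\tilde c=\sradmm/s_{12}$, reduce the symmetric and skew norms to $\tfrac12(\sradmm-2)^2+\tfrac12 d_{12}^2+(\sigma_3-1)^2$ and $\tfrac12(s_{12}^2-\sradmm^2)$, and then invoke $(\mu-\mu_c)\sradmm=2\mu$ to regroup the constants into $\tfrac{\mu_c}{2}(s_{12}-\sradmm)^2-\tfrac{\mu_c}{2}\sradmm^2$. Your remark on the alternative affine decomposition via $W_{1,0}(R;\widetilde F_{\mu,\mu_c})$ plus an $R$-independent correction is a valid shortcut the paper does not pursue.
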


\textit{Proof.} In order to obtain the classical part of the energy it
suffices to insert $\polar{}$ into the energy. For the non-classical
piece, we insert the optimal relative rotations
$\widehat{R}_{\mu,\mu_c}^\pm$
into $\widehat{W}_{\mu,\mu_c}(\widehat{R}\,;D)$. This amounts to replace
$c \mapsto \tilde{c} = \frac{\sradmm}{\sigma_1 + \sigma_2}$ and
$s \mapsto \tilde{s} = \sqrt{1 - \tilde{c}^2}$ in our preparatory
calculation~\eqref{eq:symrd_skewrd}. This yields the following
contributions:
\begin{align}
  \mu\;\hsnorm{\sym(\widehat{R}_{\mu,\mu_c}^\pm D - \id)}^2
  &\;=\; \frac{\mu}{2}\; d_{12}^2
  + \mu\, (\sigma_3 - 1)^2
  + \frac{\mu}{2} (\sradmm -\, 2)^2\;,\\
  \mu_c\;\hsnorm{\skew(\widehat{R}_{\mu,\mu_c}^\pm D - \id)}^2
  &\;=\;
  \frac{\mu_c}{2}\;s_{12}^2\;\tilde{s}^2
  \;=\;
  \frac{\mu_c}{2}s_{12}^2 - \frac{\mu_c}{2} \sradmm^2
  \;.
\end{align}
Finally, adding only the constant part of the symmetric contribution
to the complete contribution due to the skew-symmetric part, we
obtain
\begin{equation}
  \frac{\mu}{2} (\sradmm - 2)^2 +\frac{\mu_c}{2} s_{12}^2 - \frac{\mu_c}{2} \sradmm^2
  =\frac{\mu_c}{2}\left(s^2_{12} - 2\sradmm\right)
  =\frac{\mu_c}{2}\left(s_{12} - \sradmm\right)^2
  - \frac{\mu_c}{2}\;\rho_{\mu,\mu_c}^2\;.
  \tag*{\raisebox{-0.1cm}{$\blacksquare$}}
\end{equation}

The last step of the preceding proof is interesting in its own right.
\begin{rem}[On $\mu_c$ as a penalty weight]
Let us consider the contribution of the skew-term to $W^{\rm red}_{\mu,\mu_c}$
given by
$$
\frac{\mu_c}{2}\left(\left(\sigma_1 + \sigma_2\right) - \sradmm\right)^2
$$
as a penalty term for $F \in \GL^+(3)$ arising for material parameters
in the non-classical parameter range $\mu > \mu_c \geq 0$. This leads to
a simple but interesting observation for strictly positive $\mu_c > 0$.
The minimizers $F \in \GL^+(3)$ \emph{for the penalty term} satisfy
the bifurcation criterion
$$\sigma_1 + \sigma_2 = \sradmm$$
for $\rpolar^\pm_{\mu,\mu_c}(F)$. In this case
$\widehat{R}_{\mu,\mu_c}^\pm = \id$ which implies that
$\widehat{R}_{\mu,\mu_c}^\pm D - \id \in \Sym(3)$, i.e.,
it is symmetric. Hence, the skew-part vanishes entirely which
minimizes the penalty. In numerical applications, a rotation
field $\mrot$ approximating $\rpolar^\pm(F)$ can be expected to
be unstable in the vicinity of the branching point
$\sigma_1 + \sigma_2 \approx \sradmm$.
Hence, a penalty which explicitly rewards an approximation to the
bifurcation point seems to be a delicate property. In strong contrast,
for the case when the Cosserat couple modulus is zero, i.e.,
$\mu_c = 0$, the penalty term vanishes entirely. This hints at a
possibly more favorable qualitative behavior of the model in that
case; cf.~\cite{Neff_ZAMM05}.
\end{rem}

\subsection{Geometric aspects of the reduced Cosserat shear--stretch energy}

We recall that the tangent bundle $T\SO(n)$ is isomorphic to the
product $\SO(n)\times\so(n)$ as a vector bundle. This is commonly referred
to as the left trivialization, see, e.g.,~\cite{Duistermaat:2012:LG}.
With this we can comfortably minimize over the tangent bundle in the following
lemma which sets the course for our next theorem.
\begin{lem}
\label{lem:dist_ctb_technical}
Let $F \in \R^{n\times n}$. Then
\begin{align*}
  \inf_{\substack{R\,\in\,\SO(3)\\ A\,\in\,\so(3)}} \norm{R^TF - \id - A}^2 \quad=\quad \min_{R\,\in\,\SO(3)} \norm{\sym(R^TF - \id)}^2 \quad \defeq \quad \min_{R\,\in\,\SO(3)} W_{1,0}(R\,;F) \;.
\end{align*}
\end{lem}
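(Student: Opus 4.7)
The plan is to exploit the orthogonal decomposition of $\R^{n\times n}$ into symmetric and skew-symmetric parts with respect to the Frobenius inner product, and to show that for each fixed $R\in\SO(3)$ the infimum over $A\in\so(3)$ can be computed explicitly and equals $\|\sym(R^TF-\id)\|^2$. Once this pointwise (in $R$) identity is established, taking the minimum over $R\in\SO(3)$ on both sides concludes the proof.

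More concretely, I would first recall that for any $M\in\R^{n\times n}$ the decomposition $M=\sym(M)+\skew(M)$ is orthogonal, because $\scalprod{S}{A}=\tr{S^TA}=0$ whenever $S\in\Sym(n)$ and $A\in\so(n)$. Applied to $M\eqdef R^TF-\id-A$, observing that $\sym(\id)=\id$ and $\skew(\id)=0$, this gives
\begin{equation*}
  \|R^TF-\id-A\|^2 \;=\; \|\sym(R^TF-\id)\|^2 \,+\, \|\skew(R^TF-\id)-A\|^2 \;=\; \|\sym(R^TF-\id)\|^2 \,+\, \|\skew(R^TF)-A\|^2,
\end{equation*}
since the linear map $A\mapsto A$ from $\so(n)$ into itself and the $\sym/\skew$ splitting interact as stated. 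Because the first term is independent of $A$ and the second is minimized (and attains zero) precisely for $A=\skew(R^TF)\in\so(3)$, we obtain
\begin{equation*}
  \inf_{A\in\so(3)} \|R^TF-\id-A\|^2 \;=\; \|\sym(R^TF-\id)\|^2\;.
\end{equation*}

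Next I would take the infimum over $R\in\SO(3)$ on both sides. Since $\SO(3)$ is compact and $R\mapsto \|\sym(R^TF-\id)\|^2$ is continuous, the infimum on the right is attained, so it may be written as a minimum. For the left-hand side the joint infimum over $(R,A)\in\SO(3)\times\so(3)$ equals $\inf_R \inf_A$, and the inner infimum is attained for each $R$ by the choice $A=\skew(R^TF)$; hence the joint infimum equals $\min_R \|\sym(R^TF-\id)\|^2 = \min_R W_{1,0}(R;F)$ by the definition of $W_{1,0}$. This yields the claimed chain of equalities.

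There is really no hard step here: the only subtlety is checking that the $\sym/\skew$ orthogonality survives the subtraction of $\id$ and $A$ (handled by $\sym(\id)=\id$, $\skew(A)=A$) and that swapping the order of minimization is legitimate, which is immediate once the inner infimum is attained. Thus the argument is a short direct computation rather than something requiring a deeper technique.
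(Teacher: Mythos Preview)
Your proof is correct and follows essentially the same approach as the paper: split $R^TF-\id-A$ orthogonally into its symmetric and skew-symmetric parts, minimize the skew part over $A\in\so(3)$ (attaining zero at $A=\skew(R^TF-\id)=\skew(R^TF)$), and then invoke compactness of $\SO(3)$ to replace the remaining infimum by a minimum. The paper's argument is identical in structure, so there is nothing to add.
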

\begin{proof}
  For all $R\,\in\,\SO(3)$, the infimum of $\hsnorm{\skew(R^TF - \id) - A}^2$
  over all skew symmetric $A$ is obviously attained at $A = \skew(R^TF -
  \id)$. Therefore
  \begin{align*}
    \smash{\inf_{{\substack{R\,\in\,\SO(3)\\A\,\in\,\so(3)}}}} \, \hsnorm{R^TF - \id - A}^2
    &= \inf_{R\,\in\,\SO(3)} \, \inf_{A\,\in\,\so(3)} \Big\lbrace\hsnorm{\sym(R^TF - \id -
      A)}^2 + \hsnorm{\skew(R^TF - \id - A)}^2\Big\rbrace \notag\\
    &= \inf_{R\,\in\,\SO(3)} \, \Big\lbrace \hsnorm{\sym(R^TF - \id)}^2 +
    \inf_{A\,\in\,\so(3)} \hsnorm{\skew(R^TF - \id) - A}^2 \Big\rbrace \notag\\
    &= \inf_{R\,\in\,\SO(3)} \, \hsnorm{\sym(R^TF - \id)}^2\;.
  \end{align*}
  Since $\SO(3)$ is compact and $W_{1,0}(R\,;F)$ is continuous,
  the infimum is attained by a minimizer.\qedhere
\end{proof}

The preceding lemma leads us to a nice geometric characterization of
the reduced Cosserat shear--stretch energy which we find quite remarkable.
It might even be useful for the case $n \geq 4$ although this is somewhat
far-fetched.
\begin{cor}[Characterization of $W_{1,0}^{\rm red}$ as a distance]
  \label{cor:wred10_distance}
  Let $n \geq 2$ and consider $F \in \GL^+(n)$ with singular values
  $\sigma_1 \geq \sigma_2 \geq \ldots \geq \sigma_n > 0$ not necessarily
  distinct. Then the reduced Cosserat shear--stretch energy
  $W_{1,0}^{\rm red}: \GL^+(n) \to \RPosZ$
  admits the following characterization as a distance:
  \begin{equation}
    W^{\rm red}(F) \quad=\quad \dist_{\rm euclid}^2\big(F,\, \SO(n)\left(\id + \so(n)\right)\big)\;.
  \end{equation}
  Here, $\dist_{\rm euclid}$ denotes the euclidean distance function.
\end{cor}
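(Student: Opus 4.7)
The plan is to reduce the claim to \leref{lem:dist_ctb_technical} (extended from $n=3$ to arbitrary $n \geq 2$, which is immediate since the proof does not use the dimension) and then exploit the left-invariance of the Frobenius norm under orthogonal transformations.

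First, I would observe that by the preceding lemma, we have
\begin{equation*}
  W^{\rm red}_{1,0}(F) \;=\; \inf_{\substack{R\,\in\,\SO(n)\\ A\,\in\,\so(n)}} \hsnorm{R^TF - \id - A}^2\;.
\end{equation*}
Since the restriction on $\SO(n)$ in the original statement of \leref{lem:dist_ctb_technical} to dimension three plays no role in the given argument, the identity carries over verbatim; the only use of compactness is to assert attainment by a minimizer, and $\SO(n)$ is compact for every $n \geq 2$.

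Second, I would use that the Frobenius norm $\hsnorm{\cdot}$ is invariant under left multiplication by elements of $\Oo(n)$ (in particular $\SO(n)$), since $\hsnorm{QX}^2 = \tr{X^TQ^TQX} = \hsnorm{X}^2$. Applying this with $Q = R$ to each summand gives
\begin{equation*}
  \hsnorm{R^TF - \id - A}^2 \;=\; \hsnorm{R\left(R^TF - \id - A\right)}^2 \;=\; \hsnorm{F - R(\id + A)}^2\;.
\end{equation*}
Taking the infimum over $R \in \SO(n)$ and $A \in \so(n)$ on both sides yields
\begin{equation*}
  W^{\rm red}_{1,0}(F)
  \;=\; \inf_{\substack{R\,\in\,\SO(n)\\A\,\in\,\so(n)}} \hsnorm{F - R(\id + A)}^2
  \;=\; \inf_{M\,\in\,\SO(n)(\id + \so(n))} \hsnorm{F - M}^2\;,
\end{equation*}
where the last equality is simply the definition of the set $\SO(n)(\id + \so(n)) \eqdef \setdef{R(\id + A)}{R \in \SO(n),\; A \in \so(n)}$. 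Recognizing the right hand side as $\dist_{\rm euclid}^2(F, \SO(n)(\id + \so(n)))$ concludes the argument.

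There is no genuine obstacle: the proof is essentially a one-line application of orthogonal invariance once the previous lemma is in hand. The only mild subtlety worth remarking on is that the set $\SO(n)(\id + \so(n))$ is not closed (its closure contains the non-injective images as well, and it is noncompact since $\so(n)$ is a linear subspace), so a priori one has only an infimum; however, the previous lemma shows that this infimum is attained, so the $\dist_{\rm euclid}^2$ notation is justified (and one may equivalently write $\min$).
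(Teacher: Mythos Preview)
Your argument is correct and essentially identical to the paper's own proof: invoke the preceding lemma, then use orthogonal invariance of the Frobenius norm to rewrite $\hsnorm{R^TF - \id - A}^2 = \hsnorm{F - R(\id + A)}^2$ and recognize the result as a squared euclidean distance.

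One small correction to your closing remark: the set $\SO(n)(\id + \so(n))$ \emph{is} closed. If $R_k(\id + A_k) \to M$, compactness of $\SO(n)$ gives a subsequential limit $R_k \to R$, whence $\id + A_k = R_k^T R_k(\id + A_k) \to R^T M$; since $\so(n)$ is closed, $R^T M - \id \in \so(n)$ and $M$ lies in the set. (Also, $\id + A$ is always invertible for $A \in \so(n)$, since its eigenvalues are $1 + i\theta$ with $\theta \in \R$.) So the infimum is genuinely a minimum for set-theoretic reasons as well, not only via the lemma.
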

\textit{Proof.} First note that
\begin{align*}
  W^{\rm red}_{1,0}(F) &\eqdef \min_{R\,\in\,\SO(n)}\norm{\sym(R^TF - \id)}^2
  \stackrel{(\text{Lem. \ref{lem:dist_ctb_technical}})}{=} \smash{\inf_{\substack{R\,\in\,\SO(3)\\ A\,\in\,\so(3)}}} \norm{R^TF - \id - A}^2\\
  &= \inf_{\substack{R\,\in\,\SO(3)\\ A\,\in\,\so(3)}} \norm{R(R^TF - \id - A)}^2\;.
\end{align*}
The last step ist justified by the orthogonal invariance of the Frobenius
norm $\norm{\cdot}$. Carrying out the multiplications on the right hand
side, we are lead to the conclusion
\begin{align*}
  \inf_{\substack{R\,\in\,\SO(3)\\ A\,\in\,\so(3)}} \norm{F - R\,\id - RA}^2
  = \inf_{\substack{R\,\in\,\SO(3)\\ A\,\in\,\so(3)}} \norm{F - R(\id + A)}^2
  \defeq \dist_{\rm euclid}^2(F, \SO(n)\left(\id + \so(n)\right))\;.\tag*{\raisebox{-0.4cm}{$\blacksquare$}}
\end{align*}

\begin{figure}[t]
  \begin{center}
    \includegraphics[width=7.2cm,clip=false,trim=3.2cm 2cm 3.2cm 3.2cm]{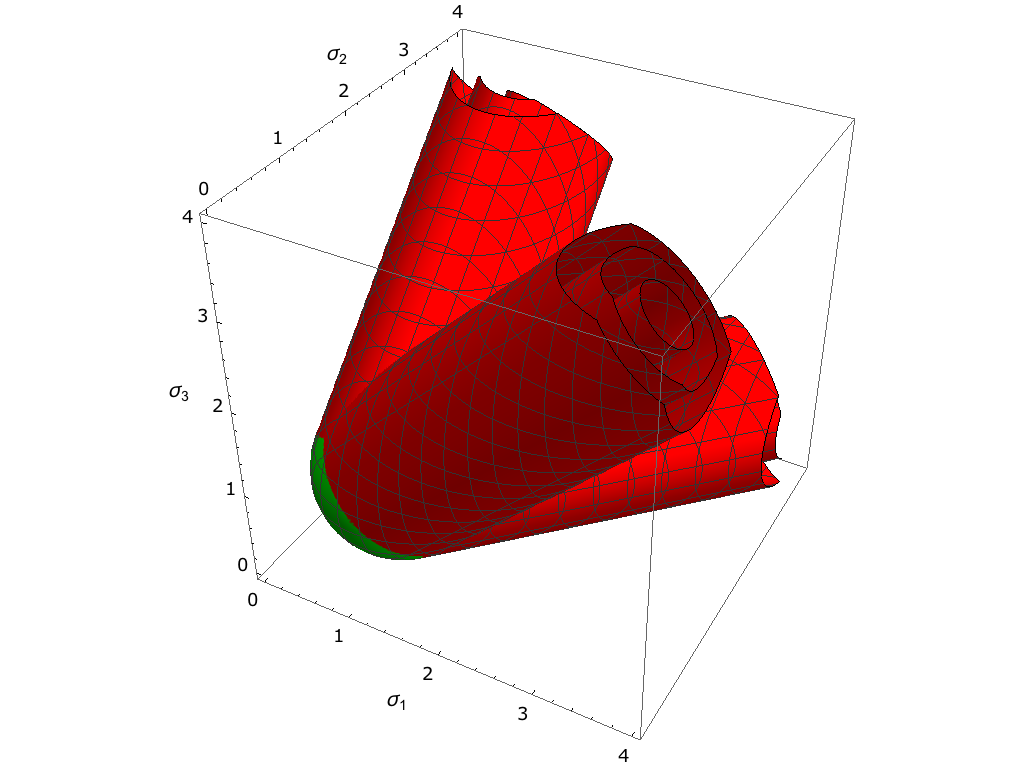}
    \includegraphics[width=7.2cm,clip=false,trim=3.2cm 2cm 3.2cm 3.2cm]{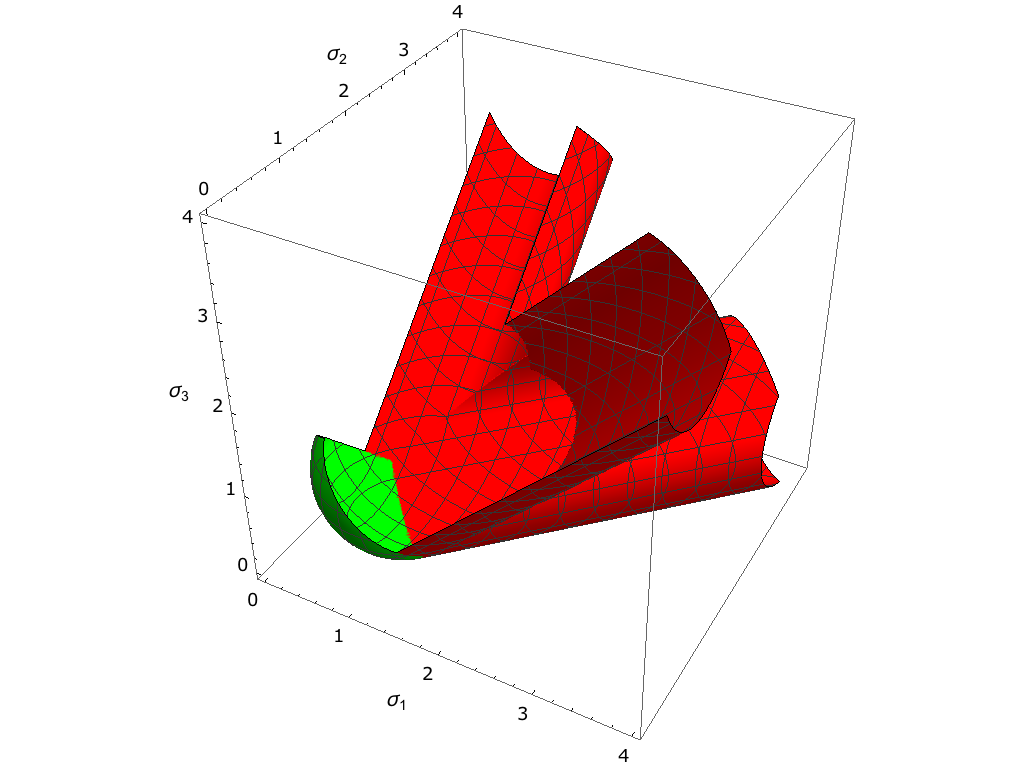}
  \end{center}
  \caption[Energy isosurfaces of {$W^{\rm red}_{1,0}$} in the space of
    singular values of $F$]{\label{fig:wred_isosurf}    Energy isosurfaces of $W^{\rm red}_{1,0}$ considered as
    a function of the \emph{unordered} singular values
    $\sigma_1,\sigma_2,\sigma_3 > 0$ of $F \in \GL^+(3)$.
    The displayed contour levels are $0.1$, $0.4$ and $0.8$.
    On the right, we have removed a piece from the non-classical
    cylindrical parts (red) of the energy level $0.8$ which reveals
    the spherical shell of the classical part (green). Note that
    a computation of these level surfaces via Monte Carlo minimization
    yields the same result (but at a much lower resolution).}
\end{figure}

\begin{rem}[Zero reduced energy and tension-compression asymmetry]
  \label{rem:zero_tca}
  A sharp look at~\leref{lem:wred10_singular} is sufficient to see that
  the $0$-energy level of $W^{\rm red}_{1,0}$ precisely corresponds
  to singular value tuples of the form
  $(s,s,1)$, $s \in [1,\infty)$.\footnote{Technically, our derivation
      of~\leref{lem:wred10_singular} does not extend to the case of
      multiple singular values, but the characterization as
      a distance function in~\coref{cor:wred10_distance} does not
      have this limitation.}
    In our~\figref{fig:wred_isosurf} tuples of this type (and permutations
    thereof) correspond to the axes of the cylindrical sheets of the
    isosurfaces. Let us now consider $X = R(\id + A)$, $R \in \SO(3)$,
    $A \in \so(3)$, which has the squared singular values
    $(\sigma_1^2,\sigma_2^2,\sigma_3^2) = \left(1 + \hsnorm{A}^2, 1 + \hsnorm{A}^2, 1\right)$. Clearly, such a matrix $X$ does not generate
    any reduced Cosserat shear--stretch energy at all -- in perfect accord
    with~\coref{cor:wred10_distance}. Geometrically, $U(X) \eqdef \sqrt{X^TX}$
    induces a homogeneous blow-up (i.e., a rescaling of arbitrary positive
    magnitude) of the plane of maximal strain $\mathrm{P}^{\rm mp}(X)$
    while preserving the distance of any given point to this plane.
    Furthermore, there is no possibility of similar energy savings in the
    compressive range for $F \in \GL^+(3)$ where the classical piece
    of $W^{\rm red}_{1,0}$ is active. It seems to us that this makes
    a good case for a quite remarkable type of tension-compression asymmetry.
\end{rem}

\subsection{Alternative criteria for the existence of non-classical solutions}
For $\mu > \mu_c > 0$, i.e., for strictly positive $\mu_c > 0$,
the singular radius satisfies $\sradmm \eqdef \frac{2\mu}{\mu - \mu_c} > 2$.
We now define a quite similar constant, namely
\begin{align}
\zeta_{\mu,\mu_c} \eqdef \sradmm - \;\rho_{1,0}
= \frac{2\mu_c}{\mu - \mu_c} > 0\;.
\end{align}
Furthermore, we define the $\epsilon$-neighborhood of a set
$\mathcal{X} \subseteq \R^{n\times n}$ relative to the euclidean
distance function as
$$
N_{\epsilon}(\mathcal{X})
\;\eqdef\;
\setdef{Y \in \R^{n \times n}}{\dist_{\rm euclid}(Y, \mathcal{X}) < \epsilon}\;.
$$

\begin{figure}[h!]
  \begin{center}
    \scalebox{0.75}{
    \begin{tikzpicture}
      \node (Pic) at (0,0)
            {\includegraphics[width=8cm]{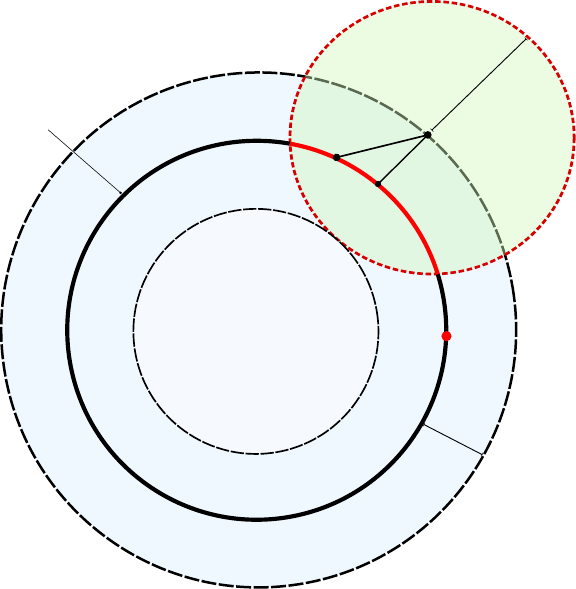}};
            \node[scale=1.2] (Id) at (2.5,-0.55) {$\id$};
            \node[rotate=-33] (UEpsSO3) at (-2.1, -3.05) {$N_\epsilon(\SO(3))$};
            \node[rotate=-25] (HalfEps) at (2.35, -1.85) {$\epsilon$};
            \node[rotate=45] (Eps) at (2.6, 3.1) {$\delta$};
            \node (SO3) at (-3.4,2.6) {$\SO(3)$};
            \node (F) at (2.0, 2.6) {$F$};
            \node (UEpsF) at (3.25,2) {$N_{\delta}(F)$};
            \node (R) at (0.5,1.7) {$R$};
            \node[rotate=12] (HR) at (1.2,2.2) {$F-R$};
    \end{tikzpicture}}
  \end{center}
  \caption{Illustration of a euclidean $\epsilon$-neighborhood of $\SO(3) \subset
    \R^{3\times 3}$.}
\end{figure}

\begin{lem}[Classical $\SO(3)$-neighborhood for $\mu_c > 0$]
  Let $\mu > \mu_c > 0$, $F \in \GL^+(3)$ and
  $\zeta_{\mu,\mu_c} \eqdef \frac{2\mu_c}{\mu - \mu_c} > 0$.
  Then we have the following inclusion
\begin{align}
  N_{\frac{1}{2}\zeta_{\mu,\mu_c}^2}(\SO(3)) \quad\subset\quad \domc_{\mu,\mu_c}\;.
\end{align}
In other words, for all $F \in \GL^+(3)$ satisfying $\dist_{\rm euclid}(F, \SO(3)) = \hsnorm{U - \id}^2 < \frac{1}{2}\zeta_{\mu,\mu_c}^2$,
the polar factor $\polar{}$ is the unique minimizer of $W_{\mu,\mu_c}(R\,;F)$.
\begin{proof}
Since $\dist_{\rm euclid}^2(F,\SO(3)) = \norm{U - \id}^2 =\sum_{i=1}^3 (\sigma_i - 1)^2$
by Grioli's theorem~\cite{Neff_Grioli14}, we find
\begin{align*}
  \dist_{\rm euclid}^2(F,\SO(3)) < \frac{1}{2}\;\zeta_{\mu,\mu_c}^2
  \quad\Longrightarrow&\quad
  2\left((\sigma_1 - 1)^2 + (\sigma_2 - 1)^2 + (\sigma_3 - 1)^2\right) < \zeta_{\mu,\mu_c}^2\\
  \quad\Longrightarrow&\quad
  2\left((\sigma_1 - 1)^2 + (\sigma_2 - 1)^2\right) < \zeta_{\mu,\mu_c}^2\;.
\end{align*}
Further, $0 \leq (a - b)^2 = a^2 + b^2 - 2ab$ implies $2\,(a^2 + b^2) \geq a^2 + b^2 + 2ab$
and it follows that
\begin{align}
(\sigma_1 - 1)^2 + (\sigma_2 - 1)^2 + 2(\sigma_1 - 1)(\sigma_2 - 1) < \zeta_{\mu,\mu_c}^2\;.
\end{align}
Completing the square and taking square roots on both sides, we find
\begin{align}
\left((\sigma_1 - 1) + (\sigma_2 - 1)\right)^2 < \zeta^2_{\mu,\mu_c}
\quad\Longrightarrow\quad
\pm\abs{(\sigma_1 - 1) + (\sigma_2 - 1)} < \zeta_{\mu,\mu_c}\;.
\end{align}
Inserting $\zeta_{\mu,\mu_c} \eqdef \sradmm -\;2$, we obtain
$(\sigma_1 - 1) + (\sigma_2 - 1) \;<\; \rho_{\mu_,\mu_c} -\; 2$.
This implies $\sigma_1 + \sigma_2 < \rho_{\mu_,\mu_c}$
and hence $F \in \domc_{\mu,\mu_c}$.\qedhere
\end{proof}
\end{lem}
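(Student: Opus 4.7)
The strategy is to translate the distance hypothesis into a condition on the singular values of $F$ and then show that this condition forces $\sigma_1 + \sigma_2 < \sradmm$, which by the alternative characterization of $\domc_{\mu,\mu_c}$ given earlier places $F$ in the classical domain. Uniqueness of the minimizer $\polar(F)$ then follows since on $\domc_{\mu,\mu_c}$ we have $\widehat{R}^\pm_{1,0} = \id$ from Corollary~\ref{cor:rhat10}, and the parameter reduction lemma transfers this to all non-classical weights.

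The first step is to invoke Grioli's theorem, which identifies the squared euclidean distance to $\SO(3)$ with the stretch discrepancy:
\begin{equation*}
  \dist_{\rm euclid}^2(F,\SO(3)) \;=\; \hsnorm{U - \id}^2 \;=\; (\sigma_1 - 1)^2 + (\sigma_2 - 1)^2 + (\sigma_3 - 1)^2\,.
\end{equation*}
The hypothesis $\dist_{\rm euclid}^2(F,\SO(3)) < \tfrac{1}{2}\zeta^2_{\mu,\mu_c}$ therefore gives, after dropping the non-negative third summand, the bound $(\sigma_1 - 1)^2 + (\sigma_2 - 1)^2 < \tfrac{1}{2}\zeta^2_{\mu,\mu_c}$.

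The second step is the elementary quadratic inequality $(a+b)^2 \leq 2(a^2+b^2)$, valid for all real $a,b$ and equivalent to $(a-b)^2 \geq 0$. Applied with $a = \sigma_1 - 1$ and $b = \sigma_2 - 1$ this yields
\begin{equation*}
  \bigl((\sigma_1 - 1) + (\sigma_2 - 1)\bigr)^2 \;\leq\; 2\bigl((\sigma_1 - 1)^2 + (\sigma_2 - 1)^2\bigr) \;<\; \zeta^2_{\mu,\mu_c}\,.
\end{equation*}
Taking square roots and recalling the definition $\zeta_{\mu,\mu_c} \eqdef \sradmm - 2$, I obtain $\sigma_1 + \sigma_2 - 2 < \sradmm - 2$, i.e.\ $\sigma_1 + \sigma_2 < \sradmm$.

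The final step is to identify this with membership in the classical domain: by the alternative characterization of $\domc_{\mu,\mu_c}$ listed earlier, strict inequality $\sigma_1 + \sigma_2 < \sradmm$ means $F$ lies in the interior of $\domc_{\mu,\mu_c}$ (and in particular away from the branching locus $\sigma_1 + \sigma_2 = \sradmm$ where the two non-classical branches merge with $\polar$). On this set, Corollary~\ref{cor:rhat10} together with Corollary~\ref{cor:rpolar_formula} gives $\rpolar^\pm_{\mu,\mu_c}(F) = \polar(F)$ as the unique energy-minimizer. There is no genuine obstacle in the argument; the only point requiring a little care is the passage from the cheap bound on two of the three squared deviations to a bound on the sum $\sigma_1 + \sigma_2$, which is exactly what the AM--QM-type inequality delivers with the factor $\tfrac{1}{2}$ that explains the square root in the statement of the neighborhood radius.
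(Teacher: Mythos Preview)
Your proof is correct and follows essentially the same route as the paper: invoke Grioli's theorem to express the distance in singular values, drop the $(\sigma_3-1)^2$ term, apply the elementary inequality $(a+b)^2\le 2(a^2+b^2)$, and conclude $\sigma_1+\sigma_2<\sradmm$. Your final paragraph, which explicitly ties membership in $\domc_{\mu,\mu_c}$ to the uniqueness of $\polar(F)$ via Corollaries~\ref{cor:rhat10} and~\ref{cor:rpolar_formula}, is a welcome addition that the paper's proof leaves implicit.
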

Note that the preceding proof can be quite easily adapted to the planar
case $n = 2$ presented in~\cite{Fischle:2015:OC2D}.

\begin{lem}
\label{lem:disc:SL3_dom_nc}
Let $F \in \SL(3)$, i.e., $\det{F} = \sigma_1\sigma_2\sigma_3 = 1$,
where $\sigma_1 \geq \sigma_2 \geq \sigma_3 > 0$ are ordered
singular values of $F$, not necessarily distinct. Then
\begin{align}
  \SL(3) \quad\subset\quad \domn_{1,0}\;,
\end{align}
i.e., $F$ induces a strictly non-classical minimizer. Equivalently,
$\det{F} = 1$ implies the estimate $\sigma_1 + \sigma_2 \geq 2$.
\begin{proof}
The inequality for the geometric and arithmetic mean shows that
\begin{equation}
  \frac{\sigma_1 + \sigma_2 + \sigma_3}{3}
  \quad\geq\quad
  (\sigma_1\,\sigma_2\,\sigma_3)^\frac{1}{3}
  \quad=\quad
  1
\;.
\end{equation}
It follows that $\sigma_1 + \sigma_2 \geq 3 - \sigma_3$ which implies the claim
for $\sigma_3 \leq 1$. Due to the ordering $\sigma_1 \geq \sigma_2 \geq \sigma_3 > 0$, the case $\sigma_3 > 1$ contradicts our assumption $\det{F} > 1$.\qedhere
\end{proof}
\end{lem}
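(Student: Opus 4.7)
The plan is to reduce the claim to a two-variable \textsc{AM--GM} inequality and dispose of the remaining case by a monotonicity argument coming from the ordering of the singular values. The determinant constraint reads $\sigma_1 \sigma_2 \sigma_3 = 1$, so in particular $\sigma_1 \sigma_2 = 1/\sigma_3$. My strategy is to split into two cases according to whether $\sigma_3 \leq 1$ or $\sigma_3 > 1$.

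First I would dispatch the case $\sigma_3 > 1$ immediately: the ordering $\sigma_1 \geq \sigma_2 \geq \sigma_3 > 1$ would force $\sigma_1\,\sigma_2\,\sigma_3 > 1$, contradicting $F \in \SL(3)$. Hence only the case $\sigma_3 \leq 1$ remains, and this is where the substantive inequality has to be produced.

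For $\sigma_3 \leq 1$ I would invoke the classical two-variable \textsc{AM--GM} inequality applied to the pair $(\sigma_1,\sigma_2)$ in combination with the identity $\sigma_1\sigma_2 = 1/\sigma_3$:
\begin{equation*}
\sigma_1 + \sigma_2 \;\geq\; 2\sqrt{\sigma_1\sigma_2} \;=\; \frac{2}{\sqrt{\sigma_3}} \;\geq\; 2\;,
\end{equation*}
since $1/\sqrt{\sigma_3} \geq 1$ whenever $\sigma_3 \leq 1$. This is exactly the criterion $s_{12} \geq 2$ characterizing $\domn_{1,0}$ in the non-classical limit case $(\mu,\mu_c) = (1,0)$ as recorded earlier in the paper, hence $F \in \domn_{1,0}$.

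I do not anticipate any serious obstacle. The only minor subtlety is remembering that the case split is truly needed: the estimate $\sigma_1 + \sigma_2 \geq 2/\sqrt{\sigma_3}$ alone degrades below $2$ when $\sigma_3 > 1$, so one must use the ordering hypothesis to discard that regime via the determinant constraint. An alternative route would be to apply the three-variable \textsc{AM--GM} $\sigma_1+\sigma_2+\sigma_3 \geq 3(\sigma_1\sigma_2\sigma_3)^{1/3} = 3$, but isolating $\sigma_1+\sigma_2$ from this still requires a bound on $\sigma_3$, so the case split along $\sigma_3 \lessgtr 1$ appears unavoidable either way.
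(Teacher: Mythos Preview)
Your proof is correct and follows essentially the same approach as the paper: both rely on an AM--GM inequality together with the case split $\sigma_3 \leq 1$ versus $\sigma_3 > 1$, the latter being disposed of via the ordering and the determinant constraint. The only cosmetic difference is that you apply the two-variable AM--GM to $(\sigma_1,\sigma_2)$ and then use $\sigma_1\sigma_2 = 1/\sigma_3$, whereas the paper applies the three-variable AM--GM to $(\sigma_1,\sigma_2,\sigma_3)$ and then subtracts $\sigma_3$ --- precisely the alternative route you yourself mention at the end.
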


\begin{rem}
  \label{rem:disc:SL3_dom_nc_strict}
  If we make the stronger assumption $\sigma_1 > \sigma_2 > \sigma_3 > 0$,
  we obtain a strict inequality $\sigma_1 + \sigma_2 > 2$. In that case,
  $F \in \domn_{1,0} \setminus \domc_{1,0}$ is strictly non-classical.
\end{rem}

\begin{cor}
  \label{cor:SL3_dom_nc_strict}
  Let $\mu > 0$,
    $F \in \SL^+(3)$ and assume
  that $\sigma_1 > \sigma_2 > \sigma_3 > 0$. Then
  \begin{equation}
    F \quad\in\quad \domn_{\mu,0}\setminus \domc_{\mu,0}\;,
  \end{equation}
  i.e., the minimizers $\rpolar_{\mu,0}^\pm(F) \neq \polar{}$ are \emph{strictly}
  non-classical.
  \begin{proof}
    Since $\lambda_{\mu,0} = 1$, it follows that $\widetilde{F}_{\mu,0} = F$.
    Further $\rho_{\mu,0} = \rho_{1,0}$. Thus, we are in the hypotheses of the
    preceding~\leref{lem:disc:SL3_dom_nc} for the case where the inequality is
    strict, see~\remref{rem:disc:SL3_dom_nc_strict}.
  \end{proof}
\end{cor}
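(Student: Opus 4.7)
The plan is to reduce the statement to Lemma~\ref{lem:disc:SL3_dom_nc} by exploiting the fact that the parameter reduction becomes essentially trivial when $\mu_c = 0$. First I would substitute $\mu_c = 0$ into the definitions: the induced scaling parameter collapses to $\lambda_{\mu,0} = \frac{\mu}{\mu - 0} = 1$, so the rescaled deformation gradient satisfies $\widetilde{F}_{\mu,0} = \lambda_{\mu,0}^{-1}\,F = F$, and the singular radius collapses to $\rho_{\mu,0} = \frac{2\mu}{\mu} = 2 = \rho_{1,0}$. Consequently, the classical and non-classical domains associated with the parameter set $(\mu,0)$ coincide pointwise with those of the non-classical limit case $(1,0)$, i.e., $\domc_{\mu,0} = \domc_{1,0}$ and $\domn_{\mu,0} = \domn_{1,0}$.

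Next I would apply Lemma~\ref{lem:disc:SL3_dom_nc}, which shows via the AM-GM inequality that $\SL(3) \subset \domn_{1,0}$. Since our hypothesis strengthens the ordering of the singular values to $\sigma_1 > \sigma_2 > \sigma_3 > 0$, the AM-GM estimate is strict and Remark~\ref{rem:disc:SL3_dom_nc_strict} upgrades the conclusion to $\sigma_1 + \sigma_2 > 2$. Combined with the identification of domains above, this places $F$ strictly in $\domn_{\mu,0} \setminus \domc_{\mu,0}$.

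Finally, to conclude that the minimizers $\rpolar_{\mu,0}^\pm(F)$ are \emph{strictly} non-classical (i.e., different from $\polar(F)$), I would invoke the piecewise formula in Corollary~\ref{cor:rpolar_formula}: in $\domn_{\mu,0}\setminus\domc_{\mu,0}$ the optimal relative rotations are $\widehat{R}^\mp_{1,0}(\widetilde{F}_{\mu,0}) \neq \id$ (because $\ump(F) > 1$ forces a genuine rotation by $\pm\arccos(1/\ump(F))$), so $\rpolar_{\mu,0}^\pm(F) = \polar(F)\,Q(F)\,\widehat{R}^\mp_{1,0}(F)\,Q(F)^T \neq \polar(F)$.

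There is essentially no obstacle here: the only small subtlety is checking that the parameter reduction, which is advertised for the range $\mu > \mu_c \geq 0$, indeed handles the boundary case $\mu_c = 0$ trivially by making $\widetilde{F}_{\mu,0} = F$. Once this is noted, the corollary reduces to a citation of the already-established lemma plus remark, and the separation of the strict from the non-strict case follows solely from the strict ordering of the singular values.
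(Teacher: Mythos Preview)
Your proposal is correct and follows essentially the same approach as the paper: reduce to the limit case by observing $\lambda_{\mu,0}=1$ (hence $\widetilde{F}_{\mu,0}=F$) and $\rho_{\mu,0}=\rho_{1,0}$, then invoke Lemma~\ref{lem:disc:SL3_dom_nc} together with Remark~\ref{rem:disc:SL3_dom_nc_strict}. You are merely a bit more explicit than the paper in spelling out the domain identification $\domc_{\mu,0}=\domc_{1,0}$, $\domn_{\mu,0}=\domn_{1,0}$ and in citing Corollary~\ref{cor:rpolar_formula} for the final ``$\rpolar^\pm_{\mu,0}(F)\neq\polar(F)$'' step.
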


\subsection{Application}
\label{sec:discussion:application}
Let us now give a short application to our previous findings. We consider a
so-called volumetric-isochoric split for the geometrically nonlinear Cosserat
shear--stretch energy. Note that this material model appears in a variety of
contexts, see, e.g.,~\cite{Eremeyev:2012:FMM,Boehmer:2015:SS,Neff:2015:EGNC,Lankeit:2015:IC,Neff_Muench_transverse_cosserat08,Neff_Muench_magnetic08,Neff_Cosserat_plasticity05,Forest:1997:CSC,Sansour:1998:TEVC,Sansour:2008:NCC}, and, recently~\cite{Skatulla:2013:CCMS,Matteo:2015:MMD,Blesgen:2013:DPF,Blesgen:2014:DPC}. Further, similar expressions for the strain energy have been considered in
the context of plate and shell theories, see, e.g.,~\cite{Pietraszkiewicz:2009:VPN,Pietraszkiewicz04,Pietraszkiewicz08,Neff_plate04_cmt,Birsan_Neff_MMS_2013,Birsan_Neff_JElast2013,Sander:2014:NGNCS}.

Let us introduce the isochoric projection $F \mapsto F_{\rm iso}
\eqdef \frac{F}{\det{F}^{1/3}} \in \SL(3)$ of the deformation gradient
$F \in \GL^+(3)$ which can also be applied to $\mstretch \eqdef
R^TF$. With this notation, we obtain
\begin{align*}
  W(\mstretch)
  &= \underbrace{\mu\, \hsnorm{\sym\left(\frac{\mstretch}{\det{\mstretch}^{1/3}} - \id\right)}^2
    \;+\;
    \mu_c\,\hsnorm{\skew\left(\frac{\mstretch}{\det{\mstretch}^{1/3}} - \id\right)}^2}_{\text{``Cosserat shear--stretch energy''}}
    \;+\;
    \underbrace{\vphantom{\hsnorm{\sym\left(\frac{\mstretch}{\det{\mstretch}^{1/3}} - \id\right)}^2} h(\det{\mstretch})}_{\text{``volumetric contribution''}}\\[0.2cm]
  &= \mu\, \hsnorm{\sym\left(\mrot^T\frac{F}{\det{F}^{1/3}} - \id\right)}^2
    \;+\;
    \mu_c\,\hsnorm{\skew\left(\mrot^T\frac{F}{\det{F}^{1/3}} - \id\right)}^2
    \;+\;
    h(\det{R^TF})\\[0.2cm]
 &= \mu\, \hsnorm{\sym\left(\mrot^TF_{\rm iso} - \id\right)}^2
    \;+\;
    \mu_c\,\hsnorm{\skew\left(\mrot^TF_{\rm iso} - \id\right)}^2
    \;+\;
    h(\det{F})\;.
\end{align*}

The results of the previous subsections, allow us to determine the
optimal Cosserat rotations for the split energy
$W(\mstretch) \;=\; W(\mrot^TF) \;=\; \wmm(R\,;F_{\rm iso}) \;+\; h(\det{F})$.
Note first that the additional volumetric contribution $h(\det{F})$
penalizes volume change by a scalar function $h: \RPos \to \RPosZ$
which is constant with respect to $\mrot \in \SO(3)$. Therefore, this
formulation still gives rise to the same optimal Cosserat rotations
$$\rpolar_{\mu,\mu_c}(F_{\rm iso})
\eqdef \argmin{R\,\in\,\SO(3)}{\wmm(R\,;F_{\rm iso})}
= \argmin{R\,\in\,\SO(3)}{\Big\{\wmm(R\,;F_{\rm iso})\;+\; h(\det{F})\Big\}}\;.$$

We can now make an interesting observation. To this end, let $\epsilon > 0$
and consider diagonal matrices of the type
\begin{equation*}
  D_\epsilon \;\eqdef\;
  \begin{pmatrix}
    \sradmm -\, 1 + \epsilon & 0 & 0\\
    0 & 1 & 0\\
    0 & 0 & (\sradmm -\, 1 + \epsilon)^{-1}
  \end{pmatrix}
  \;\in\; \SL(3)\;.
\end{equation*}
The required ordering $\sigma_1 > \sigma_2 > \sigma_3 > 0$ follows from
$\sradmm \eqdef \frac{2\mu}{\mu - \mu_c} \geq \frac{2\mu}{\mu} = 2$ and holds
for the entire non-classical parameter range $\mu > \mu_c \geq 0$.
Obviously, we have
\begin{equation*}
  \sigma_1 + \sigma_2 = \sradmm + \;\epsilon > \sradmm
  \quad\Longrightarrow\quad
  D_\epsilon \in \domn_{\mu,\mu_c} \setminus \domc_{\mu,\mu_c}\;.
\end{equation*}
Hence, the intersection
$\SL(3) \cap \left(\domn_{\mu,\mu_c} \setminus \domc_{\mu,\mu_c}\right) \neq \emptyset$
is never empty since it contains $D_\epsilon$ for all $\epsilon > 0$.
Furthermore, the associated optimal Cosserat rotations are \emph{strictly}
non-classical, i.e., $\rpolar^\pm_{\mu,\mu_c}(D_\epsilon) \neq \polar(D_\epsilon) = \id$.

Hence, in order to assure that there can be no strictly non-classical optimal
Cosserat rotations (for whatever reason) one has to consider
material parameters from the classical parameter range $\mu_c \geq \mu > 0$.
In this case the Cosserat couple modulus $\mu_c$ dominates the Lam\'e shear
modulus $\mu$ and Grioli's theorem assures that the polar factor $\polar(F)$
is always uniquely optimal~\cite{Neff_Grioli14}.

In the distinguished limit case $\mu_c = 0$, the volumetric-isochoric
split precludes the previously observed tension-compression
asymmetry. In this particular scenario, \coref{cor:SL3_dom_nc_strict}
shows that the optimal rotations are \emph{always}
non-classical. Since no bifurcation of the optimal rotations occurs,
there can be no qualitatively different energetic response under
tension and compression for $\mu_c = 0$; cf. also~\cite{Neff_ZAMM05}
for a discussion of other implications of a zero Cosserat couple
modulus.

Last but not least, we want to mention that our proposed explicit formulae
for optimal Cosserat rotations may also lead to improved stability and performance
in full scale 3D nonlinear finite element computations for media with
rotational microstructure. We expect them to be especially useful for
the highly interesting and numerically challenging case of a material
with small internal length scale $L_{\rm c} > 0$. If, in addition,
the volumetric contribution is independent of the rotation (see above),
then the optimal Cosserat rotations $\rpolar^\pm_{\mu,\mu_c}(F)$
proposed in~\coref{cor:rpolar_formula} can be expected to be
ideal candidates for the initialization of the Newton--iterations
for the field of microrotations $\mrot: \Omega \subset \R^n \to \SO(n)$,
$n = 2,3$.

 \countres
\section{Dissection of critical point structure and computational validation
  of optimality}
\label{sec:validation}

\begin{figure}[tb]
  \begin{center}
    \includegraphics[width=3cm]{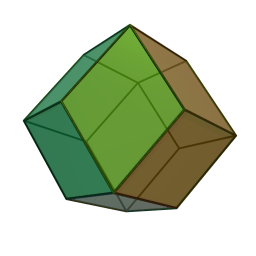}
    \includegraphics[width=3cm]{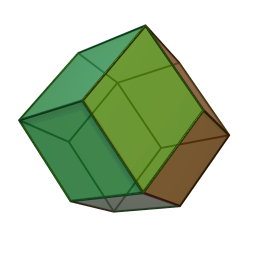}
    \includegraphics[width=3cm]{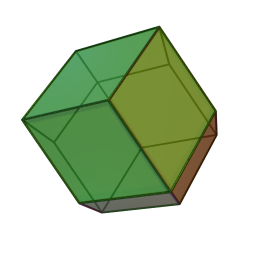}
    \includegraphics[width=3cm]{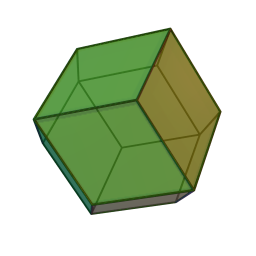}\\
  \end{center}
  \caption[A rhombic dodecahedron in the space of singular values]{\label{fig:dodecahedron}
    A rhombic dodecahedron placed in the space of unordered singular
    values $(\sigma_1,\sigma_2,\sigma_3) \in \R^3$
    of $F \in \GL^+(3)$ gives rise to a beautiful geometric
    characterization of the classical and non-classical
    domains $\domc_{1,0}$ and $\domn_{1,0}$. Pick a face
    and displace it in normal direction while scaling it
    by its distance to the origin. This creates a convex cone
    with the scaled faces as cross-sections which intersects
    the polytope. The part of the cone inside the polytope
    corresponds to singular values $(\sigma_1,\sigma_2,\sigma_3) \in \R^3$
    in $\domc_{1,0}$. The exterior part corresponds to $\domn_{1,0}$.
    On the picked face itself $\domc_{1,0} \cap \domn_{1,0}$, the two
    branches coincide. (This is how we discovered the non-classical
    bifurcation behavior of $\rpolar^\pm_{1,0}(F)$ in dimension $n\!=\!3$.)
  }
  \label{fig:rhombic_dodecahedron}
\end{figure}

We recall that our primary objective for the present work is to
derive a formula (or algorithm) which allows to compute the set of
optimal Cosserat rotations $\rpolar_{\mu,\mu_c}(F) \subset \SO(3)$, i.e., the
rotations which minimize the Cosserat shear--stretch energy $\wmm(R\,;F)$
for given $F \in \GL^+(3)$ and weights $(\mu,\mu_c)$ in the non-classical
parameter range $\mu > \mu_c \geq 0$. In the first two sections of this
contribution, we have hopefully convinced our avid reader that it suffices
to solve~\probref{prob:relative_rhat} in order to determine the optimal
Cosserat rotations which then solve our original~\probref{intro:prob_wmm}.
However, in order to simultaneously cross-validate our theoretical
derivation (this includes the parameter reduction presented in
Part I~\cite[Lem.~2.2, p.~4]{Fischle:2015:OC2D}), we have
based our final validation on~\probref{intro:prob_wmm}. This bypasses
all simplification steps which we have used in order to derive
the formula for $\rpolar_{\mu,\mu_c}^\pm(F)$ proposed
in~\coref{cor:rpolar_formula}, but is costly due to the large
parameter space.

\subsection{Interactive analysis of the critical point structure}
The solution of the Euler--Lagrange equations~\eqref{eq:EL_quat}
with the computer algebra package~\mathematica~returns the $32$
critical points compiled in Appendix~\ref{sec:appendix}. Note that
\mathematica~automatically verifies that the obtained symbolic
expressions are indeed solutions for~\probref{prob:lagrange}.

These symbolic solutions give rise to $32$ critical branches
$\hat{q}^{(i)}: \mathrm{Dom}(\hat{q}^{(i)}) \to \S^3$,  $1 \leq i \leq 32$.
Note that, we can discard $16$ of the branches right away
since they are redundant. This is due to the antipodal
identification of quaternions under the covering map
$\pi: \S^3 \to \SO(3)$. The critical branches are
associated with the \emph{lifted} Cosserat shear--stretch energy
formulation $\widehat{W}^\sharp_{\mu,\mu_c}(\hat{q}\,;D)$.
In particular, they project to \emph{relative} rotations
parametrized by a diagonal matrix $D = \diag(\sigma_1,\sigma_2,\sigma_3)$.
In what follows, we identify the space of \emph{unordered} singular
values $(\sigma_1, \sigma_2, \sigma_3)$ of $F \in \GL^+(3)$ with $\R^3$.
Further, we take the liberty to identify diagonal
matrices $D = \diag(\sigma_1,\sigma_2,\sigma_3)$ with points in
$\R^3$ and shall even write $D \in \R^3$.

This allows us to write
$\mathrm{Dom}(\hat{q}^{(i)}) \subset \R^3$, $i = 1,\ldots,16$,
for the maximal domain of definition of the $i$-th
critical branch.\footnote{Technically, these maximal domains are implicitly
defined by the requirement that the critical coefficient tuple computed
by \mathematica~is real-valued at the point $D \in \R^3$, i.e.,
$(w^{(i)},x^{(i)},y^{(i)},z^{(i)}) \in \R^4$.} If the solution set
is complete (cf. Appendix~\ref{sec:appendix} for a discussion),
then, up to a set of measure zero, we must have\footnote{Equivalently, for a given $F \in \GL^+(3)$ with distinct
  singular values, at least one of the critical branches
$\hat{q}^{(i)}: \mathrm{Dom}(\hat{q}^{(i)}) \to \S^3$
must be energy-minimizing since the Cosserat shear--stretch
energy attains its minimum on $\SO(3)$.}
$$\bigcup_{1\leq i \leq 16} \mathrm{Dom}(\hat{q}^{(i)}) = \R^3\;.$$

Initially, still stumbling in the dark, we compared the critical
branches by comparing the different realized energy levels
given by $\widehat{W}_{1,0}(\hat{q}^{(i)}\,; D)$, $i = 1,\ldots,16$,
for random tuples $(\sigma_1, \sigma_2, \sigma_3) \in \R^3$.
This allowed us to construct a three-dimensional map for the
space of unordered singular values by associating the index
set of the energy-minimizing critical branches at
$D = (\sigma_1, \sigma_2, \sigma_3)$ with the point
$D \in \R^3$ in the parameter space. We then mapped each
of these index sets to a unique color and subsequently explored
the parameter space visually. This three-dimensional
``optimal branch map'' allowed us to isolate the energy-minimizing
critical branches corresponding to $\rpolar^\pm_{1,0}(F)$;
cf.~\figref{fig:dodecahedron} which describes the structure that
appeared.

Furthermore, we compared the following minimal energy levels
\begin{equation}
  \min_{1 \leq i \leq 16} \widehat{W}^\sharp_{1,0}(\hat{q}^{(i)}(D)\,; D)\;,
  \quad\text{and}\quad
  \min_{\hat{q}\,\in\S^3} \widehat{W}^\sharp_{1,0}(\hat{q}\,; D)\;,
\end{equation}
using a Monte Carlo approximation of the right hand side
(which we describe in detail in the next subsection). This
allowed us to detect discrepancies which can only arise
due to an incomplete set of critical branches. Note that
during our whole investigation, we never encountered any
such discrepancy. This is a strong indication that
the set of critical points computed by \mathematica~is
in fact complete, as one would expect.

Our next step is to turn our previously described approach into a
more systematic computational validation of the optimality of
the proposed candidates $\rpolar^\pm_{\mu,\mu_c}(F)$.

\subsection{Validation of optimality by Monte Carlo statistical sampling}
We now describe a serious computational approach for the
validation of the optimality of our proposed candidate
formula $\rpolar^\pm_{\mu,\mu_c}(F)$. This approach relies on a
well-known, rather simplistic, but highly useful (in low
dimensions) method for the generation of uniformly distributed
random rotations due to~\cite{Shoemake:1992:URR}.

In what follows, we let $K \eqdef [-1,1]^4 \subset \R^4$ denote
a hypercube of sidelength $2$ centered about the origin and
define $\mathbb{B}^4 \eqdef \setdef{x \in K}{\norm{x} \leq 1}$, i.e.,
as the closed unit ball in $\R^4$. Further, we let $X_K$ denote a
uniformly distributed random variable with values in $K$
and introduce $X_{\mathbb{B}^4}$ as the restriction of $X_K$ to
the unit ball. Then, $X_{\mathbb{B}^4} \eqdef \restrict{X_K}{\mathbb{B}^4}$
is uniformly distributed. The restriction can be defined by
rejection sampling, i.e., we reject all realizations in
$K \setminus \mathbb{B}^4$ which lie outside of the ball,
but accept the first realization inside of
$\mathbb{B}^4 \subset K$; see~\figref{fig:rejection_sampling}
for an example in the plane.

\begin{theo}[Rejection sampling for $\S^3$]
  The random variable $X_{\S^3} \eqdef \frac{X_{\mathbb{B}^4}}{\norm{X_{\mathbb{B}^4}}}$
  obtained by normalization is uniformly distributed on $\S^3$ with respect
  to the Lebesgue measure on the sphere which we denote by $\dV_{\S^3}$.
\end{theo}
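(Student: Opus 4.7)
The plan is to exploit the rotational invariance of the uniform measure on $\mathbb{B}^4$, which I would make quantitative via a polar-coordinate computation. By construction (rejection sampling), $X_{\mathbb{B}^4}$ has the uniform distribution on $\mathbb{B}^4$, i.e., its law is the Lebesgue measure on $\R^4$ restricted to $\mathbb{B}^4$ and normalized by $\mathrm{vol}(\mathbb{B}^4)$. The map $\Phi: (0,1] \times \S^3 \to \mathbb{B}^4\setminus\{0\}$, $(r,\omega)\mapsto r\omega$, is a diffeomorphism onto a full-measure subset, and the standard polar decomposition of Lebesgue measure reads $dx = r^3\,dr\,\dV_{\S^3}(\omega)$.

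Next, I would push the law of $X_{\mathbb{B}^4}$ through the inverse map $x\mapsto(\|x\|,x/\|x\|)$, which is well-defined almost surely since $\{0\}$ has Lebesgue measure zero. The resulting joint density of $(R,\Omega) \eqdef (\|X_{\mathbb{B}^4}\|,\,X_{\mathbb{B}^4}/\|X_{\mathbb{B}^4}\|)$ on $(0,1]\times\S^3$ with respect to $dr\otimes\dV_{\S^3}$ is
\begin{equation*}
  f_{(R,\Omega)}(r,\omega) \;=\; \frac{1}{\mathrm{vol}(\mathbb{B}^4)}\,r^3\,\mathbbm{1}_{(0,1]}(r)\;.
\end{equation*}
This density factors as a product of a function of $r$ alone and a function of $\omega$ alone (the latter being the constant $1$), so $R$ and $\Omega$ are independent, and the marginal law of $\Omega$ is the uniform probability measure $\dV_{\S^3}/\mathrm{vol}(\S^3)$.

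Finally, since $X_{\S^3} \eqdef X_{\mathbb{B}^4}/\|X_{\mathbb{B}^4}\| = \Omega$ by definition, the claim follows. An equivalent, coordinate-free justification of the key factorization step would be to observe that the law of $X_{\mathbb{B}^4}$ is invariant under the natural action of $\mathrm{O}(4)$ on $\R^4$ (since both Lebesgue measure and $\mathbb{B}^4$ are), so the pushforward law of $X_{\S^3}$ on $\S^3$ is likewise $\mathrm{O}(4)$-invariant; by uniqueness of the normalized Haar measure on the homogeneous space $\S^3 \cong \mathrm{O}(4)/\mathrm{O}(3)$, this pushforward must coincide with the normalized spherical Lebesgue measure.

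There is no real obstacle here; the only point requiring any care is that the normalization map is undefined at the origin, which is harmless because $\{0\}$ is a Lebesgue null set and hence is hit by $X_{\mathbb{B}^4}$ with probability zero.
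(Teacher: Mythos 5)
Your argument is correct and complete, but the paper itself offers no proof of this theorem at all: it simply labels the result ``standard'' and cites Shoemake and the earlier Muller/Hicks/Marsaglia references. So what you have done is supply a genuine proof where the paper defers to the literature. Both of your two justifications are valid and standard. The polar-coordinate computation ($dx = r^3\,dr\,\dV_{\S^3}$ in $\R^4$, factorization of the joint density, independence of $R$ and $\Omega$, marginal uniformity of $\Omega$) is the more elementary and self-contained of the two and is essentially what one finds written out in the cited statistics references. The coordinate-free alternative --- $\mathrm{O}(4)$-invariance of the ball measure implies $\mathrm{O}(4)$-invariance of the pushforward on $\S^3$, and uniqueness of the invariant probability measure on the homogeneous space $\S^3 \cong \mathrm{O}(4)/\mathrm{O}(3)$ pins it down as the normalized Lebesgue measure --- is slicker, generalizes cleanly to any $\S^{n-1}$ and to any radially symmetric source distribution (not just uniform on the ball), and fits nicely with the Haar-measure language the paper already uses in the surrounding discussion and in Theorem 4.2. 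Your observation that the normalization map is undefined only on the Lebesgue-null set $\{0\}$ is the one point that needs to be said and you said it. No gaps.
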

\begin{proof}
  This is a standard method which performs quite well in low space
  dimensions, see, e.g.,~\cite{Muller:1959:GPS,Hicks:1959:GPS}
  or~\cite{Marsaglia:1972:CPS} and references therein.
\end{proof}

\begin{figure}[t]
  \begin{center}
    \includegraphics[width=3.9cm]{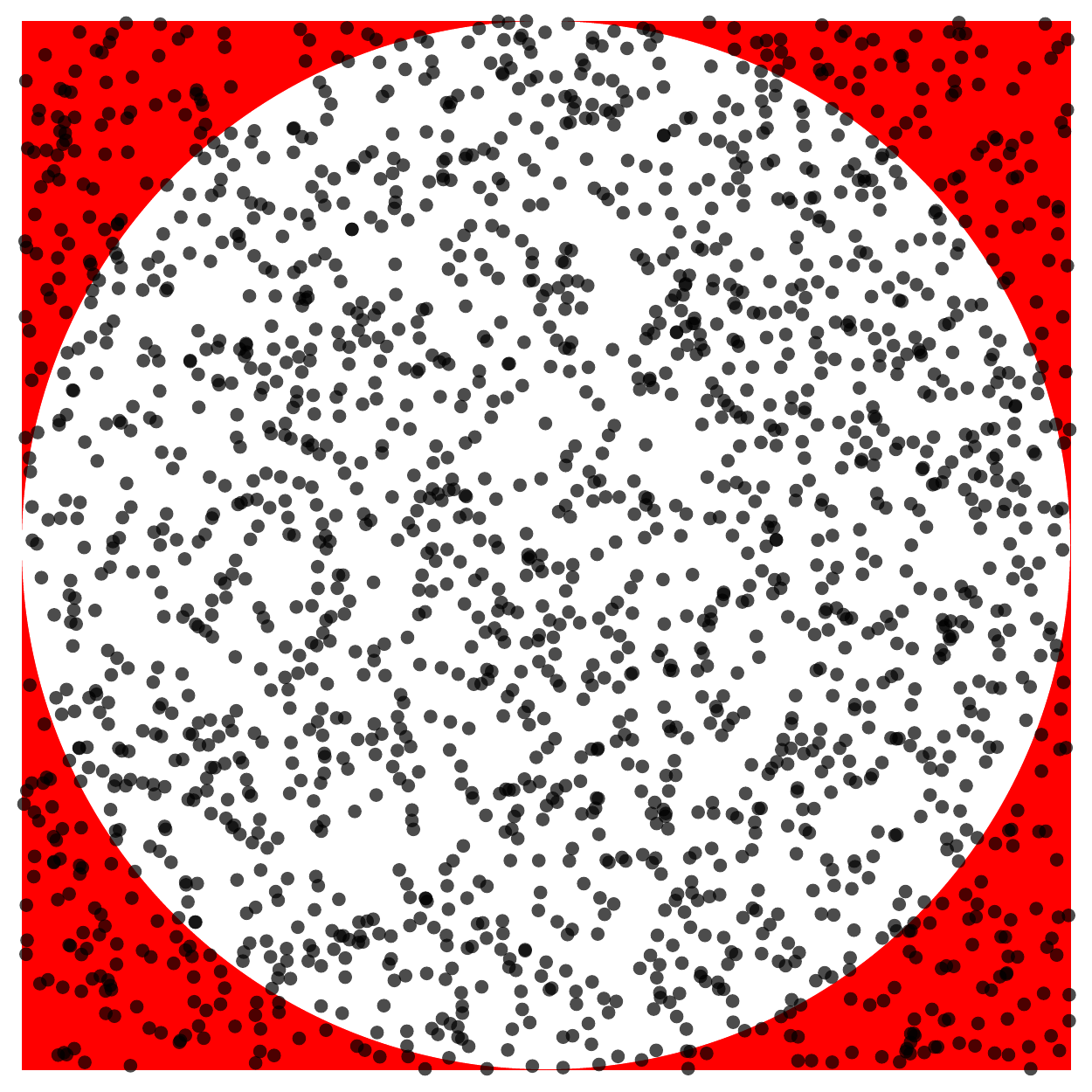}
    \qquad
    \includegraphics[width=3.9cm]{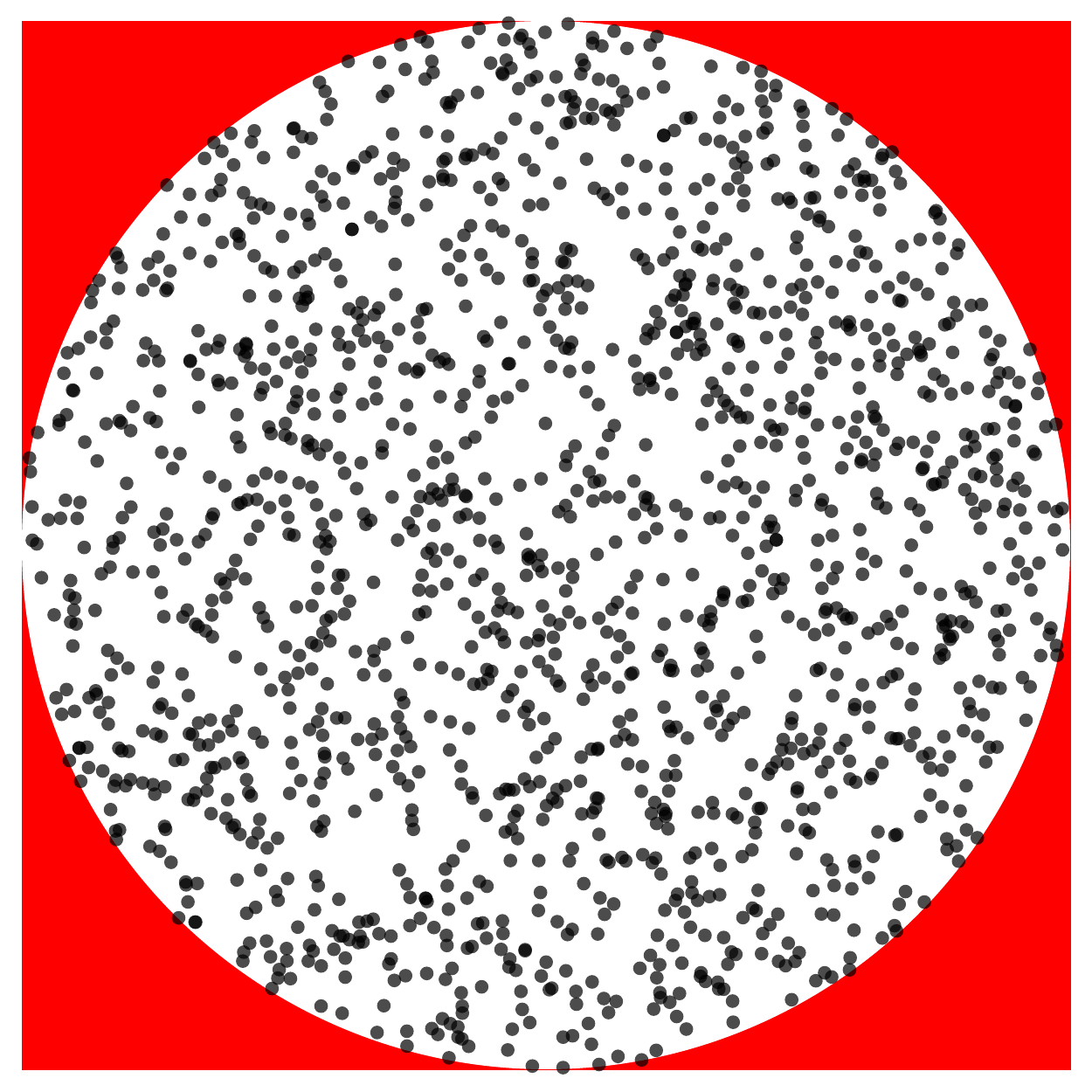}
    \qquad
    \includegraphics[width=3.9cm]{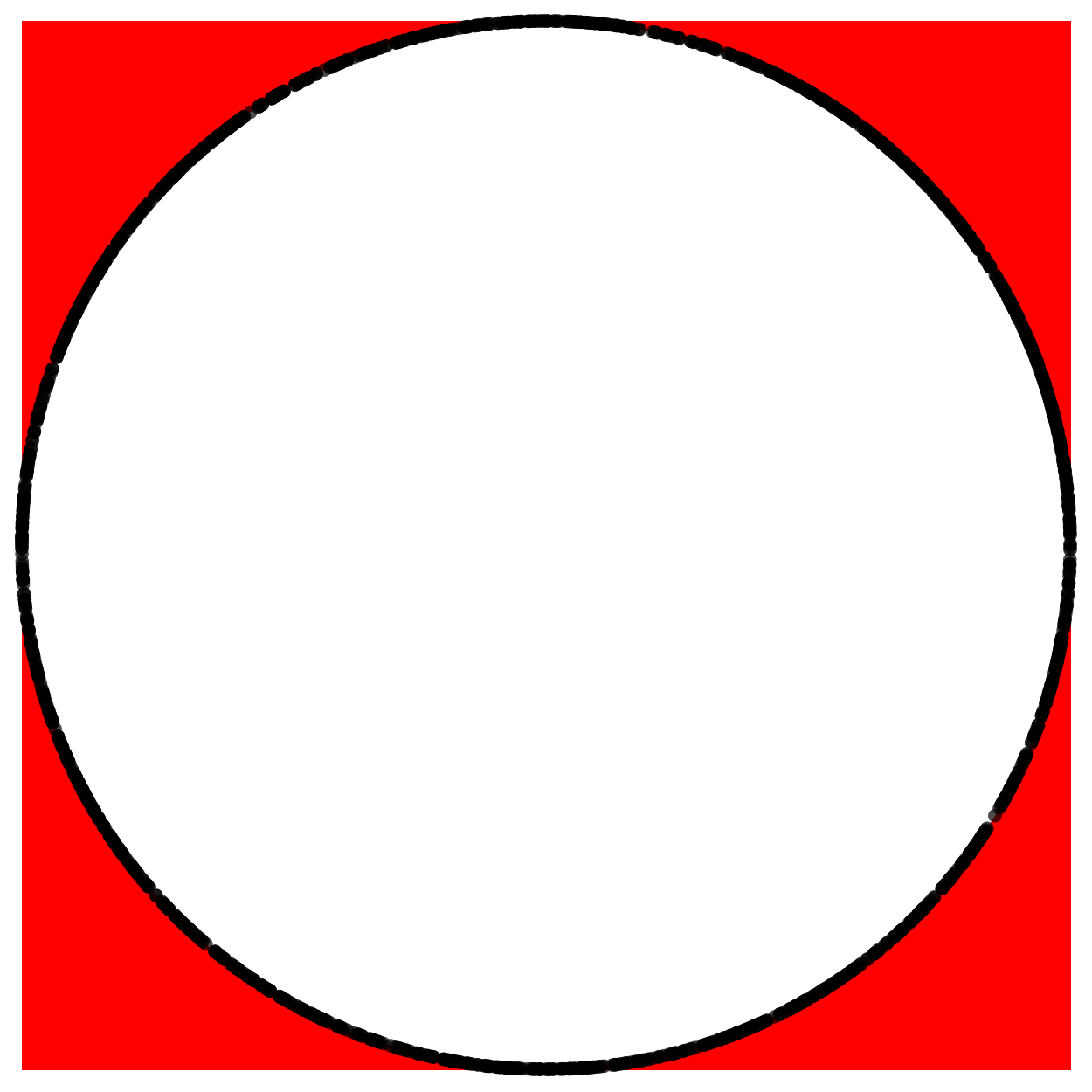}
  \end{center}
  \caption[Uniform sampling of $\S^1$ via rejection sampling]{
    \label{fig:rejection_sampling}    Uniform sampling of the circle $\S^1$ by rejection sampling.
    First, the unit square $[-1,1]^2$ is uniformly
    sampled (here, with $2000$ samples). Then all samples $p$ with $\norm{p} > 1$
    in the red domain are rejected. Finally, the remaining samples in the unit
    disk are normalized, i.e., $p \mapsto \frac{p}{\norm{p}} \in \S^1$.
    This yields a uniform distribution on the boundary circle $\S^1$.
    Although this approach can be generalized to higher-dimensional
    spheres its performance does not scale to high dimensions.}
\end{figure}

We recall that $\S^3$ is a Lie group double cover of $\SO(3)$.
A uniform distribution on a compact Lie group is defined in terms
of the (normalized) Haar measure of the group,
see, e.g.,~\cite[p.~9]{Applebaum:2014:PCLG}. Such a measure is
invariant with respect to the left (or right) group multiplication
and is unique up to a constant multiple. For an introduction to the
Haar measure on Lie groups, see, e.g.,~\cite[p.~179-194]{Duistermaat:2012:LG}.
It is well-known that the Lebesgue measure $\dV_{\S^3}$ is a
bi-invariant (non-normalized) Haar measure for the Lie group
of unit quaternions.

\begin{theo}[Uniformly distributed random variables on $\SO(3)$]
  \label{theo:rejection_sampling_S3}
  Let $X_{\S^3}$ be a uniformly distributed random variable on $\S^3$ and
  $\pi: \S^3 \to \SO(3)$ the covering homorphism defined in~\eqref{eq:pi}.
  Then the random variable $X_{\SO(3)} \eqdef \pi \circ X_{\S^3}$ is uniformly
  distributed with respect to the (normalized) Haar measure on $\SO(3)$.
\end{theo}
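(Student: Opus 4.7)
The plan is to reduce the statement to the uniqueness of the normalized Haar measure on a compact group. Let $\nu$ denote the distribution of $X_{\S^3}$ on $\S^3$ which, by the preceding theorem together with the paragraph just above the statement, is the (bi-invariant) normalized Haar measure on the compact Lie group $\S^3$. Set $\mu \eqdef \pi_*\nu$, i.e., $\mu(A) \eqdef \nu(\pi^{-1}(A))$ for every Borel set $A\subset\SO(3)$ (preimages are Borel since $\pi$ is continuous). By construction $\mu$ is the distribution of $X_{\SO(3)} = \pi\circ X_{\S^3}$, it is a Borel probability measure on $\SO(3)$, and because $\SO(3)$ is a compact Lie group, the uniqueness part of the Haar theorem will finish the argument as soon as I have verified left-invariance of $\mu$.

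The key computation is a purely algebraic one. Fix $R\in\SO(3)$ and any Borel set $A\subset\SO(3)$, and pick a preimage $q_R\in\pi^{-1}(R)$; the fiber $\pi^{-1}(R)=\{q_R,-q_R\}$ is nonempty by surjectivity of $\pi$. The plan is to establish the identity
\begin{equation*}
\pi^{-1}(R\cdot A) \;=\; q_R\cdot\pi^{-1}(A)
\end{equation*}
by the two easy inclusions using that $\pi$ is a group homomorphism: if $\pi(q)=R\cdot a$ with $a\in A$, then $\pi(q_R^{-1}q)=a\in A$ so $q\in q_R\cdot\pi^{-1}(A)$, and conversely $\pi(q_R q')=R\cdot\pi(q')\in R\cdot A$ for every $q'\in\pi^{-1}(A)$. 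Using this identity and the left-invariance of $\nu$ under translation by $q_R\in\S^3$ I then conclude
\begin{equation*}
\mu(R\cdot A) \;=\; \nu\bigl(\pi^{-1}(R\cdot A)\bigr) \;=\; \nu\bigl(q_R\cdot\pi^{-1}(A)\bigr) \;=\; \nu\bigl(\pi^{-1}(A)\bigr) \;=\; \mu(A)\,,
\end{equation*}
which is the desired left-invariance.

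Having shown that $\mu$ is a left-invariant Borel probability measure on the compact Lie group $\SO(3)$, I invoke uniqueness of the normalized Haar measure (see, e.g., the Duistermaat--Kolk reference cited above the statement) to conclude that $\mu$ \emph{is} the normalized Haar measure on $\SO(3)$, which is exactly the claim. There is essentially no obstacle beyond carefully handling the two-to-one nature of $\pi$; the only place where this plays a role is the choice of $q_R$ in the fiber, and both choices $\pm q_R$ yield the same coset $q_R\cdot\pi^{-1}(A) = (-q_R)\cdot\pi^{-1}(A)$ since $-q_R = q_R\cdot(-1)$ and $-1\in\ker\pi$, so $(-1)\cdot\pi^{-1}(A) = \pi^{-1}(A)$. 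Hence the identity above is canonical and the whole argument is a short and clean application of Haar uniqueness.
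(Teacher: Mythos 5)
Your argument is correct and complete. The paper itself does not actually give a proof of this theorem: the proof body is a one-line citation to Shoemake's reference, and the paragraph that follows the theorem is an informal Riemannian-geometric heuristic (the covering homomorphism $\pi$ is a local isometry for the pulled-back bi-invariant metric, so it carries the invariant volume form on $\S^3$ to an invariant measure on $\SO(3)$). Your route is genuinely different: you work purely measure-theoretically, forming the pushforward $\mu = \pi_*\nu$, verifying left-invariance via the fiber identity $\pi^{-1}(R\cdot A) = q_R\cdot\pi^{-1}(A)$ (which only uses that $\pi$ is a surjective group homomorphism, not that it is a two-to-one covering or a local isometry), and then invoking uniqueness of the normalized Haar measure on a compact group. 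What your approach buys is a short, self-contained, and fully rigorous proof that requires no differential geometry at all; it also generalizes verbatim to any continuous surjective homomorphism of compact groups, not just to $\pi: \S^3 \to \SO(3)$. The only thing it quietly relies on, and correctly so, is that a compact group has a \emph{unique} left-invariant Borel probability measure, so establishing left-invariance of $\mu$ alone suffices. The remark at the end about the well-definedness of the coset $q_R\cdot\pi^{-1}(A)$ independent of the choice of $q_R$ in the fiber is not logically needed (you only ever use one fixed $q_R$ in the computation), but it is harmless and shows good hygiene about the two-to-one nature of $\pi$.
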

\begin{proof}
  This is well-known, see, e.g.,~\cite{Shoemake:1992:URR}.
\end{proof}
Essentially, the covering homomorphism $\pi:\S^3 \to \SO(3)$ induces a
bi-invariant Riemannian metric on $\SO(3)$ via the pullback
$\scalprod{\cdot}{\cdot}_{\SO(3)} \eqdef (\pi^{-1})^*\scalprod{\cdot}{\cdot}_{\S^3}$.
Note that the bi-invariant metric on $\SO(3)$ is unique up to scalar
multiples since the Lie-algebra $\so(3)$ is simple.\footnote{Note that
  the Lie algebra $\so(n)$ is not simple for the exceptional dimensions
  $n = 2,4$.}
With respect to the pullback metric, $\pi$ is a local isometry. Further,
since $\pi$ is a covering map, the pullback of the invariant surface
volume measure on $\S^3$ given by $(\pi^{-1})^*\dV_{\S^3}$ induces an
invariant measure, i.e., a Haar measure, on $\SO(3)$.

On a sidenote, the use of Riemannian metrics and geodesics on (matrix)
Lie groups in applications is currently an active research area, since
the computational costs of geometric methods are no longer prohibitive.
For some interesting recent applications to strain measures
in mechanics, see, e.g.,~\cite{Neff:2015:GLS}. Another interesting recent
usecase for geodesics on the group of unit quaternions is the simulation
of eye movements, see~\cite{Novelia:2015:GSO}.

We now briefly describe the sampling strategies for the computational
validation.

\begin{rem}[Sampling the group of rotations $\SO(3)$]
  Based on the method described in~\theref{theo:rejection_sampling_S3},
  we have computed a set of samples $\mathcal{Q} \subset \S^3$
  consisting of $4.629.171$ uniformly distributed unit quaternions.
      \end{rem}

\begin{rem}[Sampling strategy for $F \in \GL^+(3)$]
  We have generated a stream of matrices with uniformly distributed
  coefficients $F_{ij} \in [-\frac{\sradmm}{2},\frac{\sradmm}{2}]$, $1 \leq i,j \leq 3$ and discarded all samples with $\det{F} < 0$.\footnote{We have also
    discarded matrices with non-simple singular values, but since these
    form a set of measure zero this case did never arise, as expected.}
  From the remaining samples, we have selected the first $1.000$ samples
  in $\domc_{\mu,\mu_c}$ and $\domn_{\mu,\mu_c}$, respectively, and
  collected them in two sets $\mathcal{F}^{\mathrm{C}}_{\mu,\mu_c}$
  and $\mathcal{F}^{\mathrm{NC}}_{\mu,\mu_c}$.
\end{rem}

\begin{rem}[Limitations of the sampling strategy for $F \in \GL^+(3)$]
  For performance reasons our sampling strategy takes our expectations
  into account right from the onset. This can be seen as a limitation.
  Further, our validation is inherently limited to compact subsets of
  $\GL^+(3)$. However, this particular strategy, heuristically produces
  a reasonable resolution for parameters $F \in \GL^+(3)$ in the vicinity
  of the branching condition $\sigma_1 + \sigma_2 = \sradmm$ (cf.
  our~\figref{fig:hat{beta}MC_NC}). Based on the predictions
  of the analysis of our proposed optimal Cosserat rotations presented
  in~\secref{sec:discussion}, this is without doubt the most interesting
  parameter sector.
\end{rem}

We are now finally in the position to expose our computational validation
strategy for the global optimality of the formula $\rpolar^\pm_{\mu,\mu_c}(F)$
stated in~\coref{cor:rpolar_formula}; cf. also~\remref{rem:rpolar_class}
and~\remref{rem:rpolar_nonclass} for a short review of the geometric
interpretation of the optimal Cosserat rotations. It is important to
note that the presented validation scheme is based on the lift
$$W^\sharp_{\mu,\mu_c}(q\,;F)\;\eqdef \wmm(\pi(q)\,;F)\;.$$

This formulation is based on the \emph{original} Cosserat-shear stretch
energy $\wmm(R\,;F)$, precisely as it appears in the statement
of~\probref{intro:prob_wmm}. Clearly, this allows to validate
the consistency of the simplifications leading us
to~\probref{prob:relative_rhat} in~\secref{sec:intro}.\footnote{Note that
  this also extends to our use of the parameter
  reduction~\cite[Lem.~2.2, p.~4]{Fischle:2015:OC2D}.}
This approach also implies that the image of the covering homomorphism
$\pi: \S^3 \to \SO(3)$ corresponds to an \emph{absolute} rotation
$\pi(q) = R$.

Let us now present our
\begin{compval}
  Let the sample sets $\mathcal{Q} \subset \S^3$ and
  $\mathcal{F}_{\mu,\mu_c} \eqdef \mathcal{F}_{\mu,\mu_c}^{\mathrm{C}} \cup \mathcal{F}_{\mu,\mu_c}^{\mathrm{NC}} \subset \GL^+(3)$ be as previously defined
  and set the numerical tolerance $\mathrm{tol} = 10^{-4}$.
  Then for all $\mu > 0$ and $\mu_c \geq 0$ (which we have tested)
  the following relation holds:
  \begin{equation}
    \forall F \in \mathcal{F}_{\mu,\mu_c}:\quad
    \pi\left(\argmin{q \in \mathcal{Q}}{W^\sharp_{\mu,\mu_c}(q\,;F)}\right)
    \quad=_{\mathrm{tol}}\quad \rpolar^\pm_{\mu,\mu_c}(F)\;,
  \end{equation}
  where $R_1 =_\mathrm{tol} R^\pm_2 \isequivto \min_{\pm}\hsnorm{R_1 - R^\pm_2} < \mathrm{tol}$, $R_1,R^\pm_2 \in \SO(3)$.
\end{compval}
The following procedure is equivalent, but more explicit. It also corresponds
more closely to our actual implementation:
\begin{equation}
  \begin{aligned}
    \forall F \in \mathcal{F}_{\mu,\mu_c}^{\mathrm{C}}:
    \quad
    \polar(F)^T \cdot \pi\left(\argmin{q \in \mathcal{Q}}{W^\sharp_{\mu,\mu_c}(q\,;F)}\right)
    &\quad=_{\mathrm{tol}}\quad \{\id\}
    \;,\\
    \forall F \in \mathcal{F}_{\mu,\mu_c}^{\mathrm{NC}}:\quad
    \polar(F)^T \cdot \pi\left(\argmin{q \in \mathcal{Q}}{W^\sharp_{\mu,\mu_c}(q\,;F)}\right)
    &\quad\equiv_\mathrm{tol}\quad [\hat{\beta}^\pm_{\mu,\mu_c}(F), q_3(F)]\;.
  \end{aligned}
\end{equation}
In order to clarify the meaning of the notation $\equiv_\mathrm{tol}$, let
$[\hat{\beta}^\pm_{\mu,\mu_c}(F), q_3(F)] \equiv R^\pm_2 \in \SO(3)$, then
$R_1 \equiv_\mathrm{tol} [\hat{\beta}^\pm_{\mu,\mu_c}(F), q_3(F)] \isequivto R_1 =_\mathrm{tol} R_2^\pm$; cf.~also~\remref{rem:rpolar_nonclass}.

\begin{rem}[Additional verification by a Riemannian Newton--scheme (W. M\"uller)]
  For some selected values of $F \in \GL^+(3)$, W. M\"uller (then at
  Karlsruhe Institute of Technology~\cite{Mueller09_diss}), has verified
  that the proposed formula $\rpolar^\pm_{\mu,\mu_c}(F)$ is a critical
  point for the Cosserat shear--stretch energy. He successfully
  approximated our proposed optimal Cosserat rotations
  up to machine accuracy by using a Riemannian Newton--scheme
  for the solution of the Euler--Lagrange equations on $\SO(3)$.
  Perturbations of the starting values of the Newton--iteration
  did not indicate the existence of alternative solutions realizing
  lower energy levels.
\end{rem}

In~\figref{fig:hat{beta}MC_NC}, we present multiple plots of the energy-minimizing
relative rotation angles $\hat{\beta}_{\mu, \mu_c}$ obtained by
stochastic (Monte Carlo) minimization. We show plots for different values
of ${\mu, \mu_c}$. A corresponding, in itself rather uninteresting, plot for
the classical limit case $(\mu,\mu_c) = (1,0)$ is depicted in~\figref{fig:hat{beta}MC_C} for direct comparison. Both figures match our expectations
raised by~\figref{fig:branchDiag} very well and the resolution does
improve with higher sample counts. It is instructive to compare these
figures with the optimal relative rotation angles for optimal \emph{planar}
Cosserat rotations presented in Part I of the present contribution,
see~\cite{Fischle:2015:OC2D}.

\begin{figure}
  \parbox{0.80\textwidth}{
    \begin{tikzpicture}
      \node (Pic) at (0,0)
            {\includegraphics[width=11cm]{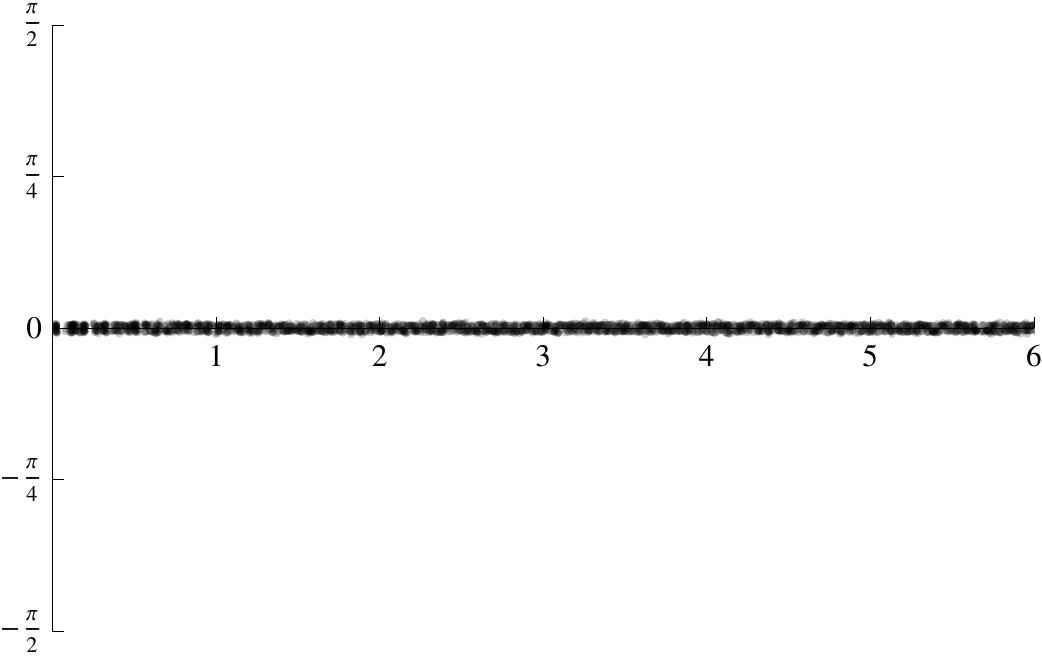}};
  \end{tikzpicture}}  \parbox{0.19\textwidth}{
    Parameters: $(\mu,\mu_c) = (1, 1)$\\
    $\rho_{1,1} = \infty$
  }
  \caption[Optimal classical relative rotation angles]{\label{fig:hat{beta}MC_C}    Optimal relative rotation angle $\hat{\beta}_{1,1}^{\rm MC}$ obtained from
    stochastic (Monte Carlo) minimization for the classical limit case
    $\mu = \mu_c = 1$. We observe that the relative rotation angle vanishes
    up to numerical accuracy, since the polar factor $\polar(F)$ is always
    optimal in perfect accordance with Grioli's theorem,
    see~\cite{Neff_Grioli14} and~\cite[Cor.~2.4,~p.~5]{Fischle:2015:OC2D}.
    More precisely, this corresponds to the prediction $\hat{\beta}^\pm_{1,1}(\sigma_1 + \sigma_2) = 0$. For multiple examples from the non-classical parameter
    range $\mu > \mu_c \geq 0$, see~\figref{fig:hat{beta}MC_NC} on page~\pageref{fig:hat{beta}MC_NC}.}
\end{figure}

\begin{figure}
  \parbox{0.8\textwidth}{
    \begin{tikzpicture}
      \node (Pic) at (0,0)
            {\includegraphics[width=11.0cm]{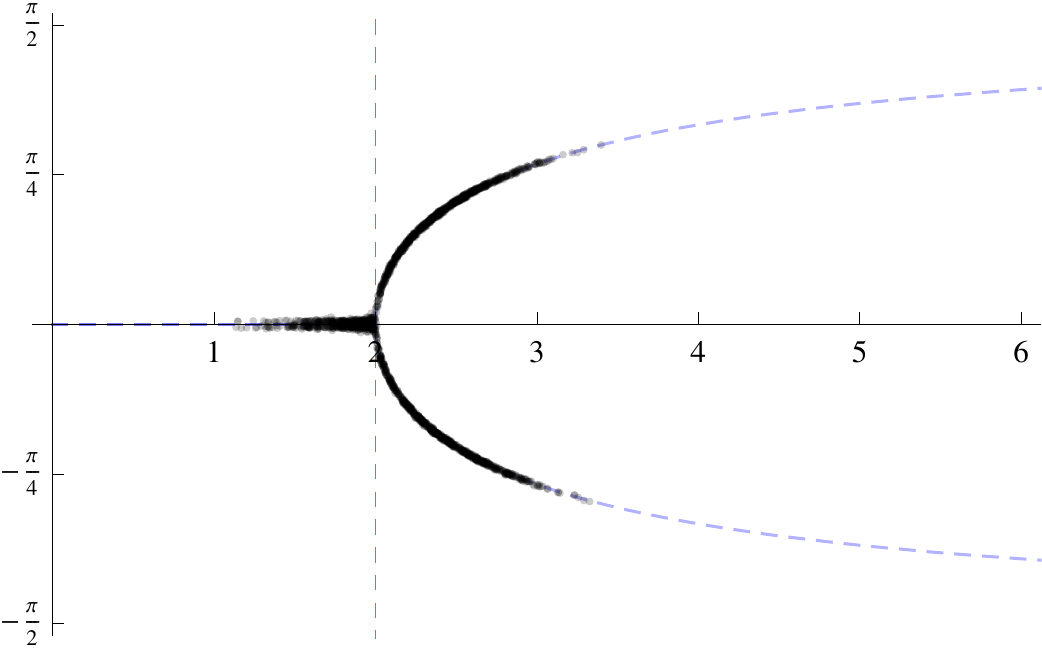}};
            \node (rho) at (-0.75, -2.5) {$\rho_{1,0} = 2$};
  \end{tikzpicture}}  \parbox{0.19\textwidth}{
    Parameters: $(\mu,\mu_c) = (1,0)$\\
    $\rho_{1,0} = 2$
  }
  \parbox{0.8\textwidth}{
    \begin{tikzpicture}
      \node (Pic) at (0,0)
            {\includegraphics[width=11.0cm]{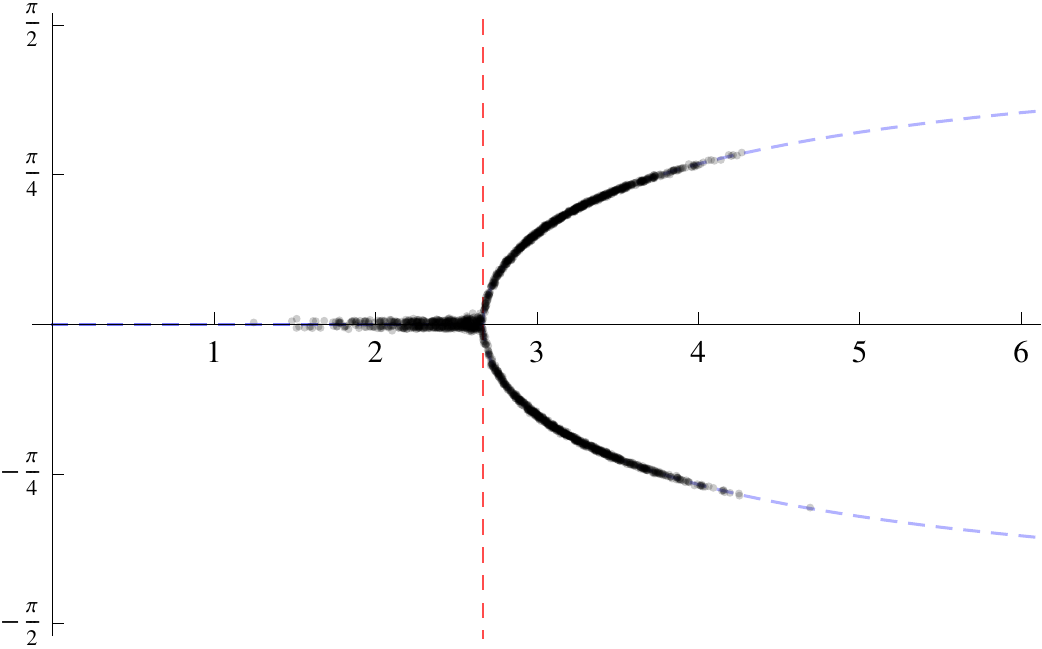}};
      \node (rho) at (0.50, -2.5) {$\rho_{1,\frac{1}{4}} = \frac{8}{3}$};
    \end{tikzpicture}}  \parbox{0.19\textwidth}{
    Parameters: $(\mu,\mu_c) = (1,\frac{1}{4})$\\
    $\rho_{1,\frac{1}{4}} = \frac{8}{3}$
  }
  \parbox{0.8\textwidth}{
    \begin{tikzpicture}
      \node (Pic) at (0,0)
            {\includegraphics[width=11.0cm]{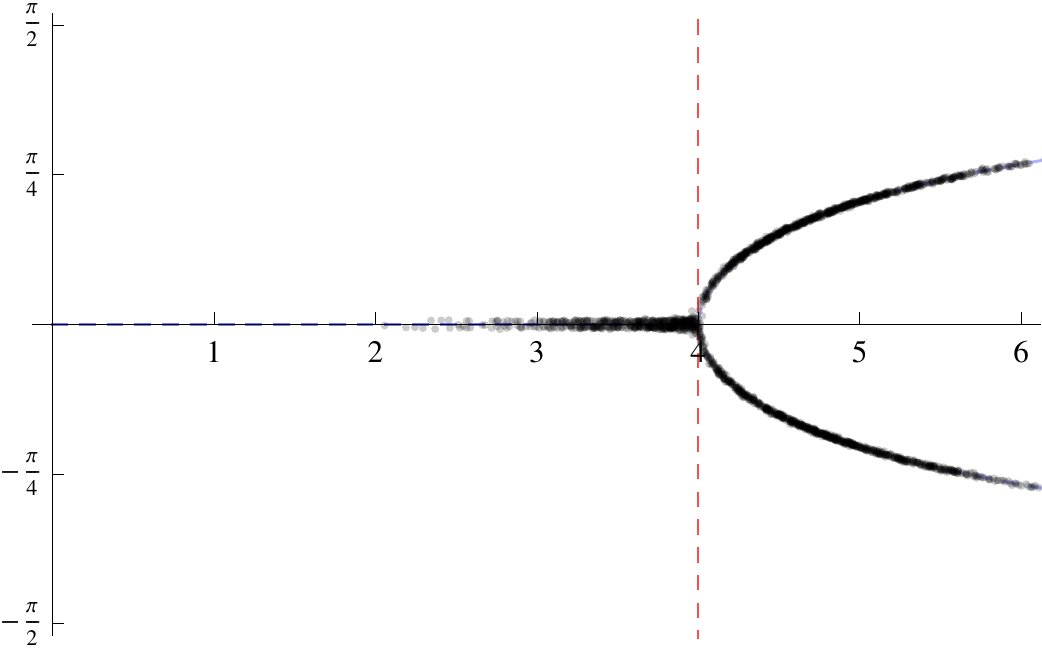}};
            \node (rho) at (2.75, -2.5) {$\rho_{1,\frac{1}{2}} = 4$};
  \end{tikzpicture}}  \parbox{0.19\textwidth}{
    Parameters: $(\mu,\mu_c) = (1, \frac{1}{2})$\\
    $\rho_{1,\frac{1}{2}} = 4$
  }
  \caption[Optimal non-classical relative rotation angles]{\label{fig:hat{beta}MC_NC}
    Optimal relative rotation angles $\hat{\beta}^{\rm MC}_{\mu,\mu_c}$ for
    multiple non-classical values $\mu > \mu_c \geq 0$. The angles are
    obtained by stochastic (Monte Carlo) minimization of $W_{\mu,\mu_c}(\mrot\,;F)$.
    The dashed blue curve shows the predicted value for
    $\hat{\beta}_{1,0}^\pm(\sigma_1 + \sigma_2)$ and
    the dashed red line marks the expected bifurcation point
    at $\rho_{\mu,\mu_c}$. For a direct comparison, we
    provide~\figref{fig:hat{beta}MC_C} on
    page~\pageref{fig:hat{beta}MC_C} which shows the classical
    limit case $(\mu,\mu_c) = (1,1)$; see also~\figref{fig:branchDiag}
    on page~\pageref{fig:branchDiag} for an illustration and a more
    precise description of the bifurcation behavior predicted by
    our proposed formula~$\rpolar^\pm_{\mu,\mu_c}(F)$.}
\end{figure}
 \countres
\section{Conclusion}
\label{sec:conclusion}
The reduced Cosserat shear--stretch energy $\wmmred$ for which we have
finally obtained an explicit form in~\theref{theo:wmm_explicit} admits
an interesting abstract interpretation in mechanics. In order to reveal
this, let us first assume that the microrotations $\mrot$ are spatially
decoupled. This is the case when the length scale parameter $L_{\rm c}$
in the full Cosserat model, i.e., including a curvature energy
contribution, is extremely small or zero.
Let us furthermore assume that $\det{F} = 1$, i.e., that the amount of volume
distortion is negligible and that a specimen $\Omega$ of this material is
subjected to a given deformation $\varphi:\Omega \to \varphi(\Omega)$ with
deformation gradient $F \eqdef \Dphi \in \GL^+(3)$. Then the total reduced
Cosserat shear--stretch energy obtained by integration of the local density
given by
\begin{equation}
  \int\nolimits_\Omega \wmmred(F)\dV
  \;\eqdef\;
  \int\nolimits_\Omega \min_{R\,\in\,\SO(3)} \wmm(R\,;F)\dV
\end{equation}
corresponds precisely to the total energetic response which is generated
if the field of microrotations $\mrot$ in the specimen \emph{instantaneously}
aligns itself with the field of locally optimal Cosserat rotations
$\rpolar^\pm_{\mu,\mu_c}(\Dphi)$. It is important to observe that the the
field of optimal Cosserat rotations is \emph{purely} induced by the
deformation mapping $\varphi$ on which it depends by \emph{local}
energy minimization and does not otherwise depend on boundary
conditions, exterior forces, etc.

A Cosserat material which conforms to the previous description
can be nicely embedded into a classical framework due to G. Capriz,
the description of which is one of many shimmering pearls to be
found in the impressive body of his work on micropolar materials, see,
e.g.,~\cite{Capriz:1977:FSC,Capriz:1989:CWM}, and it is with delight
that we summarize it in a brief
\begin{rem}[Continua with latent microstructure in the sense of Capriz]
  \label{rem:capriz}
  In his paper~\cite[p.49]{Capriz85}, G. Capriz introduces the notion
  of a continuum with {\bf latent microstructure} as follows:
  \begin{quote}
    ``I say that the microstructure is latent when, though its effects
    are felt in the balance equations, all relevant quantities can
    be expressed in terms of geometric and kinematic quantities
    pertaining to apparent placements.''
  \end{quote}
  Capriz then gives a more precise definition of the properties a
  latent microstructure needs to satisfy. We shall only repeat the
  first two:
  ``There is no inertia connected with the microstructure.'', and,
  ``There are no exterior body actions on the microstructure.''
  In other words, a latent microstructure is coupled with a
  deformation $\varphi$ in an instantaneous way.
\end{rem}
The reduced Cosserat shear-stretch energy $\wmmred(F)$ can be considered
as the energetic answer of a medium with a rotational microstructure that
instantaneously reorganizes its field of microrotations $R: \Omega \to \SO(3)$
as an energy-minimizing $\rpolar^\pm_{\mu,\mu_c}(F)$-field. This is an
example for a latent microstructure in the sense of Capriz.

From a more general perspective, a Cosserat continuum can also be
considered as a special case of a so-called micromorphic model, see,
e.g.,~\cite{Neff_Forest_jel05,Neff_micromorphic_rse_05}
and~\cite{Matteo:2015:MMD}.  Let us, as before, set the length scale
parameter $L_{\rm c}$ governing the
curvature contribution to zero. We then observe that such an approach
\emph{always} leads to an algebraic side condition, in our case it is
given by the equation~\eqref{eq:EL_SO}, which replaces the partial
differential equation for the micro-distortion field. This is another,
more general, example of a continuum with latent microstructure in the
sense of Capriz, compare, e.g.,~\cite{Capriz:2000:PM}, due to G. Capriz
himself and also~\cite{Matteo:2015:MMD}.

Note that in~\cite{Demirkoparan:2014:HIB} and~\cite{Demirkoparan:2015:SSIB},
the authors -- who are apparently unaware of this established and
relatively straightforward interpretation -- have, in our opinion,
recently reintroduced the framework of materials with latent
microstructure due to G. Capriz for such micromorphic continuum
models under the new name of a hyperelastic material
with ``internal balance'' and an ``internally balanced solid'',
respectively.

We now continue our conclusion with some thoughts on possible
generalizations of our present results.

\begin{rem}[On generalizations to higher dimensions $n \geq 4$]
  Our solution approach is quite specifically tailored to
  dimension $n = 3$ since it relies on the covering of $\SO(3)$
  by the unit quaternions $\S^3 \subset \H$. It seems reasonable
  to assume that the particularly simple geometry of $\S^3$
  lies at the root of the explicit solvability of the Euler--Lagrange
  equations. The so-called Sphere Theorem states that the only
  spheres that admit a connected compact Lie group structure
  are $\S^1$ and $\S^3$, see, e.g.,~\cite[p.289]{Hofmann:2006:SCG}.
  Thus, for $n > 3$, there is no hope at all to recover the particularly
  simple constellation we have quite successfully exploited here.
  Still, there is a generalization of the unit quaternions, namely
  the so-called spin groups $\mathrm{Spin}(n)$. These groups are
  two-fold covers
  of $\SO(n)$ and closely related to Clifford algebras, see,
  e.g.,~\cite{Lawson:1989:SG} and~\cite{Doran:2003:GAP}. In principle,
  such techniques might be appropriate for a generalization of our
  present results to higher dimensions, but they are out of reach
  for us.
\end{rem}

Although our exact solution approach does not generalize
to higher dimensions, it seems obvious that the
reduced~\probref{prob:relative_rhat} is a very good starting
point for the solution of~\probref{intro:prob_wmm} in
dimensions $n \geq 4$. Given this particular form, it seems
very likely that the minimizers in higher dimensions can also
be characterized in terms of the eigenvectors of $U = QDQ^T$
and the singular values $\sigma_i$, $1 \leq i \leq n$, of
$F \in \GL^+(n)$. Similar to the rather simplistic random
sampling strategy we have employed here,
it might certainly be worthwhile to carry out an initial
investigation based on a suitable Monte Carlo random sampling
approach which is suitable for higher dimensions, see,
e.g.,~\cite{Leon:2006:SMR}. On a related note, we have to
dampen expectations regarding extensions to anisotropic
formulations. These seem to be completely out of reach, since a
reduction to a formulation in singular values is then
impossible, see~\cite{Neff_Muench_transverse_cosserat08}
and~\cite{Pau:2012:BMMP}.

Another interesting question which is raised by our findings
is whether the maximal mean planar stretch
and strain ``measures'', i.e., $\ump(F)$ and $\smp(F)$, as defined
in~\deref{defi:mmpss} -- which appear to be such natural concepts
in our particular context -- are just artifacts of our derivation.
The same holds for the plane of maximal
strain $\mathrm{P}^{\rm mp}(F)$ introduced in~\deref{defi:pms}. Are there
real-world materials or material models which can be precisely or at least
approximately characterized by, e.g., slip in the plane of maximal
strain $\mathrm{P}^{\rm mp}(F)$? Currently, we are not aware of any such
materials or models.

In good hope that the presented mechanisms and computational strategies
will be at least helpful for the derivation of closed-form solutions
for~\probref{intro:prob_wmm} in dimensions $n \geq 3$ and that these
will match our proposed formula $\rpolar^\pm_{\mu,\mu_c}(F)$ presented
in~\coref{cor:rpolar_formula} for $n = 3$, we conclude
our present contribution with a last
\begin{rem}[Final remark]
  As regards suitable values of the Cosserat couple modulus $\mu_c \geq 0$,
  our development shows clearly that there are ultimately only 3 values of
  particular interest, namely
  \begin{equation*}
    \mu_c \;=\; 0\;,\quad\quad \mu_c \;=\; \mu\;,\quad\quad \text{and} \quad\quad \mu_c \;=\; +\infty\;.\tag*{$\blacksquare$}
  \end{equation*}
\end{rem}
 \countres

\addcontentsline{toc}{section}{References}
\bibliographystyle{plain}

{\footnotesize
  \setlength{\bibsep}{1.25pt}
  \bibliography{./fine-2015-cosserat3d}
}
\FloatBarrier

\newpage
\appendix
\section{Appendix (list of critical points)}
\label{sec:appendix}
{\footnotesize
We now detail our computer assisted strategy for the computation of the
critical points for the Lagrange function $\widehat{L}_{1,0}$. We recall
that the Euler--Lagrange equations simplify considerably if one
(or more) of the quaternion coefficients $w,x,y$ or $z$ vanishes.
This is reflected in the solution set computed by \mathematica.

We recall our shorthand notation for sums and differences of singular
values for parameters $F \in \GL^+(n)$ introduced in~\secref{sec:minimization}:
\begin{equation*}
  s_{ij} \eqdef \sigma_i + \sigma_j\quad\text{and}\quad
  d_{ij} \eqdef \sigma_i - \sigma_j\;,\quad i,j = 1,2,3\;.
\end{equation*}

\subsection{Computation of critical points of the Lagrange function $\widehat{L}_{1,0}$}
In order to solve the Euler--Lagrange equations in quaternion
representation~\eqref{eq:EL_quat}, we have used the \texttt{Reduce}
command in \mathematica. This command returns 130 critical points for
the Lagrange function $\widehat{L}_{1,0}$.\footnote{Note that the
  \texttt{Reduce} function in \mathematica~is supposedly \emph{guaranteed}
  to compute a \emph{complete} solution set. This is not
  necessarily the case for the \texttt{Solve}-command which generates
  only \emph{generic} solutions, see~\cite{Mathematica10:2015:GNG}.}
The solution set is enlarged by particular solutions satisfying certain
algebraic relations among the parameters $\sigma_i$, $i = 1,\ldots,3$.
In order to exclude these cases, we have made the following assumptions
\begin{equation}
  \sigma_i \neq 0\;,\quad
  s_{ij} \eqdef \sigma_i + \sigma_j \neq 0\;,
  \quad \text{and}\quad
  d_{ij} \eqdef \sigma_i - \sigma_j \neq 0\;,\quad i,j = 1,2,3,\; i \neq j\;.\label{eq:app:assumptions}
\end{equation}
These assumptions can be passed to the \texttt{Reduce} command in the form
of a so-called \texttt{AssumptionList}. This is a standard procedure. With these
assumptions \mathematica~successfully symbolically reduces the full solution
set comprised of 130 critical points to 32 critical points. Note that a strict
ordering of the parameters $\sigma_1 > \sigma_2 > \sigma_3 > 0$ implies that
all of these assumptions in~\eqref{eq:app:assumptions} are satisfied.
We shall refer to the 32 branches so obtained as the set of \emph{generic}
solutions, since they coincide with the output of the \texttt{Solve} command.
Due to this procedure, the possibility of multiple singular values of $F$
and the degeneracy of $\det{F} = 0$ are explicitly excluded.

\begin{rem}[Completeness of the solution set]
  We want to stress that we present manually refined results obtained
  via the computational algebra system (CAS) \mathematica. Currently,
  we cannot \emph{prove} that our solution set is complete, but this is
  quite probably the case as our extensive validation shows.
\end{rem}

It might be an interesting challenge for an expert in (computational)
algebraic geometry to prove that the presented list of \emph{generic}
solutions to the polynomial
system~\eqref{eq:EL_quat} is in fact \emph{complete}.\footnote{Note that an attempt
  to compute either a Gr\"obner basis or a primary decomposition for the
  Euler--Lagrange equations~\eqref{eq:EL_quat} using the
  CAS~\texttt{Singular}~\cite{Singular4} with competent assistance by
  R. Vollmert and L. Kastner~\cite{Vollmert:2015:OPD,Vollmert:2012:SDT}
  have failed (to finish within a day). The parameter-dependent
  polynomial system might be non-trivial to solve by computer
  algebra. It seems to us that \mathematica~automatically carries out
  the case distinctions $w = 0 \lor w \neq 0$, $x = 0 \lor x \neq 0$, etc.,
  since a computation of a Gr\"obner basis does not seem to terminate
  either. These case distinctions simplify the Euler--Lagrange equations,
  but we can just speculate here.}

Note that the computations can also be based on the alternative continuation
$\pi':\H \to \SO(3)$ which produces slightly different Euler--Lagrange
equations but yields the same full and generic solution sets.

In order to describe the set of generic solutions, we introduce two
auxiliary coefficient functions
\begin{equation}
  c_A(t) \eqdef \sqrt{\frac{1}{2} + \frac{1}{t}}\;,
  \quad\text{and}\quad
  c_B(t) \eqdef \sqrt{\frac{1}{2} - \frac{1}{t}}\;.\label{eq:app:cfunctions}
\end{equation}
It is important to note that the function $c_B(t)$ is only real-valued
for $t \in [2,\infty)$. The generic solutions are critical tuples
of \emph{complex} variables $(w,x,y,z,\lambda) \in \C^5$, i.e., the
solutions obtained by \mathematica~are complex-valued. It turns out
that the critical tuples are real-valued if and only if the radicands
in~\eqref{eq:app:cfunctions} are non-negative which allows to deduce
corresponding domains of definition for the critical branches.

\begin{rem}[Covering symmetry]
For any critical pair $(\hat{q},\lambda)$, the pair $(-\hat{q}, \lambda)$
is also a critical point for the Lagrange function $\widehat{L}_{1,0}$.
The associated reduced energy levels are identical.
\end{rem}
Due to this antipodal symmetry, it suffices to restrict the following
presentation to $16$ solutions. For their exposition, we have decided
to introduce three categories.

\subsubsection{Type I (of critical points)}

Characterization: Precisely one of the coefficients $w,x,y$ and $z$ of $\hat{q}(w,x,y,z)$ is non-zero. The solutions are independent of the parameters $\sigma_i$, $i = 1,2,3$.

These correspond to the following energy-minmizing \emph{relative} rotations:
\begin{equation}
\begin{matrix*}[l]
  \hat{q}_{{\mathrm{I}},1} &\eqdef &(1,\, 0,\, 0,\, 0) &\equiv &\left[0,\, \quad\text{any}\quad\right] &\equiv &\id_3                 &= &Q_1\;,\\
  \hat{q}_{{\mathrm{I}},2} &\eqdef &(0,\, 1,\, 0,\, 0) &\equiv &\left[\pi, (1,\,0,\,0)\right] &\equiv &\diag(\phantom{-}1,\,-1,\,-1)  &= &Q_2\;,\\
  \hat{q}_{{\mathrm{I}},3} &\eqdef &(0,\, 0,\, 1,\, 0) &\equiv &\left[\pi, (0,\,1,\,0)\right] &\equiv &\diag(-1,\,\phantom{-}1,\,-1)  &= &Q_3\;,\\
  \hat{q}_{{\mathrm{I}},4} &\eqdef &(0,\, 0,\, 0,\, 1) &\equiv &\left[\pi, (0,\,0,\,1)\right]  &\equiv &\diag(-1,\,-1,\,\phantom{-}1) &= &Q_4\;.
\end{matrix*}
\end{equation}
Note that the relative rotation represented by
$\hat{q}_{{\mathrm{I}},1} \equiv \id_3$
corresponds to the polar factor $\polar(F)$.

For the Lagrange multiplier $\lambda$, we obtain the associated critical values
\begin{equation}
  \lambda_{\mathrm{I},1} \eqdef 0\;,\quad
  \lambda_{\mathrm{I},2} \eqdef \sigma_2^2 + \sigma_3^2 + 4\, s_{23}\;,\quad
  \lambda_{\mathrm{I},3} \eqdef \sigma_3^2 + \sigma_1^2 + 4\, s_{31}\;,
  \quad\text{and}\quad
  \lambda_{\mathrm{I},4} \eqdef \sigma_1^2 + \sigma_2^2 + 4\, s_{12}\;.
\end{equation}

The realized energy levels of the lifted energy are given by:
\begin{equation}
  \begin{aligned}
  \widehat{W}_{1,0}^\sharp(\hat{q}_{{\mathrm{I}},1}\,;D) &= (\sigma_1 - 1)^2 + (\sigma_2 - 1)^2 + (\sigma_3 - 1)^2\;,\\
  \widehat{W}_{1,0}^\sharp(\hat{q}_{{\mathrm{I}},2}\,;D) &= (\sigma_1 + 1)^2 + (\sigma_2 + 1)^2 + (\sigma_3 - 1)^2\;,\\
  \widehat{W}_{1,0}^\sharp(\hat{q}_{{\mathrm{I}},3}\,;D) &= (\sigma_1 - 1)^2 + (\sigma_2 + 1)^2 + (\sigma_3 + 1)^2\;,\\
  \widehat{W}_{1,0}^\sharp(\hat{q}_{{\mathrm{I}},4}\,;D) &= (\sigma_1 + 1)^2 + (\sigma_2 - 1)^2 + (\sigma_3 + 1)^2\;.
  \end{aligned}
\end{equation}

The solutions of the first type are globally defined.

\subsubsection{Type II (of critical points)}

Characterization: Precisely two of the coefficients $x,y,z$ of 
                  $\hat{q}(w,x,y,z)$
                  vanish. Further, the solution only depends on 
                  pairwise \emph{sums} of the singular values 
                  $\sigma_i$, $i = 1,2,3$.

These correspond to the following energy-minmizing \emph{relative} rotations:
\begin{equation}
\hspace{-0.5cm}
\begin{matrix*}[l]
    \hat{q}^\pm_{{\mathrm{II}},1} \eqdef \left( c_A(s_{12}),\, 0,\, 0, \pm c_B(s_{12})\right) &\equiv&   \left[\pm\arccos(2/s_{12}), (0,\,0,\,1)\right]\;,\\[.5em]
    \hat{q}^\pm_{{\mathrm{II}},2} \eqdef \left( c_A(s_{23}),\, \pm c_B(s_{23}),\, 0,\, 0\right) &\equiv& \left[\pm\arccos(2/s_{23}), (1,\,0,\,0)\right]\;,\\[.5em]
    \hat{q}^\pm_{{\mathrm{II}},3} \eqdef \left( c_A(s_{31}),\, 0,\, \pm c_B(s_{31}),\, 0\right) &\equiv& \left[\pm\arccos(2/s_{31}), (0,\,1,\,0)\right]\;.
  \end{matrix*}
\end{equation}

For the Lagrange multiplier $\lambda$, we obtain the associated critical values
\begin{equation}
  \lambda^\pm_{\mathrm{II},1} \eqdef d_{12}^2 \left(\frac{s_{12} - 2}{s_{12}}\right)\;,\quad
  \lambda^\pm_{\mathrm{II},2} \eqdef d_{23}^2 \left(\frac{s_{23} - 2}{s_{23}}\right)\;,
  \quad\text{and}\quad
  \lambda^\pm_{\mathrm{II},3} \eqdef d_{31}^2 \left(\frac{s_{31} - 2}{s_{31}}\right)\;.
\end{equation}

The realized energy levels of the lifted energy are given by:
\begin{equation}
  \begin{aligned}
    \widehat{W}_{1,0}^\sharp(\hat{q}^\pm_{{\mathrm{II}},1}\,;D) = \frac{1}{2}(\sigma_1 - \sigma_2)^2 + (\sigma_3 - 1)^2\;,\\
    \widehat{W}_{1,0}^\sharp(\hat{q}^\pm_{{\mathrm{II}},2}\,;D) = \frac{1}{2}(\sigma_2 - \sigma_3)^2 + (\sigma_1 - 1)^2\;,\\
    \widehat{W}_{1,0}^\sharp(\hat{q}^\pm_{{\mathrm{II}},3}\,;D) = \frac{1}{2}(\sigma_3 - \sigma_1)^2 + (\sigma_2 - 1)^2\;.\\
  \end{aligned}
\end{equation}

The solutions of the second type are defined for
$s_{ij} \in [2,\infty)$, $i,j = 1,2,3$.

\subsubsection{Type III (of critical points)}

Characterization: The coefficient $w$ vanishes together with exactly one of the
                  remaining coefficients $x,y,z$ of $\hat{q}(w,x,y,z)$. 
                  Further, the solution only depends on pairwise
                  \emph{differences} of the singular values 
                  $\sigma_i$, $i = 1,2,3$.

These correspond to the following energy-minmizing \emph{relative} rotations:
\begin{equation}
\begin{matrix*}[l]
  \hat{q}^\pm_{{\mathrm{III}},1} \eqdef \left( 0 ,\, c_A(d_{12}) ,\, \pm c_B(d_{12}) ,\, 0 \right)                   &\equiv&  \left[\pi, \left(c_A(d_{12})^{-\frac{1}{2}},\, \pm c_B(d_{12})^{\frac{1}{2}},\,0\right)\right]\;,\\[.75em]
  \hat{q}^\pm_{{\mathrm{III}},2} \eqdef \left( 0 ,\, 0               ,\, c_A(d_{23})     ,\, \pm c_B(d_{23}) \right) &\equiv&  \left[\pi, \left(0,\,c_A(d_{23})^{-\frac{1}{2}},\, \pm c_B(d_{23})^{\frac{1}{2}}\right)\right]\;,\\[.75em]
  \hat{q}^\pm_{{\mathrm{III}},3} \eqdef \left( 0 ,\, c_A(-d_{31}) ,\, 0                   ,\, \pm c_B(-d_{31}) \right) &\equiv&  \left[\pi, \left(c_A(-d_{31})^{-\frac{1}{2}},\, 0,\, \pm c_B(-d_{31})^{\frac{1}{2}}\right)\right]\;.
\end{matrix*}
\end{equation}

For the Lagrange multiplier $\lambda$, we obtain the associated critical values
\begin{equation}
\begin{aligned}
  \lambda^\pm_{\mathrm{III},1} \quad&\eqdef\quad 4\, \sigma_3 (1 + \sigma_3) + (s_{23} - 2) s_{23}\;,\\
  \lambda^\pm_{\mathrm{III},2} \quad&\eqdef\quad 4\, \sigma_1 (1 + \sigma_1) + (s_{12} - 2) s_{12}\;,\\
  \lambda^\pm_{\mathrm{III},3} \quad&\eqdef\quad 4\, \sigma_2 (1 + \sigma_2) + (s_{31} - 2) s_{31}\;.
\end{aligned}
\end{equation}

The realized energy levels of the lifted energy are given by
\begin{equation}
\begin{aligned}
  \widehat{W}_{1,0}^\sharp(\hat{q}^\pm_{{\mathrm{III}},1}\,;D) \quad&=\quad\frac{1}{2}(\sigma_1 + \sigma_2)^2 + (\sigma_3 + 1)^2\;,\\
  \widehat{W}_{1,0}^\sharp(\hat{q}^\pm_{{\mathrm{III}},2}\,;D) \quad&=\quad\frac{1}{2}(\sigma_2 + \sigma_3)^2 + (\sigma_1 + 1)^2\;,\\
  \widehat{W}_{1,0}^\sharp(\hat{q}^\pm_{{\mathrm{III}},3}\,;D) \quad&=\quad\frac{1}{2}(\sigma_3 + \sigma_1)^2 + (\sigma_2 + 1)^2\;.
\end{aligned}
\end{equation}

The solutions of the third type are defined for $d_{ij} \in [2,\infty)$,
$i < j$, $i,j = 1,2,3$.

\begin{rem}[On second order conditions]
  We have succeeded to compute the signs of the principal minors of the
  so-called bordered Hessian
  $\mathrm{H}_{(\hat{q},\lambda)}\widehat{L}_{1,0}(\hat{q},\lambda\,;D)$
  evaluated at the previously presented $32$ critical points. This allows
  to carry out an analysis of the second order conditions
  for local constrained extrema based on an analysis of the sign changes
  of the principal minors of the bordered Hessian. The procedure
  in the constrained case is different but similar to the well-known
  procedure in the unconstrained case~\cite{Debreu:1952:DSQF,Hestenes:1975:OTF}.
\end{rem}
} \countres

\end{document}